\newcommand{\e}{{\rm e}}
\newcommand{\E}{{\mathbb E}}
\renewcommand{\P}{{\mathbb P}}
\newcommand{\Q}{{\mathbb Q}}
\newcommand{\R}{{\mathbb R}}
\renewcommand{\S}{{\mathbb S}}
\newcommand{\N}{{\mathbb N}}
\newcommand{\Acal}{{\mathcal A}}
\newcommand{\Ecal}{{\mathcal E}}
\newcommand{\Fcal}{{\mathcal F}}
\newcommand{\Gcal}{{\mathcal G}}
\newcommand{\id}{{\rm id}}
\newcommand{\Pol}{{\rm Pol}}
\DeclareMathOperator{\vspan}{span}
\DeclareMathOperator{\tr}{Tr}
\newtheorem{thm}{Theorem}[section]
\newtheorem{lem}[thm]{Lemma}
\newtheorem{cor}[thm]{Corollary}
\newtheorem{rem}[thm]{Remark}
\newtheorem{ex}[thm]{Example}
\newtheorem{defi}[thm]{Definition}
\numberwithin{equation}{section}
\begin{document}

\title{Polynomial Jump-Diffusion Models\footnote{We thank Agostino Capponi, Elise Gourier, Jan Kallsen, Sergio Pulido, and the participants at the 16th Winter School on Mathematical Finance in Lunteren, School and Workshop on Dynamical Models in Finance at EPFL, 61st World Statistics Congress of the International Statistics Institute, and CEAR/Huebner Summer Risk Institute at Georgia State University, the associate editor and two anonymous referees for their comments. The research leading to these results has received funding from the European Research Council under the European Union's Seventh Framework Programme (FP/2007-2013) / ERC Grant Agreement n.~307465-POLYTE.}}
\author{Damir Filipovi\'c\thanks{EPFL and Swiss Finance Institute, Extranef 218, 1015 Lausanne, Switzerland, email: damir.filipovic@epfl.ch} \quad\quad  Martin Larsson\thanks{ETH Zurich, Department of Mathematics, R\"amistrasse 101, 8092 Zurich, Switzerland, email: martin.larsson@math.ethz.ch}}

\date{July 21, 2019\\[2ex] forthcoming in Stochastic Systems}

\maketitle

\begin{abstract}
We develop a comprehensive mathematical framework for polynomial jump-diffusions in a semimartingale context, which nest affine jump-diffusions and have broad applications in finance. We show that the polynomial property is preserved under polynomial transformations and L\'evy time change. We present a generic method for option pricing based on moment expansions. As an application, we introduce a large class of novel financial asset pricing models with excess log returns that are conditional L\'evy based on polynomial jump-diffusions.
\\[2ex]
\noindent{\textbf {Keywords:} polynomial jump-diffusions, affine jump-diffusions, polynomial transformations, conditional L\'evy processes, L\'evy time change, asset pricing models, stochastic volatility}
\\[2ex]
\noindent{\textbf {MSC2010 classifications:} 60H30, 91G20 }
\\[2ex]
\noindent{\textbf {JEL classifications:} G12, G13}
\end{abstract}

\section{Introduction}

Polynomial jump-diffusions have broad applications in finance. A jump-diffusion is polynomial if its extended generator maps any polynomial to a polynomial of equal or lower degree. As a consequence, its conditional moments can be computed in closed form. This property renders polynomial jump-diffusions computationally tractable and perfectly suited for financial asset pricing models. Many commonly occurring jump-diffusions are polynomial, for example Ornstein--Uhlenbeck processes, square-root diffusions, Jacobi or Wright--Fisher diffusions, L\'evy processes and geometric L\'evy processes, as well as multi-dimensional analogues and combinations of these processes.

In this paper, we develop a comprehensive mathematical framework for polynomial jump-diffusions. We characterize the polynomial property in terms of the coefficients of the extended generator, and we establish the moment formula. We show that the polynomial property of a jump-diffusion is preserved under polynomial transformations and L\'evy time change. These transformations allow us to easily construct polynomial jump-diffusions from simple building blocks. They also allow us to efficiently specify non-linearities in financial models, which renders them flexible in capturing many of the empirical features of financial time series.

We also revisit affine jump-diffusions, which have been widely used in financial asset pricing models for the last two decades. We provide a relaxed definition of an affine jump-diffusion in terms of the pointwise action of its extended generator on exponential-affine functions, and we establish the affine transform formula. We show that, modulo integrability conditions on the jumps, affine jump-diffusions are polynomial. The converse does not hold, so that polynomial jump-diffusions truly extend the class of affine jump-diffusions. We find that the affine property of a jump-diffusion is not invariant under polynomial transformations or L\'evy time change in general, which may explain why these transformations have not been widely applied in affine financial models.

In contrast to earlier work on polynomial and affine processes, we do not require the Markov property. Instead, we work in a special semimartingale framework, which is more flexible and amenable to practical applications. Non-Markovian polynomial jump-diffusions exist and can be constructed using the counterexamples of \citet[Section~3]{kal_kru_16}, and most of our results apply to them. Indeed, Markovianity is only assumed for the invariance of the polynomial property under L\'evy time change, all other arguments rely on It\^o calculus rather than functional analysis.

We present a generic method for option pricing in polynomial jump-diffusion models. This method builds on the expansion of the likelihood ratio function with respect to an orthonormal basis of polynomials in some conveniently weighted $L^2$ space. As an application, we introduce a large class of novel financial asset pricing models that are based on polynomial jump-diffusions, and which are beyond the affine class. In these models, the excess log return processes are conditional L\'evy in the sense of \cite{cin_03}. This extends several well-known univariate diffusion volatility models, such as the Jacobi model \citep{ack_fil_pul_16}, the extended Stein--Stein model \citep{ste_ste_91,sch_zhu_99}, and the extended Hull--White model \citep{hul_whi_87,lio_mus_07}.

Due to their inherent tractability, polynomial jump-diffusions have played a prominent and growing role in a wide range of applications in finance. Examples include interest rates \citep{zho_03,del_shi:02,fil_lar_tro_17}, stochastic volatility \citep{gou_jas_06,ack_fil_pul_16}, exchange rates \citep{lar_sor_07}, life insurance liabilities \citep{bia_zha_16}, variance swaps \citep{fil_gou_man_16}, credit risk \citep{ack_fil_16}, dividend futures \citep{fil_wil_17}, commodities and electricity \citep{fil_lar_war_17}, stochastic portfolio theory \citep{cuc_17}, and economic equilibrium \citep{gua_won_18}. Properties of polynomial jump-diffusions can also be brought to bear on computational and statistical methods, such as generalized method of moments and martingale estimating functions \citep{for_sor_08}, variance reduction \citep{Cuchiero/etal:2012}, cubature \citep{fil_lar_pul_16}, and quantization \citep{cal_fio_pal_17}. This recent body of research primarily relies on polynomial jump-diffusions that are not necessarily affine. Focusing on the affine case, one finds an even richer history in the finance literature, where affine jump-diffusions have long been used to address a large number of problems in asset pricing, optimal investment, equilibrium analysis, etc.

In addition to their usefulness in applications, polynomial jump-diffusions are of theoretical interest due their rich mathematical structure. Work in this direction has primarily focused on the diffusion case, going back to \cite{Wong:1964}. \cite{Mazet:1997} and \cite{for_sor_08} characterize one-dimensional polynomial diffusions. \cite{Bakry/Orevkov/Zani:2014} describe those compact state spaces with nonempty interior that can support two-dimensional reversible polynomial diffusions. \cite{lar_pul_17} study polynomial diffusions on compact quadric sets, and relate their probabilistic properties to sum of squares representations of certain positive biquadratic forms. Existence and uniqueness for a large class of polynomial diffusions on semi-algebraic state spaces is developed by \cite{fil_lar_16}. Probability measure valued polynomial diffusions are studied by \citet{cuc_lar_sva_19}. The theoretical literature in the jump-diffusion case is less abundant. The first systematic accounts are \cite{Cuchiero:2011} and \cite{Cuchiero/etal:2012} in a Markovian framework. \cite{Gallardo/Yor:2006} study Dunkl processes, which are polynomial jump-diffusions; see also \cite{Dunkl:1992}. Affine jump-diffusions with compact state space are analyzed by \cite{kru_lar_18}. A large class of specifications of simplex-valued polynomial jump-diffusions is developed by \cite{cuc_lar_sva_18}. Our paper adds to this applied and theoretical literature by providing a unifying framework for polynomial and affine jump-diffusions, enabling new approaches to financial modeling.

We end this introduction with some conventions that will be used throughout this paper. We fix a stochastic basis $(\Omega,\Fcal,\Fcal_t,\P)$. Equalities between random variables are understood to hold almost surely. Following \citet[(I.1.1)]{Jacod/Shiryaev:2003} we use the notion of generalized conditional expectation, which is defined for any $\sigma$-field $\Fcal'\subset\Fcal$ and all random variables $X$ by
\[ \E[X\mid\Fcal'] =\begin{cases}
  \E[X^+\mid\Fcal'] -\E[X^-\mid\Fcal'] ,&\text{on $\{ \E[|X|\mid\Fcal']<\infty\}$,}\\ +\infty,&\text{elsewhere.}
\end{cases}\]
For a multi-index $\bm\alpha=(\alpha_1,\ldots,\alpha_d)\in\N^d_0$ we write $|\bm\alpha|=\alpha_1+\cdots+\alpha_d$ and $x^{\bm\alpha}=x_1^{\alpha_1}\cdots x_d^{\alpha_d}$ for $x\in\R^d$. A {\em polynomial}~$p$ on $\R^d$ is a function $p:\R^d\to\R$ of the form $p(x)=\sum_{\bm\alpha} c_{\bm\alpha} x^{\bm\alpha} $, where the sum runs over all $\bm\alpha\in\N^d_0$ and only finitely many of the coefficients $c_{\bm\alpha}$ are nonzero. Such a representation is unique. The {\em degree} of $p$ is the number $\deg p=\max\{ |\bm\alpha| : c_{\bm\alpha} \ne 0\}$. We let $\Pol(\R^d)$ denote the ring of all polynomials on $\R^d$, and $\Pol_n(\R^d)$ the subspace consisting of polynomials of degree at most~$n$. Let $E$ be a subset of~$\R^d$. A {\em polynomial on~$E$} is the restriction $p=q|_E$ to~$E$ of a polynomial $q\in\Pol(\R^d)$. Its degree is $\deg p=\min\{\deg q : p=q|_E,\, q\in\Pol(\R^d)\}$. We let $\Pol(E)$ denote the algebra of polynomials on~$E$,  and $\Pol_n(E)$ the subspace of polynomials on $E$ of degree at most~$n$. Both $\Pol_n(\R^d)$ and $\Pol_n(E)$ are finite-dimensional real vector spaces, but if there are nontrivial polynomials that vanish on~$E$ their dimensions will be different. If $E$ has a nonempty interior then $\Pol_n(\R^d)$ and $\Pol_n(E)$ can be identified. For simplicity of notation, we write $f\in\Pol_n(E)$ for any function $f=(f_1,\dots,f_k):\R^d\to\R^k$, with $k\in\N$, such that the restrictions ${f_i}|_E\in\Pol_n(E)$ for all $i=1,\dots,k$. The set of real symmetric $d\times d$ matrices is denoted $\S^d$, and the subset of positive semidefinite matrices is denoted~$\S^d_+$. For any map $\varphi:E\to F$, we denote the pullback operator by $\varphi^*$, which maps a function $f$ on $F$ to a function $\varphi^\ast f$ on $E$ by
\begin{equation} \label{eq:pullbackNEW}
\varphi^*f(x) = f(\varphi(x)).
\end{equation}

The remainder of the paper is as follows. In Section~\ref{secPJDs} we define polynomial jump-diffusions, give a characterization in terms of the coefficients of the extended generator, and establish the moment formula. In Section~\ref{secAJDs} we revisit affine jump-diffusions. In Section~\ref{secpolytrans} we study the invariance of the polynomial property under polynomial transformations. In Section~\ref{seccondLevy} we introduce polynomial conditional L\'evy processes. In Section~\ref{sectimechange} we show that the polynomial property is invariant under L\'evy time change. In Section~\ref{secpolyexp} we present a generic method for option pricing in polynomial jump-diffusion models. In Section~\ref{secPAPM} we introduce a large class of polynomial asset pricing models building on polynomial conditional L\'evy processes. In Section~\ref{secLVM} we focus on linear volatility models. Section~\ref{secconc} concludes. The appendix contains additional results and all proofs.

\section{Polynomial Jump-Diffusions} \label{secPJDs}

We consider a jump-diffusion operator on $\R^d$ of the form
\begin{equation} \label{eq:G}
\Gcal f(x) = \frac{1}{2}\tr(a(x)\nabla^2 f(x)) + b(x)^\top \nabla f(x) + \int_{\R^d} \left( f(x+\xi) - f(x) - \xi^\top \nabla f(x)\right)\nu(x,d\xi)
\end{equation}
for some measurable maps $a:\R^d\to\S^d_+$ and $b:\R^d\to\R^d$, and a transition kernel $\nu(x,d\xi)$ from $\R^d$ into $\R^d$ satisfying $\nu(x,\{0\})=0$ and $\int_{\R^d}\|\xi\|\wedge\|\xi\|^2\nu(x,d\xi)<\infty$ for all $x\in\R^d$.

We let $X_t$ be an $E$-valued jump-diffusion with extended generator $\Gcal$, for some state space $E\subseteq\R^d$. That is, $X_t$ is an $E$-valued special semimartingale and
\begin{equation} \label{eq:f-Gf loc mg_NEW}
f(X_t) - f(X_0) - \int_0^t \Gcal f(X_s)\,ds \quad\text{is a local martingale}
\end{equation}
for any bounded $C^2$ function $f(x)$ on $\R^d$.\footnote{The special semimartingale property of $X_t$ implies that
\[\int_0^t \left( \|a(X_s)\| + \|b(X_s)\| + \int_{\R^d} \|\xi\|^2\wedge\|\xi\|\,\nu(X_s,d\xi) \right) ds < \infty\]
and its semimartingale characteristics $(B,C,\nu)$ associated to the identity truncation function $h(\xi)=\xi$ are given by $B_t=\int_0^t b(X_s)ds$, $C_t=\int_0^t a(X_s)ds$, and $\nu(X_{t-},d\xi)dt$, see \citet[Proposition~II.2.29]{Jacod/Shiryaev:2003}. In particular, $X_t-B_t$ is a local martingale.}

We say that $\Gcal$ is {\em well-defined on $\Pol(E)$} if
\begin{gather}
\text{$  \int_{\R^d} \|\xi\|^{n}\,\nu(x,d\xi)<\infty$ for all $x\in E$ and all $n\ge 2$, and}\label{DefPol1} \\
\text{$\Gcal f(x)=0$ on $E$ for any $f\in \Pol(\R^d)$ with $f(x)=0$ on $E$.}\label{DefPol2}
\end{gather}
This property ensures that $\Gcal$ is well-defined as a linear operator from $\Pol(E)$ into the space of measurable functions on $E$.\footnote{Property~\eqref{DefPol2} always holds when $E$ has a nonempty interior. An example of a diffusion operator that is not well-defined on $\Pol(E)$ can be found in \citet[Section~2]{fil_lar_16}.}

\begin{defi} \label{D:PPg}
The operator $\Gcal$ is called {\em polynomial on $E$} if it is well-defined on $\Pol(E)$ and maps $\Pol_n(E)$ to itself for each $n\in\N$. In this case, we call $X_t$ a {\em polynomial jump-diffusion on $E$}.
\end{defi}

Thus, a polynomial jump-diffusion on $E$ is an $E$-valued jump-diffusion whose jump measure admits moments of all orders, with an extended generator that maps polynomials on $E$ to polynomials on $E$ of lower or equal degree. Polynomial jump-diffusions admit closed form conditional moments and have broad applications in finance, as we shall see below.

The polynomial property of $\Gcal$ on $E$ has a simple characterisation in terms of its coefficients $a(x)$, $b(x)$, and $\nu(x,d\xi)$.

\begin{lem} \label{LemPPchar}
Assume $\Gcal$ is well-defined on $\Pol(E)$. Then the following are equivalent:
\begin{enumerate}
\item\label{LemPPchar1} $\Gcal$ is polynomial on $E$;
\item\label{LemPPchar2} $a(x)$, $b(x)$, and $\nu(x,d\xi)$ in~\eqref{eq:G} satisfy
\begin{align*}
b  &\in \Pol_1(E),   \\
a  + \int_{\R^d} \xi\xi^\top \nu(\cdot ,d\xi) &\in \Pol_2(E),  \\
\int_{\R^d} \xi^{\bm \alpha}  \nu(\cdot, d\xi) &\in \Pol_{|{\bm \alpha}|}(E),
\end{align*}
for all $|{\bm\alpha}| \ge 3$.
\end{enumerate}
In this case, the polynomials on $E$ listed in property~{\ref{LemPPchar2}} are uniquely determined by the action of $\Gcal$ on $\Pol(E)$. Moreover, $a(x)$, $b(x)$, and $\int_{\R^d} \xi^{\bm \alpha}  \nu(x, d\xi)$ are locally bounded in $x$ on $E$ for all $|{\bm\alpha}| \ge 2$.
\end{lem}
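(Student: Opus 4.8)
The plan is to prove the equivalence by testing the polynomial property on the monomial basis, since $\Gcal$ is linear and well-defined on $\Pol(E)$ by assumption. The key observation is that the action of $\Gcal$ on a monomial $x^{\bm\alpha}$ can be written out explicitly from \eqref{eq:G}: the second-order term contributes a combination of entries of $a(x)$ times lower-degree monomials, the first-order term contributes entries of $b(x)$ times monomials of degree $|\bm\alpha|-1$, and the integral term, after expanding $f(x+\xi)-f(x)-\xi^\top\nabla f(x)$ in a finite multinomial sum, contributes terms of the form $\binom{\bm\alpha}{\bm\beta}x^{\bm\alpha-\bm\beta}\int_{\R^d}\xi^{\bm\beta}\nu(x,d\xi)$ for multi-indices $\bm\beta\le\bm\alpha$ with $|\bm\beta|\ge2$. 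The integrability assumption \eqref{DefPol1} guarantees all these integrals are finite on $E$, so the expansion is legitimate.

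To prove \ref{LemPPchar2}$\Rightarrow$\ref{LemPPchar1}, I would substitute the stated polynomial conditions into this explicit expansion and count degrees. For a monomial of degree $n=|\bm\alpha|$, the drift term $b^\top\nabla x^{\bm\alpha}$ has degree at most $\deg b + (n-1)\le n$; the leading $\bm\beta$-contributions of order two, namely those involving $a+\int\xi\xi^\top\nu$, have degree at most $2+(n-2)=n$; and for each $|\bm\beta|\ge3$ the term $\int\xi^{\bm\beta}\nu$ has degree at most $|\bm\beta|$, so the product has degree at most $|\bm\beta|+(n-|\bm\beta|)=n$. Hence $\Gcal$ maps $\Pol_n(E)$ into itself. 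For the converse \ref{LemPPchar1}$\Rightarrow$\ref{LemPPchar2}, I would apply $\Gcal$ to the coordinate monomials of low degree and read off the coefficients: applying $\Gcal$ to $x_i$ yields $b_i$, which must lie in $\Pol_1(E)$; applying $\Gcal$ to $x_ix_j$ and subtracting the already-identified first-order contributions isolates $(a+\int\xi\xi^\top\nu)_{ij}$, forcing it into $\Pol_2(E)$; and proceeding inductively to higher-degree monomials isolates each $\int\xi^{\bm\alpha}\nu$ and forces it into $\Pol_{|\bm\alpha|}(E)$.

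The uniqueness claim then follows because these identifications express each listed polynomial directly in terms of $\Gcal$ acting on explicit polynomials, so two representations agreeing as operators on $\Pol(E)$ must agree on these functions. For local boundedness, I would note that each of $b$, $a+\int\xi\xi^\top\nu$, and $\int\xi^{\bm\alpha}\nu$ for $|\bm\alpha|\ge2$ agrees on $E$ with a genuine polynomial on $\R^d$, hence is locally bounded on $E$; since $a(x)\in\S^d_+$ and $\int\xi\xi^\top\nu(x,d\xi)\in\S^d_+$ are both positive semidefinite, their sum dominates each summand in the positive semidefinite order, so local boundedness of the sum yields local boundedness of $a$ and of $\int\xi\xi^\top\nu$ separately, and similarly each individual second-moment integral is controlled.

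The main obstacle is the converse direction, specifically the bookkeeping needed to disentangle the diffusion and jump contributions. The difficulty is that the second-order diffusion term $\frac12\tr(a\nabla^2 x^{\bm\alpha})$ and the order-two part of the jump expansion both produce degree-$n$ output from a degree-$n$ monomial, and they enter only through the \emph{sum} $a+\int\xi\xi^\top\nu$ rather than separately; this is precisely why the lemma identifies only the combined quantity. Care is required to verify that, after applying $\Gcal$ to $x^{\bm\alpha}$ and subtracting all contributions already known from lower-degree monomials, what remains is exactly a nonzero scalar multiple of the desired moment $\int\xi^{\bm\alpha}\nu$ plus lower-order terms, so that membership in the asserted polynomial space can be extracted cleanly by induction on $|\bm\alpha|$.
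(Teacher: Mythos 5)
Your proposal is correct and takes essentially the same approach as the paper: the implication \ref{LemPPchar2}$\Rightarrow$\ref{LemPPchar1} by degree counting on the multinomial expansion, the converse by applying $\Gcal$ to all monomials and inductively isolating the coefficients (which also gives uniqueness), and local boundedness of $a$ and the second moments by splitting the polynomial sum $a+\int_{\R^d}\xi\xi^\top\nu(\cdot,d\xi)$ using positive semidefiniteness. The paper's version of that last step is just the diagonal case of your psd-order argument: $a_{ii}\ge 0$ and $\int_{\R^d}\xi_i^2\,\nu(\cdot,d\xi)\ge 0$ sum to a polynomial on $E$, hence each is locally bounded.
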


\begin{proof}
The implication $\ref{LemPPchar2}\Rightarrow\ref{LemPPchar1}$ is immediate, and the implication $\ref{LemPPchar1}\Rightarrow\ref{LemPPchar2}$ follows by applying $\Gcal$ to all monomials. In particular, the polynomials on $E$ listed in property~{\ref{LemPPchar2}} are uniquely determined by the action of $\Gcal$ on $\Pol(E)$. It remains to show that $a(x)$ and $\int_{\R^d} \|\xi\|^2  \nu(x, d\xi)$ are locally bounded in $x$ on $E$. But this follows because $a(x)\in\S^d_+$ and $a_{ii} +\int_{\R^d} \xi_i^2  \nu(\cdot, d\xi)\in\Pol_2(E)$, and hence $a_{ii}(x)\ge 0$ and $\int_{\R^d} \xi_i^2  \nu(x, d\xi)\ge 0$, are locally bounded in $x$ on $E$.
\end{proof}

While the moments of $\nu(x,d\xi)$ of order three and beyond are determined by the action of $\Gcal$ on $\Pol(E)$, the measure $\nu(x,d\xi)$ itself need not be uniquely determined. The following example illustrates this.

\begin{ex}
Consider the compensated compound Poisson process $X_t$ with unit intensity and lognormal jump distribution, whose extended generator is given by $\Gcal f(x)=\int_\R (f(x+\xi)-f(x)-\xi f'(x))g(\xi)d\xi$, where $g(\xi)$ is the standard lognormal density. It is well-known that the lognormal distribution is indeterminate in the sense of the moment problem, so that there exists a density $h(\xi)$, different from $g(\xi)$, but with the same moments. The extended generator $\widetilde\Gcal f(x)=\int_\R (f(x+\xi)-f(x)-\xi f'(x))h(\xi)d\xi$ then coincides with $\Gcal$ on $\Pol(E)$.
\end{ex}

\begin{rem}
Non-Markovian polynomial jump-diffusions exist and can be constructed using the counterexamples of \citet[Section~3]{kal_kru_16}. This is why we take a semimartingale approach to polynomial jump-diffusions, rather than a Markovian approach as in \cite{Cuchiero/etal:2012}. The semimartingale approach has several advantages. First, we only have to assume existence, but do not require uniqueness, of the local martingale problem~\eqref{eq:f-Gf loc mg_NEW}. Indeed, uniqueness is tantamount to Markovianity of $X_t$. Furthermore, it is straightforward to develop a time-inhomogeneous version of our framework, where $a(t,x)$, $b(t,x)$, $\nu(t,x,d\xi)$ depend explicitly and not necessarily polynomially on $t$. Finally, our framework immediately accommodates polynomial jump-diffusions on finite time intervals. Indeed, we did not specify the time set for the stochastic basis and semimartingales above, so rather than $[0,\infty)$, one can simply understand the time set to be $[0,T]$ for some finite time horizon $T$. This flexibility is useful in asset pricing applications, and we make use of it in Section~\ref{S_measure_change}.
\end{rem}

To make the polynomial property of $\Gcal$ on $E$ operational, we now introduce a coordinate system on $\Pol_n(E)$, for a generic $n\in\N$, that we will use throughout the paper. We define $N=\dim\Pol_n(E)-1$ and note that $1+N\le {n+d \choose n}$, with equality if $E$ has nonempty interior. We fix polynomials $h_1(x),\dots,h_N(x)$ on $\R^d$ such that $\{1, h_1,\dots,h_N\}$ forms a basis of $\Pol_n(E)$, and define the vector valued function
\begin{equation}\label{CS1}
 H:\R^d\to\R^N,\quad H(x) = (h_1(x), \ldots, h_N(x))^\top.
\end{equation}
For each $p\in\Pol_n(E)$ we denote its coordinate vector by $\vec p\in\R^{1+N}$, so that
\begin{equation}\label{CS2}
p(x) = (1,H(x)^\top) \vec p \quad\text{on $E$.}
\end{equation}
The $(1+N)\times (1+N)$ matrix representation $G$ of $\Gcal$ restricted to $\Pol_n(E)$ is determined by $\Gcal (1,H^\top)(x)= (1,H(x)^\top) G$ on $E$, so that
\begin{equation}\label{CS3}
\Gcal p(x) = (1,H(x)^\top) G\vec p\quad\text{on $E$.}
\end{equation}

We now show that $\E[p(X_T) \mid \Fcal_t]$ is a polynomial function of~$X_t$.

\begin{thm}\label{T:moments}
Assume $\Gcal$ is polynomial on $E$. Then for any $p\in\Pol_n(E)$ the moment formula holds,
\[
\E[p(X_T) \mid \Fcal_t] = (1,H(X_t)^\top) \e^{(T-t)G}\,\vec p,\quad \text{for $t\le T$.}
\]
\end{thm}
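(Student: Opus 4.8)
The plan is to exhibit a process with terminal value $p(X_T)$ that is a genuine martingale on $[0,T]$, and then read off the claim by conditioning. Concretely, fix $T$ and $p\in\Pol_n(E)$, and set
\[
M_t = (1,H(X_t)^\top)\,\e^{(T-t)G}\,\vec p, \qquad 0\le t\le T.
\]
Since $X_T\in E$, the coordinate identity \eqref{CS2} gives $M_T=(1,H(X_T)^\top)\vec p=p(X_T)$. If I can show that $M$ is a true martingale on $[0,T]$, then $M_t=\E[M_T\mid\Fcal_t]=\E[p(X_T)\mid\Fcal_t]$, which is exactly the asserted formula.

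First I would extend the local martingale property \eqref{eq:f-Gf loc mg_NEW} from bounded $C^2$ functions to arbitrary polynomials. Applying It\^o's formula to the special semimartingale $X$, whose characteristics $(B,C,\nu)$ are recorded in the footnote to \eqref{eq:f-Gf loc mg_NEW}, and compensating the jump part against $\nu(X_{s-},d\xi)\,ds$, one obtains for any $q\in\Pol(\R^d)$ that $q(X_t)-q(X_0)-\int_0^t\Gcal q(X_s)\,ds$ is a local martingale. The compensation is legitimate because the integrand $q(x+\xi)-q(x)-\xi^\top\nabla q(x)$ is $O(\|\xi\|^2)$ near the origin and $O(\|\xi\|^{\deg q})$ at infinity, so that \eqref{DefPol1} together with the local boundedness from Lemma~\ref{LemPPchar} makes $\int\bigl(q(x+\xi)-q(x)-\xi^\top\nabla q(x)\bigr)\nu(x,d\xi)$ finite and locally bounded on $E$.

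Applying this to each coordinate of the row vector $Z_t=(1,H(X_t)^\top)$ and using \eqref{CS3}, I get the linear dynamics $dZ_t=Z_tG\,dt+dL_t$ for some local martingale $L$. The matrix exponential $\e^{(T-t)G}$ is deterministic with $\frac{d}{dt}\e^{(T-t)G}=-G\,\e^{(T-t)G}$, so the product rule yields
\[
dM_t=(dZ_t)\,\e^{(T-t)G}\vec p-Z_tG\,\e^{(T-t)G}\vec p\,dt=(dL_t)\,\e^{(T-t)G}\vec p,
\]
the drift terms cancelling exactly. Hence $M$ is a local martingale on $[0,T]$.

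The main obstacle is upgrading $M$ from a local to a true martingale, since the coefficients are only locally bounded and global integrability is not automatic. Here I would exploit that $\Gcal$ preserves degrees. Taking $g(x)=(1+\|x\|^2)^n\in\Pol_{2n}(E)$, the previous step applied to $g$ together with $\Gcal g\in\Pol_{2n}(E)$ gives the bound $|\Gcal g|\le C(1+g)$ on $E$; stopping at a localizing sequence $\tau_m$, taking expectations, and invoking Gronwall's inequality yields $\sup_m\E[g(X_{t\wedge\tau_m})]<\infty$ uniformly on $[0,T]$, hence $\E[\|X_{t\wedge\tau_m}\|^{2n}]$ is bounded uniformly in $m$. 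Since $|M_{t\wedge\tau_m}|\le C(1+\|X_{t\wedge\tau_m}\|^n)$ with $\e^{(T-\cdot)G}\vec p$ bounded on $[0,T]$, this controls $\sup_m\E[M_{t\wedge\tau_m}^2]$, so $\{M_{t\wedge\tau_m}\}_m$ is uniformly integrable. Therefore $M_{t\wedge\tau_m}\to M_t$ in $L^1$, $M$ is a true martingale, and the conditioning from the first paragraph completes the proof.
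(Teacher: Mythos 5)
Your plan --- drift cancellation making $M_t=(1,H(X_t)^\top)\e^{(T-t)G}\vec p$ a local martingale, followed by an upgrade to a true martingale --- is in spirit the paper's proof, and the first two steps (extension of \eqref{eq:f-Gf loc mg_NEW} to polynomials, and the cancellation $dM_t=(dL_t)\e^{(T-t)G}\vec p$) are sound. The gap is in the upgrade. Your Gronwall step takes \emph{unconditional} expectations: the estimate $\E[g(X_{t\wedge\tau_m})]\le \E[g(X_0)]\e^{Ct}$ is vacuous unless $\E[g(X_0)]=\E[(1+\|X_0\|^2)^n]<\infty$, and no such integrability of $X_0$ is assumed anywhere. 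This is not a technicality the theorem glosses over: the statement is deliberately formulated with the generalized conditional expectation defined in the introduction, and Example~\ref{exGARCH} (the stationary GARCH diffusion, $\E[X_t^2]=+\infty$ for all $t$) shows that for $p(x)=x^2$ your process $M_t$ is not even integrable, so it cannot be a true martingale and the uniform integrability of $\{M_{t\wedge\tau_m}\}_m$ fails. As written, the passage from local to true martingale is false in the stated generality.

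The paper's proof repairs exactly this point by localizing at the \emph{initial condition} rather than only along stopping times. The moment bound (Lemma~\ref{L:moment bound}) is proved conditionally on $\Fcal_0$, namely $\E[1+\|X_t\|^{2k}\mid\Fcal_0]\le(1+\|X_0\|^{2k})\e^{Ct}$, which requires no integrability of $X_0$; one then multiplies the local martingale by $1_{\{\|X_0\|\le c\}}$, on which event this bound yields genuine square-integrability (the paper controls the predictable compensator $\int_0^t\Gamma(f,f)(X_s)\,ds$ via the carr\'e-du-champ operator in Lemma~\ref{L:sqr int mg}; your $L^2$ estimate through $\E[\|X_{t\wedge\tau_m}\|^{2n}]$ would serve the same purpose once truncated). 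This shows $M_t1_{\{\|X_0\|\le c\}}$ is a true martingale for each finite $c$, hence the moment formula holds on $\{\|X_0\|\le c\}$, and letting $c\uparrow\infty$ gives the theorem in the generalized-conditional-expectation sense. Your argument goes through verbatim once you insert the indicator $1_{\{\|X_0\|\le c\}}$ and run the Gronwall and uniform-integrability estimates conditionally on $\Fcal_0$; without that, the key step fails.
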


Theorem~\ref{T:moments} implies that $X_T$ has finite $\Fcal_t$-conditional moments of all orders. Note that we do not assume that $X_T$ has any finite unconditional moments. The following example illustrates this.

\begin{ex}\label{exGARCH}
The GARCH diffusion $dX_t=\kappa(\theta-X_t)\,dt +\sqrt{2\kappa} X_t\,dW_t$ for some parameters $\kappa,\theta>0$ has a unique ergodic solution on $(0,\infty)$, which is a polynomial diffusion. The invariant distribution is an inverse Gamma distribution with shape parameter 2 and scale parameter $1/\theta$. Hence in the stationary case, when $X_0$ has the invariant distribution, we have $\E[X_t]=\theta$ and $\E[X_t^2]=+\infty$. See \citet[Case 4]{for_sor_08}.
\end{ex}

Here is a large class of polynomial jump-diffusions extending the GARCH diffusion. Let $W_t$ be a standard $m$-dimensional Brownian motion and $N(du,dt)$ a Poisson random measure with compensator $F(du)dt$ on $U\times \R_+$, for some mark space $U$, see \citet[Definition~II.1.20]{Jacod/Shiryaev:2003}. We consider the linear stochastic differential equation (SDE)
\begin{equation}\label{SDElin}
  dX_t = b(X_t)\,dt + \sigma(X_t)\,dW_t + \int_U \delta(X_{t-},u) (N(du,dt)-F(du)dt),
\end{equation}
with drift, volatility, and jump size functions
\begin{equation}\label{SDElincoeff}
  b(x)= \beta_0+\sum_{i=1}^d x_i\beta_i,\quad \sigma(x)= \Gamma_0 +\sum_{i=1}^d x_i\Gamma_i,\quad \delta(x,u)= \delta_0(u) + \sum_{i=1}^d x_i\delta_i(u),
\end{equation}
for parameters $\beta_i\in\R^{d}$, $\Gamma_i\in\R^{d\times m}$, and functions $\delta_i:U\to\R^d$ with $\int_U \|\delta_i(u)\|^n F(du)<\infty$ for all $n\ge 2$, for $i=0,\dots, d$. Due to the global Lipschitz continuity of the coefficients, the linear SDE~\eqref{SDElin} has a unique strong $\R^d$-valued solution $X_t$ for every $\Fcal_0$-measurable initial random variable $X_0$, see \citet[Theorem III.2.32]{Jacod/Shiryaev:2003}. It follows by inspection that $X_t$ is a polynomial jump-diffusion on $\R^d$ with linear drift $b\in\Pol_1(\R^d)$, diffusion function $a=\sigma\sigma^\top\in \Pol_2(\R^d)$, and jump measure $\nu(x,d\xi)$ given by $\int_{\R^d} f(\xi)\nu(x,d\xi) = \int_U f\left(\delta(x,u)\right)F(du)$, so that
\[ \int_{\R^d} \xi^{\bm\alpha}\nu(\cdot,d\xi) \in \Pol_{|\bm\alpha|}(\R^d)\quad \text{for all $|\bm\alpha|\ge 2$.}\]

We now derive the matrix representation \eqref{CS3} of $\Gcal$ for the linear SDE~\eqref{SDElin}--\eqref{SDElincoeff} in the univariate case, $d=1$, when $E\subseteq\R$ has nonempty interior. Straightforward calculations show that, for any $j\in \{0,\dots,n\}$,
\[ \Gcal x^j =\sum_{i=0}^n G_{ij} x^i \]
where
\[
G_{ij}=\begin{cases}
 {j \choose i}\int_U\delta_0(u)^{j-i}(1+\delta_1(u))^iF(du), & i\le j-3\\
 \frac{j(j-1)}{2}\Gamma_0^2 + \frac{j(j-1)}{2}\int_U\delta_0(u)^2(1+\delta_1(u))^{j-2}F(du), & i=j-2\\
 j\beta_0 + \frac{j(j-1)}{2}\Gamma_0\Gamma_1 + j\int_U \delta_0(u)\left((1+\delta_1(u))^{j-1}-1\right)F(du), & i=j-1\\
 j\beta_1 + \frac{j(j-1)}{2}\Gamma_1^2 + \int_U\left( (1+\delta_1(u))^j-1-j\delta_1(u)\right)F(du), & i=j\\
 0 ,& i>j.
 \end{cases}
 \]
Hence the $(1+n)\times (1+n)$ upper triangular matrix $G=(G_{ij})_{0\le i,j\le n}$ represents $\Gcal$ restricted to $\Pol_n(E)$ with respect to the basis $\{1,x,\dots,x^n\}$.

\begin{ex}
 A special case of the linear SDE~\eqref{SDElin}--\eqref{SDElincoeff} is the L\'evy driven SDE
\[ dX_t = b(X_t)\,dt + \sigma(X_{t-})\,dL_t \]
where $b(x)$ and $\sigma(x)$ are as in \eqref{SDElincoeff}, and $L_t$ is an $m$-dimensional L\'evy process with $\E[\|L_t\|^n]<\infty$ for all $n\ge 2$, see \citet[Theorem 25.3]{sat_99}.
\end{ex}

\section{Affine Jump-Diffusions} \label{secAJDs}

Affine jump-diffusions have been widely studied and applied in finance, see e.g.\ \cite{duf_fil_sch_03}. We provide a novel  approach to affine jump-diffusions and show that they constitute examples of polynomial jump-diffusions.

Let $\Gcal$ be a jump-diffusion operator on $\R^d$ of the form \eqref{eq:G}, and let $X_t$ be an $E$-valued jump-diffusion with extended generator $\Gcal$, for some state space $E\subseteq\R^d$.

\begin{defi} \label{D:affine}
The operator $\Gcal$ is called {\em affine on $E$} if there exist complex-valued functions $F(u)$ and $R(u)=(R_1(u),\dots,R_d(u))^\top$ on ${\rm i}\R^d$ such that
\begin{equation} \label{eq:D:affine}
\Gcal \e^{ u^\top x} = \left( F(u) + R(u)^\top x\right)\e^{ u^\top x}
\end{equation}
holds for all $x\in E$ and $u\in {\rm i}\R^d$. In this case, we call $X_t$ an {\em affine jump-diffusion on $E$}.
\end{defi}

Note that this is a relaxed definition compared to the definition of an affine process in \cite{duf_fil_sch_03}, because it is directly given in terms of the point-wise action of $\Gcal$ on exponential-affine functions. Indeed, the affine jump-diffusion in Example~\ref{exaffinenotDFS} below is not an affine process in the sense of \cite{duf_fil_sch_03}.

In analogy to Lemma~\ref{LemPPchar}, the affine property of $\Gcal$ on $E$ has a simple characterization in terms of its coefficients $a(x)$, $b(x)$, and $\nu(x,d\xi)$.

\begin{lem}\label{lemaffine}
The operator $\Gcal$ is affine on $E$ if and only if $a(x)$, $b(x)$, and $\nu(x,d\xi)$ are affine of the form
\begin{equation} \label{eq:affine abnu}
a(x) = a_0 + \sum_{i=1}^d x_i a_i, \quad b(x) = b_0 + \sum_{i=1}^d x_i b_i, \quad \nu(x,d\xi) = \nu_0(d\xi) + \sum_{i=1}^d x_i \nu_i(d\xi),\quad\text{on $E$}
\end{equation}
for some matrices $a_i\in\S^d$, vectors $b_i\in\R^d$, and signed measures $\nu_i(d\xi)$ on $\R^d$ such that $\nu_i(\{0\})=0$ and $\int_{\R^d} \|\xi\|\wedge\|\xi\|^2\, |\nu_i|(d\xi)<\infty$, $i=0,\ldots,d$. In this case, the functions $F(u)$ and $R(u)$ in~\eqref{eq:D:affine} can be chosen to be of the form
\begin{equation}\label{eq:affine Ri}
\begin{aligned}
F(u)& = \frac{1}{2}u^\top a_0 u + b_0^\top u + \int_{\R^d} \left( e^{u^\top \xi} - 1 - u^\top\xi \right) \nu_0(d\xi),\\
R_i(u) &= \frac{1}{2}u^\top a_i u + b_i^\top u + \int_{\R^d} \left( e^{u^\top \xi} - 1 - u^\top\xi \right) \nu_i(d\xi).
\end{aligned}
\end{equation}
\end{lem}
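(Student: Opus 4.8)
Lemma characterization of affine operators — proof plan.

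The statement is a biconditional plus an explicit formula, so I would organize the proof around proving both directions and verifying the formula along the way.

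First, the easy direction ($\Leftarrow$). Suppose $a(x)$, $b(x)$, $\nu(x,d\xi)$ have the affine form \eqref{eq:affine abnu}. Then I would simply substitute $f(x)=\e^{u^\top x}$ into the generator \eqref{eq:G}. The gradient is $\nabla\e^{u^\top x}=u\,\e^{u^\top x}$ and the Hessian is $\nabla^2\e^{u^\top x}=uu^\top\e^{u^\top x}$, so the diffusion and drift parts give $\bigl(\tfrac12 u^\top a(x)u + b(x)^\top u\bigr)\e^{u^\top x}$, while the integral term gives $\int_{\R^d}\bigl(\e^{u^\top\xi}-1-u^\top\xi\bigr)\nu(x,d\xi)\,\e^{u^\top x}$. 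Plugging in the affine decompositions of $a,b,\nu$ and collecting the constant term and the coefficient of each $x_i$ yields exactly \eqref{eq:D:affine} with $F(u),R_i(u)$ as in \eqref{eq:affine Ri}. The only thing to check is that the integrals defining $F$ and $R_i$ converge for $u\in{\rm i}\R^d$: on the imaginary axis the integrand $\e^{u^\top\xi}-1-u^\top\xi$ is bounded by a constant times $\|\xi\|\wedge\|\xi\|^2$, which is integrable against each $|\nu_i|$ by hypothesis, so each term is finite.

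Second, the harder direction ($\Rightarrow$). Assume \eqref{eq:D:affine} holds for all $x\in E$ and $u\in{\rm i}\R^d$. Writing out $\Gcal\e^{u^\top x}$ explicitly and dividing by $\e^{u^\top x}\ne 0$, I obtain, for every $x\in E$ and $u\in{\rm i}\R^d$,
\[
\tfrac12 u^\top a(x)u + b(x)^\top u + \int_{\R^d}\bigl(\e^{u^\top\xi}-1-u^\top\xi\bigr)\nu(x,d\xi) = F(u) + R(u)^\top x.
\]
The strategy is to recognize the left-hand side, for each fixed $x$, as the Lévy--Khintchine exponent of an infinitely divisible distribution with characteristic triplet $(b(x),a(x),\nu(x,d\xi))$ (relative to the identity truncation). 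The right-hand side is affine in $x$, and I want to conclude that the triplet itself depends affinely on $x$.

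The main obstacle, and the crux of the argument, is this affinity transfer: I must pass from affinity of the Lévy exponent in $x$ to affinity of the triplet $(a,b,\nu)$ in $x$. The clean way is to invoke the uniqueness of the Lévy--Khintchine representation. Fix $d+1$ affinely independent points $x_0,\dots,x_d\in E$ (possible because the affine characterization is a statement on $E$; if $E$ fails to affinely span I would restrict to its affine hull, noting that only the values on $E$ matter). At each such point the exponent is a genuine Lévy exponent, determining $(a(x_j),b(x_j),\nu(x_j,d\xi))$ uniquely. Define candidate affine maps by solving the linear system forcing $a(x),b(x),\nu(x,d\xi)$ to interpolate these $d+1$ values affinely; this produces matrices $a_i$, vectors $b_i$, and signed measures $\nu_i(d\xi)$. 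For an arbitrary $x\in E$, both the true exponent and the candidate affine exponent built from $(a_i,b_i,\nu_i)$ equal $F(u)+R(u)^\top x$ as functions of $u\in{\rm i}\R^d$; the candidate is a genuine combination of Lévy exponents by construction at the interpolation points, but at a general $x$ the measure $\nu_0+\sum x_i\nu_i$ may be a signed measure and $a_0+\sum x_i a_i$ merely symmetric. This is precisely why the lemma only asserts $a_i\in\S^d$ (not $\S^d_+$) and allows signed $\nu_i$. To conclude equality of triplets I would appeal to a uniqueness result for the generalized (signed-measure) Lévy--Khintchine form: if two expressions of the shape $\tfrac12 u^\top Au + \beta^\top u + \int(\e^{u^\top\xi}-1-u^\top\xi)\,m(d\xi)$, with $A\in\S^d$, $\beta\in\R^d$, and $m$ a signed measure integrating $\|\xi\|\wedge\|\xi\|^2$, agree for all $u\in{\rm i}\R^d$, then $A$, $\beta$, and $m$ coincide. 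Granting this uniqueness, equality of the two exponents at $x$ forces $a(x)=a_0+\sum x_i a_i$, $b(x)=b_0+\sum x_i b_i$, and $\nu(x,d\xi)=\nu_0(d\xi)+\sum x_i\nu_i(d\xi)$ on $E$, which is \eqref{eq:affine abnu}. The formula \eqref{eq:affine Ri} for $F$ and $R_i$ then falls out by matching the constant and linear-in-$x$ parts as in the easy direction. I expect the uniqueness lemma for the signed-measure Lévy--Khintchine exponent to be the one technical point requiring care, since the classical uniqueness theorem is stated for nonnegative Lévy measures; extending it to signed measures is routine via the real and imaginary parts of the second difference of the exponent in $u$, but it is the step I would state and prove (or cite) explicitly.
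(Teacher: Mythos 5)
Your proposal is correct and follows essentially the same route as the paper: the converse direction by direct substitution of $\e^{u^\top x}$ into the generator, and the forward direction by reconstructing the affine coefficients $(a_i,b_i,\nu_i)$ from the values of the triplet at finitely many affinely spanning points of $E$ (the paper uses $0\in E$ and convex-type combinations reaching the unit vectors after an affine change of coordinates; you interpolate at $d+1$ affinely independent points) and then invoking uniqueness of the L\'evy--Khintchine-type representation, with degenerate $E$ handled by reduction to its affine hull. The one technical point you rightly isolate --- that uniqueness must hold in a class admitting signed measures, since the interpolated triplet is a priori only a signed combination of genuine triplets --- is precisely what the paper disposes of by citing Lemma~II.2.44 of Jacod--Shiryaev, so your plan to establish it via second differences of the exponent is sound and could equally be replaced by that citation.
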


It follows from \eqref{eq:affine abnu} that an affine jump-diffusion $X_t$ cannot be realized as solution of a linear SDE~\eqref{SDElin}--\eqref{SDElincoeff} in general unless diffusion and jump coefficients $a(x)=a_0$ and $\nu(x,d\xi)=\nu_0(d\xi)$ do not depend on $x$.

From \eqref{eq:affine abnu} and Lemma~\ref{LemPPchar} we immediately obtain that affine jump-diffusions are polynomial on $E$, subject to being well-defined on $\Pol(E)$.

\begin{cor}
If $X_t$ is an affine jump-diffusion on $E$ such that $\Gcal$ is well-defined on $\Pol(E)$, then $X_t$ is a polynomial jump-diffusion on $E$.
\end{cor}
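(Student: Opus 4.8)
The plan is to deduce the corollary directly from Lemma~\ref{LemPPchar}. By hypothesis $\Gcal$ is well-defined on $\Pol(E)$, so that lemma applies and it suffices to verify that the coefficients $a$, $b$, $\nu$ satisfy the three conditions in property~\ref{LemPPchar2}. Since $X_t$ is an affine jump-diffusion, Lemma~\ref{lemaffine} provides the affine representation~\eqref{eq:affine abnu}, namely $a(x)=a_0+\sum_i x_i a_i$, $b(x)=b_0+\sum_i x_i b_i$, and $\nu(x,d\xi)=\nu_0(d\xi)+\sum_i x_i\nu_i(d\xi)$ on $E$. The first required condition, $b\in\Pol_1(E)$, is then immediate from the affine form of $b$.

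Next I would treat the jump moments $g_{\bm\alpha}(x):=\int_{\R^d}\xi^{\bm\alpha}\,\nu(x,d\xi)$ for $|\bm\alpha|\ge 2$. Well-definedness~\eqref{DefPol1} guarantees $\int_{\R^d}\|\xi\|^{|\bm\alpha|}\,\nu(x,d\xi)<\infty$ for every $x\in E$, so $g_{\bm\alpha}(x)$ is finite and the integral absolutely convergent because $|\xi^{\bm\alpha}|\le\|\xi\|^{|\bm\alpha|}$. The goal is to show that $g_{\bm\alpha}$ is affine on $E$, i.e.\ lies in $\Pol_1(E)\subseteq\Pol_{|\bm\alpha|}(E)$. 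Granting this, the same argument applied entrywise to $\xi\xi^\top$ (the case $|\bm\alpha|=2$) gives $\int\xi\xi^\top\nu(\cdot,d\xi)\in\Pol_1(E)$, and since $a\in\Pol_1(E)$ as well, the second condition $a+\int\xi\xi^\top\nu(\cdot,d\xi)\in\Pol_2(E)$ follows; the third condition is exactly $g_{\bm\alpha}\in\Pol_{|\bm\alpha|}(E)$ for $|\bm\alpha|\ge 3$. Lemma~\ref{LemPPchar} then yields that $\Gcal$ is polynomial on $E$, so that $X_t$ is a polynomial jump-diffusion on $E$.

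The one step requiring care — and what I expect to be the main obstacle — is establishing affineness of $g_{\bm\alpha}$, because the $\nu_i$ are only signed measures with finite first-and-second moments, so the naive splitting $g_{\bm\alpha}(x)=\int\xi^{\bm\alpha}\nu_0+\sum_i x_i\int\xi^{\bm\alpha}\nu_i$ may involve individually divergent higher moments when $E$ is degenerate and the representation~\eqref{eq:affine abnu} is non-unique. I would avoid the splitting by working with differences: fix a base point $x^\ast\in E$; for any $x\in E$ the signed measure $\nu(x,d\xi)-\nu(x^\ast,d\xi)=\sum_i (x_i-x^\ast_i)\nu_i(d\xi)$ is the difference of the two positive measures $\nu(x,\cdot)$ and $\nu(x^\ast,\cdot)$, each of which has finite $|\bm\alpha|$-th moment by~\eqref{DefPol1}, so $\int\xi^{\bm\alpha}\,(\nu(x,d\xi)-\nu(x^\ast,d\xi))$ is absolutely convergent. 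Running over all such differences one checks that $v\mapsto\int\xi^{\bm\alpha}\,(\sum_i v_i\nu_i)(d\xi)$ is a well-defined linear functional $L$ on the direction space $V=\vspan(E-E)$, linearity following from linearity of $v\mapsto\sum_i v_i\nu_i$ together with additivity of the integral once both summands have finite moment. Extending $L$ to a linear functional on $\R^d$ exhibits $g_{\bm\alpha}(x)=g_{\bm\alpha}(x^\ast)+L(x-x^\ast)$ as the restriction to $E$ of an affine function, whence $g_{\bm\alpha}\in\Pol_1(E)$, completing the verification of property~\ref{LemPPchar2}.
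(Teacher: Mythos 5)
Your proof is correct and follows the paper's own route: the paper deduces the corollary directly from the affine representation~\eqref{eq:affine abnu} of Lemma~\ref{lemaffine} combined with the coefficient characterization in Lemma~\ref{LemPPchar}, exactly as you do. Your difference-based argument, which shows that $\int_{\R^d}\xi^{\bm\alpha}\,\nu(\cdot,d\xi)$ is affine on $E$ by integrating against $\nu(x,\cdot)-\nu(x^\ast,\cdot)$ (each a positive measure with finite moments by~\eqref{DefPol1}) rather than against the individual signed measures $\nu_i$ — whose higher moments may indeed diverge for a bad choice of representation when $E$ is lower-dimensional — is a sound justification of a step the paper treats as immediate.
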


Affine jump-diffusions on $E$ not only satisfy the moment formula in Theorem~\ref{T:moments}, subject to being well-defined on $\Pol(E)$. Their characteristic functions are also analytically tractable.

\begin{thm}\label{thmATF}
Assume $X_t$ is an affine jump-diffusion on $E$. For $u\in{\rm i}\R^d$ and $T>0$, let $\phi(\tau)$ and $\psi(\tau)=(\psi_1(\tau),\dots,\psi_d(\tau))^\top$ be functions that solve the generalized Riccati equations
\begin{equation} \label{eq:Riccati}
\begin{aligned}
\phi'(\tau) 	&= F(\psi(\tau)),	\quad \phi(0) =0\\
\psi'(\tau) 	&= R(\psi(\tau)),	\quad \psi(0) =u
\end{aligned}
\end{equation}
for $0\le\tau\le T$, where $F(u)$ and $R(u)$ are the functions in~\eqref{eq:affine Ri}. If
\begin{equation} \label{eq:Re bound}
{\rm Re\,} \phi(T-t) +  {\rm Re\,} \psi(T-t)^\top X_{t} \le 0, \quad \text{for $t\le T$,}
\end{equation}
then the affine transform formula holds,
\[
\E[\e^{u^\top X_T} \mid \Fcal_t] = \e^{\phi(T-t) + \psi(T-t)^\top X_{t}}, \quad \text{for $t\le T$.}
\]
\end{thm}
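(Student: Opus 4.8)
The plan is to exhibit the process $M_t = \e^{\phi(T-t) + \psi(T-t)^\top X_t}$, $0\le t\le T$, as a true martingale; the formula then follows by evaluating $\E[M_T\mid\Fcal_t]=M_t$, since $\phi(0)=0$ and $\psi(0)=u$ give $M_T = \e^{u^\top X_T}$, while $M_t=\e^{\phi(T-t)+\psi(T-t)^\top X_t}$.

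The first step is to extend the defining identity \eqref{eq:D:affine} from purely imaginary arguments to the complex vectors $w=\psi(\tau)$ produced by the Riccati flow. Using the affine form of the coefficients from Lemma~\ref{lemaffine} and the explicit expression \eqref{eq:G} of $\Gcal$, a direct computation gives $\Gcal \e^{w^\top x} = (F(w)+R(w)^\top x)\e^{w^\top x}$ on $E$ for every complex $w$ for which the integrals $\int_{\R^d}(\e^{w^\top\xi}-1-w^\top\xi)\,\nu_i(d\xi)$ converge absolutely, with $F,R$ still given by \eqref{eq:affine Ri}. The hypothesis that $\phi,\psi$ solve the Riccati equations \eqref{eq:Riccati} forces $F(\psi(\tau))$ and $R(\psi(\tau))$ to be finite along the solution, which, together with the affine decomposition $\nu(x,d\xi)=\nu_0(d\xi)+\sum_i x_i\nu_i(d\xi)$, yields the required absolute convergence at each $w=\psi(\tau)$.

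Next I would apply It\^o's formula to the $C^{1,2}$ function $g(t,x)=\e^{\phi(T-t)+\psi(T-t)^\top x}$ along $X$, using the semimartingale characteristics $B_t=\int_0^t b(X_s)\,ds$, $C_t=\int_0^t a(X_s)\,ds$, and compensator $\nu(X_{t-},d\xi)\,dt$ recorded in the footnote to \eqref{eq:f-Gf loc mg_NEW}. Collecting the predictable finite-variation part shows that $M_t-M_0-\int_0^t(\partial_s+\Gcal)g(s,X_s)\,ds$ is a local martingale. Now $\partial_s g(s,x)=-(\phi'(T-s)+\psi'(T-s)^\top x)g(s,x)=-(F(\psi(T-s))+R(\psi(T-s))^\top x)g(s,x)$ by the Riccati equations, while $\Gcal g(s,x)=(F(\psi(T-s))+R(\psi(T-s))^\top x)g(s,x)$ by the previous step, so $(\partial_s+\Gcal)g\equiv 0$ and $M$ is a (complex) local martingale.

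Finally, the bound \eqref{eq:Re bound} gives $|M_t|=\e^{{\rm Re\,}\phi(T-t)+{\rm Re\,}\psi(T-t)^\top X_t}\le 1$, so both the real and imaginary parts of $M$ are bounded local martingales and hence true martingales; thus $M$ is a martingale and $\E[M_T\mid\Fcal_t]=M_t$ delivers the claim. The main obstacle is the interplay between the complex argument and the integrability in the two middle steps: because \eqref{eq:D:affine} is only postulated on ${\rm i}\R^d$ and $\e^{w^\top x}$ is unbounded for complex $w$ with nonzero real part, one cannot invoke the local-martingale property \eqref{eq:f-Gf loc mg_NEW} directly, and must instead verify by hand that the jump part of the It\^o decomposition is absolutely convergent so that the compensated jump integral is a genuine local martingale. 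This is exactly where the finiteness of $F,R$ along the Riccati solution, and the boundedness coming from \eqref{eq:Re bound}, are used.
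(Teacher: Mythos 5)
Your proof is correct and follows essentially the same route as the paper's: both exhibit $M_t=\e^{\phi(T-t)+\psi(T-t)^\top X_t}$ as a local martingale via the identity $(\partial_t+\Gcal)f=0$ derived from the Riccati equations \eqref{eq:Riccati}, and then use the bound $|M_t|\le 1$ coming from \eqref{eq:Re bound} to upgrade it to a true martingale and conclude $M_t=\E[M_T\mid\Fcal_t]$. The only difference is one of detail: the paper compresses the extension of \eqref{eq:D:affine} to complex arguments with nonzero real part, and the jump-integrability checks needed to make the It\^o/compensation argument rigorous, into the phrase ``a calculation using \eqref{eq:Riccati}'', whereas you spell these steps out explicitly.
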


\begin{rem}
Inequality~\eqref{eq:Re bound} is a necessary condition for the affine transform formula to hold. Indeed, the affine transform formula and Jensen's inequality yield, for $u\in{\rm i}\R^d$,
\[
\exp\left( {\rm Re\,} \phi(T-t) +  {\rm Re\,} \psi(T-t)^\top X_{t} \right) = \left| \E[\exp(u^\top X_T) \mid \Fcal_t] \right| \le \E[ |\exp(u^\top X_T)| \mid \Fcal_t] = 1.
\]
\end{rem}

There exist affine jump-diffusions for which the generalized Riccati equations~\eqref{eq:Riccati} do not admit global solutions for all $u\in{\rm i}\R^d$. The following example illustrates this.

\begin{ex}\label{exaffinenotDFS}
Consider the two-point state space $E=\{0,1\}\subseteq\R$, and the process $X_t$ that jumps from~$1$ to~$0$ with intensity $\lambda$ and is absorbed once it reaches~$0$. This is a jump-diffusion with extended generator
\[
\Gcal f(x) = \lambda x (f(x-1)-f(x)),
\]
which is of the form~\eqref{eq:G} with $a(x)=0$, $b(x)=\lambda x$, $\nu(x,d\xi)=\lambda x\delta_{-1}(d\xi)$. Thus $X_t$ is an affine jump-diffusion, and $F(u)=0$ and $R(u)=\lambda(e^{-u}-1)$. The associated generalized Riccati equation~\eqref{eq:Riccati} is
\begin{equation} \label{ex_Ric_eq1}
\phi'(\tau)=0,\quad \psi'(\tau)=\lambda(e^{-\psi(\tau)}-1).
\end{equation}
We claim that this equation does not have a global solution for the initial condition $u={\rm i}\pi$. We argue by contradiction and assume that $\psi(\tau)$ is a global solution of~\eqref{ex_Ric_eq1}. Then $\Psi(\tau)=e^{\psi(\tau)}$ satisfies the linear equation
\[
\Psi'(\tau)=-\lambda\Psi(\tau)+\lambda, \qquad \Psi(0)=-1,
\]
whose unique solution is $\Psi(\tau)=1-2e^{-\lambda\tau}$, which becomes zero for $\tau=\lambda^{-1}\log 2$, which is absurd.

The deeper reason behind this fact is that the characteristic function of $X_T$ given $X_0=1$, $\E[\e^{u X_T}]=1-\e^{-\lambda T}+\e^{u -\lambda T}$, for $u={\rm i}\pi$ and $T=\lambda^{-1}\log 2$ becomes zero and hence cannot be written as exponential as in the affine transform formula.
\end{ex}

\section{Polynomial Transformations}\label{secpolytrans}

The class of polynomial jump-diffusions is shown to be invariant under polynomial transformations, after an extension of the dimension. This allows us to build a large class of polynomial jump-diffusions from basic building blocks, including Brownian motion, L\'evy processes, or more general affine processes. This turns out to be a useful and flexible method for introducing non-linearities and jumps in all kinds of financial models.

Let $X_t$ be a polynomial jump-diffusion on $E\subseteq\R^d$ with extended generator $\Gcal$. Fix $n\in\N$ and a basis $\{1,h_1,\dots,h_N\}$ of $\Pol_n(E)$ as in~\eqref{CS1}--\eqref{CS3}. Notice that that $H:E\to H(E)\subseteq\R^N$ is injective. Indeed, the restriction to $E$ of any linear monomial $x_i$ lies in $\Pol_n(E)$, and is therefore a linear combination of $1,h_1(x),\dots,h_N(x)$. Thus there exist linear polynomials $\ell_i\in\Pol_1(\R^N)$ such that $\ell_i(H(x)) = x_i$ for all $x\in E$ and all $i$. Define the vector valued function
\begin{equation}\label{Ldef}
 L:\R^N\to\R^d,\quad L(\overline x) = (\ell_1(\overline x), \ldots, \ell_d(\overline x))^\top.
\end{equation}
Then $L(H(x))=x$ for all $x\in E$, and $H(L(\overline x))=\overline x$ for all $\overline x\in H(E)$. We define the pullbacks  $H^*$ and $L^*$ as in~\eqref{eq:pullbackNEW}.

\begin{lem}\label{lemPBHL}
For every $m\in\N$, the pullback $H^\ast:\Pol_{m}(H(E))\to\Pol_{mn}(E)$ is a linear isomorphism with inverse $L^\ast$.
\end{lem}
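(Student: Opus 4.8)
The plan is to show that $H^*$ and $L^*$ are well-defined linear maps between the stated spaces and that they are mutually inverse; linearity of a pullback being automatic, the entire content lies in the degree bookkeeping together with the relations $L\circ H=\id$ on $E$ and $H\circ L=\id$ on $H(E)$ recorded just before the statement.

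First I would check that $H^*$ maps $\Pol_m(H(E))$ into $\Pol_{mn}(E)$. It suffices to treat a monomial $\bar x^{\bm\beta}$ with $|\bm\beta|\le m$, since these span $\Pol_m(H(E))$. Its pullback is $H^*(\bar x^{\bm\beta})(x)=H(x)^{\bm\beta}=\prod_i h_i(x)^{\beta_i}$, a product of at most $m$ of the polynomials $h_i$, each of degree at most $n$ on $E$; hence $H^*(\bar x^{\bm\beta})\in\Pol_{mn}(E)$.

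The crux is the reverse inclusion, that $L^*$ maps $\Pol_{mn}(E)$ into $\Pol_m(H(E))$, where the degree must genuinely drop from $mn$ to $m$. Again it suffices to treat a monomial $x^{\bm\alpha}$ with $|\bm\alpha|\le mn$. The key idea is to split the multi-index as $\bm\alpha=\bm\alpha_1+\cdots+\bm\alpha_m$ with $|\bm\alpha_j|\le n$ for each $j$, which is possible precisely because $|\bm\alpha|\le mn$. Each factor $x^{\bm\alpha_j}$ then restricts to an element of $\Pol_n(E)=\vspan\{1,h_1,\dots,h_N\}$, so there are scalars with $x^{\bm\alpha_j}=c_j+\sum_i c_{ji}h_i(x)$ on $E$; that is, $x^{\bm\alpha_j}$ is an affine function of $H(x)$. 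Multiplying these $m$ affine expressions exhibits $x^{\bm\alpha}$ as a polynomial $P(H(x))$ of degree at most $m$ in $H(x)$ on $E$. Evaluating $L^*(x^{\bm\alpha})$ at $\bar x\in H(E)$, so that $L(\bar x)\in E$ and $H(L(\bar x))=\bar x$, I would get $L^*(x^{\bm\alpha})(\bar x)=L(\bar x)^{\bm\alpha}=P(H(L(\bar x)))=P(\bar x)$, whence $L^*(x^{\bm\alpha})=P|_{H(E)}\in\Pol_m(H(E))$.

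Finally I would verify that the two maps are inverse to one another. For $f$ on $H(E)$ and $\bar x\in H(E)$ one has $L^*H^*f(\bar x)=f(H(L(\bar x)))=f(\bar x)$, while for $g$ on $E$ and $x\in E$ one has $H^*L^*g(x)=g(L(H(x)))=g(x)$; both use only the already established inverse relations between $H$ and $L$. Combined with the two well-definedness statements, this shows that $H^*:\Pol_m(H(E))\to\Pol_{mn}(E)$ is a linear bijection with inverse $L^*$. The only nontrivial point is the degree reduction for $L^*$, and the single idea that makes it work is the multi-index splitting combined with $\Pol_n(E)=\vspan\{1,h_1,\dots,h_N\}$; everything else is formal.
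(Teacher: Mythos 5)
Your proposal is correct and follows essentially the same route as the paper: the paper isolates your key step as a separate lemma (any $p\in\Pol_{mn}(E)$ equals $f(H(x))$ for some $f\in\Pol_m(H(E))$, proved by the same splitting $\bm\alpha=\bm\alpha_1+\cdots+\bm\alpha_m$ with $|\bm\alpha_j|\le n$ and expressing each $x^{\bm\alpha_j}$ as an affine function of $H(x)$), and then concludes exactly as you do via $H(L(\overline x))=\overline x$ and $L(H(x))=x$. No substantive differences.
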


Here is the main result of this section.\footnote{This result was derived in collaboration with Sergio Pulido, and is applied in \cite{fil_lar_pul_16}.}

\begin{thm} \label{T:exp}
The process $\overline X_t=H(X_t)$ is a polynomial jump-diffusion on $H(E)$ with extended generator $\overline\Gcal = L^\ast\Gcal H^\ast$ and, for every $m\in\N$, the following diagram commutes:
\begin{equation}\label{diagXX}
\begin{tikzcd}
\Pol_m(H(E)) \arrow{r}{\overline\Gcal} \arrow[swap]{d}{H^*}& \Pol_m(H(E)) \\
\Pol_{mn}(E) \arrow{r}{\Gcal}  & \Pol_{mn}(E) \arrow[swap]{u}{L^*}
\end{tikzcd}
\end{equation}
\end{thm}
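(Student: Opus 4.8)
The plan is to reduce the whole statement to the single fact that $\overline\Gcal:=L^\ast\Gcal H^\ast$ is again a jump-diffusion operator of the form~\eqref{eq:G} on $\R^N$. Since $L(H(x))=x$ for $x\in E$, one has the pointwise identity
\[
\Gcal(H^\ast\overline f)(x)=(\overline\Gcal\,\overline f)(H(x))\qquad\text{for all }x\in E
\]
and every $\overline f$ on $\R^N$ for which the right-hand side makes sense. Granting that $\overline\Gcal$ has the form~\eqref{eq:G}, everything else follows quickly: for bounded $C^2$ functions $\overline f$ on $\R^N$ the pullback $g:=H^\ast\overline f=\overline f\circ H$ is again bounded and $C^2$ (because $H$ has polynomial, hence smooth, components), so~\eqref{eq:f-Gf loc mg_NEW} applied to $g$, together with $X_s\in E$ and the identity above, shows that $\overline f(\overline X_t)-\overline f(\overline X_0)-\int_0^t\overline\Gcal\,\overline f(\overline X_s)\,ds$ is a local martingale. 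Hence $\overline X_t=H(X_t)$, which is plainly $H(E)$-valued, is a jump-diffusion with extended generator $\overline\Gcal$, once its special-semimartingale property has been checked.

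First I would carry out the generator computation. Writing $\overline x=H(x)$ and expanding $\Gcal(\overline f\circ H)$ by the chain rule, the Hessian term produces a genuine second-order part $\tfrac12\tr(\overline a(\overline x)\nabla^2\overline f(\overline x))$ with $\overline a=(\nabla H)\,a\,(\nabla H)^\top$ (positive semidefinite because $a\in\S^d_+$), plus a residual first-order part involving $\nabla^2 h_k$. In the integral term I would write the jump of $\overline X$ as $\overline\xi:=H(x+\xi)-H(x)$ and add and subtract $\sum_k\partial_k\overline f(\overline x)\,\overline\xi_k$; this splits the jump term into the correctly compensated expression $\int(\overline f(\overline x+\overline\xi)-\overline f(\overline x)-\overline\xi^\top\nabla\overline f(\overline x))\,\overline\nu(\overline x,d\overline\xi)$, where $\overline\nu(\overline x,\cdot)$ is the image of $\nu(x,\cdot)$ under $\xi\mapsto H(x+\xi)-H(x)$, plus a correction $\sum_k\partial_k\overline f(\overline x)\int(h_k(x+\xi)-h_k(x)-\xi^\top\nabla h_k)\,\nu(x,d\xi)$. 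The key bookkeeping observation is that this correction, the first-order diffusion residual, and the drift term $b^\top\nabla g$ combine precisely into $\sum_k\partial_k\overline f(\overline x)\,\Gcal h_k(x)$, so that the drift of $\overline\Gcal$ is $\overline b(\overline x)=\Gcal H(L(\overline x))=L^\ast\Gcal H(\overline x)$. This truncation-matching step is where I expect the main difficulty to lie.

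Next I would verify that $\overline a$, $\overline b$, $\overline\nu$ genuinely define an operator of the form~\eqref{eq:G} that is well-defined on $\Pol(H(E))$. Since $H$ is injective on $E$ and $X$ is $E$-valued, so that its jumps satisfy $X_{t-}+\Delta X_t\in E$, one gets $H(x+\xi)=H(x)\Leftrightarrow\xi=0$ on the support of $\nu(x,\cdot)$, whence $\overline\nu(\overline x,\{0\})=0$; and because each $h_k$ has degree at most $n$, $\|H(x+\xi)-H(x)\|$ grows at most polynomially of degree $n$ in $\xi$, so $\int\|\overline\xi\|^m\,\overline\nu(\overline x,d\overline\xi)<\infty$ for every $m\ge2$ follows from the moment hypothesis~\eqref{DefPol1} on $\nu$. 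The special-semimartingale property of $\overline X$ I would obtain by applying It\^o's formula to each coordinate $h_k(X_t)$ and identifying its predictable finite-variation part with $\int_0^t\Gcal h_k(X_s)\,ds=\int_0^t\overline b_k(\overline X_s)\,ds$, which has locally integrable variation because $\overline b_k$ is a polynomial (hence locally bounded) and $X$ has finite conditional moments by Theorem~\ref{T:moments}. Condition~\eqref{DefPol2} for $\overline\Gcal$ transfers from $\Gcal$: if $\overline f\in\Pol(\R^N)$ vanishes on $H(E)$ then $H^\ast\overline f\in\Pol(\R^d)$ vanishes on $E$, so $\Gcal(H^\ast\overline f)=0$ on $E$ and hence $L^\ast\Gcal H^\ast\overline f=0$ on $H(E)$.

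Finally, the polynomial property and the commuting diagram~\eqref{diagXX} are immediate from $\overline\Gcal=L^\ast\Gcal H^\ast$ and Lemma~\ref{lemPBHL}. For $\overline p\in\Pol_m(H(E))$, Lemma~\ref{lemPBHL} gives $H^\ast\overline p\in\Pol_{mn}(E)$; since $\Gcal$ is polynomial on $E$, $\Gcal H^\ast\overline p\in\Pol_{mn}(E)$; and applying $L^\ast$ returns an element of $\Pol_m(H(E))$, again by Lemma~\ref{lemPBHL}. This shows simultaneously that $\overline\Gcal$ maps $\Pol_m(H(E))$ into itself, so that $\overline X$ is a polynomial jump-diffusion on $H(E)$ by Definition~\ref{D:PPg}, and that the square~\eqref{diagXX} commutes.
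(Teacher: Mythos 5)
Your proof is correct, and it follows the same overall skeleton as the paper --- establish that $\overline X_t=H(X_t)$ is a special semimartingale, identify its extended generator as $L^\ast\Gcal H^\ast$ in jump-diffusion form, then obtain the polynomial property and the commuting square from Lemma~\ref{lemPBHL} --- but it differs in how the middle step is handled. The paper delegates that step to Lemma~\ref{L:JD transformNEW_2} in the appendix, whose proof rests on \citet[Proposition~3]{Kallsen:2006} (transformation of semimartingale characteristics under smooth maps); you instead re-derive that lemma by hand: you pull back bounded $C^2$ test functions along $H$, apply the defining local martingale property \eqref{eq:f-Gf loc mg_NEW} to $g=H^\ast\overline f$, and carry out the chain-rule and compensation bookkeeping to exhibit $\Gcal(H^\ast\overline f)$ in the form \eqref{eq:G} with $\overline a=(\nabla H)\,a\,(\nabla H)^\top$, $\overline b=L^\ast\Gcal H$, and $\overline\nu$ the pushforward of $\nu$ under $\xi\mapsto H(x+\xi)-H(x)$ --- exactly the coefficients recorded in Lemma~\ref{L:JD transformNEW_2}. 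Your route is more self-contained (no appeal to Kallsen's calculus of characteristics), at the cost of doing the truncation-matching computation explicitly; the paper's route is shorter but imports an external result. Two points where your write-up can be tightened. First, your special-semimartingale step (It\^o's formula plus identification of the finite-variation part, with moments from Theorem~\ref{T:moments}) is precisely the content of Lemma~\ref{lemLM} applied to the coordinates $h_k\in\Pol_n(E)$, so it can simply be cited; moreover Theorem~\ref{T:moments} is not needed there, since local boundedness of the polynomial $\Gcal h_k$ along the c\`adl\`ag paths of $X$ already makes $\int_0^t\Gcal h_k(X_s)\,ds$ a predictable process of finite variation. Second, your argument that $\overline\nu(\overline x,\{0\})=0$ via injectivity of $H$ on $E$ tacitly assumes that $\nu(x,\cdot)$ is supported on $\{\xi: x+\xi\in E\}$ for \emph{every} $x\in E$, which the definition of an $E$-valued jump-diffusion only guarantees along the trajectories of $X$, not pointwise; this is harmless, however, because any mass the pushforward places at $\overline\xi=0$ contributes nothing to the integral in \eqref{eq:G}, so one can simply excise the atom at the origin when defining $\overline\nu$.
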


As an immediate application of Theorem~\ref{T:exp} we can easily infer the action of $\overline \Gcal$ on $\Pol_m(H(E))$. Let $1+\overline N=\dim\Pol_{m}(H(E)) =\dim\Pol_{mn}(E)$ and extend the basis of $\Pol_n(E)$ to a basis
\begin{equation}\label{basisHE}
  \{h_0=1, h_1,\dots,h_N ,h_{N+1},\dots, h_{\overline N}\}
\end{equation}
of $\Pol_{mn}(E)$, for some polynomials $h_{N+1}(x),\dots, h_{\overline N}(x)$ on $\R^d$. In view of the commuting diagram~\eqref{diagXX} this induces a basis $\overline h_i = L^\ast h_i$ on $\Pol_m(H(E))$, for $i=0,\dots,\overline N$. Let $\overline G$ be the $(1+\overline N)\times (1+\overline N)$ matrix representing $\Gcal$ on $\Pol_{mn}(E)$ according to \eqref{CS1}--\eqref{CS3}, which can be determined using symbolic calculus applied to $\Gcal h_i(x)$. Then $\overline G$ is the matrix representing $\overline\Gcal$ on $\Pol_m(H(E))$, and Theorem~\ref{T:moments} can readily be applied to compute all $\Fcal_t$-conditional moments of $\overline X_T$ up to order $m$.

The following example shows that the affine property is not invariant under polynomial transformations.

\begin{ex}\label{exSQRnotaffine}
Consider the square-root process $dX_t= (b+\beta X_t)\,dt+  \sigma \sqrt{X_t}\,dW_t$, which is an affine diffusion. The augmented process $\overline X_t=(X_t,X_t^2)$ satisfies
\begin{align*}
d\overline X_{1t} &=  (b+\beta \overline X_{1t})\,dt+  \sigma\sqrt{\overline X_{1t}} \,dW_t \\
d\overline X_{2t} &= ((2b+\sigma^2) \overline X_{1t}+2\beta \overline X_{2t}) \,dt + 2\sigma  \sqrt{\overline X_{1t}\overline X_{2t}} \,dW_t.
\end{align*}
While the drift function of $\overline X_t$ is affine of the form \eqref{eq:affine abnu}, the diffusion function is not. In view of Lemma~\ref{lemaffine} this shows that $\overline X_t$ is not affine, while it is polynomial, on $H(\R)$ for $H(x)=(x,x^2)^\top$, in line with Theorem~\ref{T:exp}.
\end{ex}

\section{Polynomial Conditional L\'evy Processes} \label{seccondLevy}

In financial applications we often encounter the following situation. Let $X_t$ be a polynomial jump-diffusion $X_t$ on $E\subseteq\R^d$, and let $Y_t$ be an $\R^e$-valued semimartingale, for some $e\in\N$, whose characteristics are functions of $X_t$. The process $Y_t$ could model the excess log returns of assets whose stochastic volatilities and jump characteristics are given in terms of the latent factor process $X_t$. Drawing on Section~\ref{secpolytrans} we develop a polynomial framework that accommodates a large class of such models. The following example illustrates the kind of situation we have in mind. We elaborate on this example further, including jumps, in Section~\ref{secLVM}.

\begin{ex}\label{ex_5_1}
For $\gamma>0$, $\kappa\theta>0$, $X_0>0$, and $Y_0=0$, we consider the following model specified under the risk-neutral measure:
\begin{align*}
dX_t &= \kappa(\theta-X_t)\,dt + \gamma X_t\,dW_{1t}, \\
dY_t &= -\frac12 X_t^2\,dt + X_t\,dW_{2t},
\end{align*}
where $Y_t$ models the excess log return of an asset and $X_t$ its volatility. In view of Lemma~\ref{LemPPchar} and Example~\ref{exGARCH}, $X_t$ is a polynomial diffusion on $E=(0,\infty)$. Moreover, the drift and diffusion functions of $Y_t$ are both quadratic in $X_t$. In particular, Lemma~\ref{LemPPchar} shows that the joint process $Z_t=(X_t,Y_t)$ is not polynomial on $E\times\R$. However, the augmented process $\overline Z_t =(H(X_t),Y_t)$ with $H(x)=(x,x^2)$ is a polynomial diffusion on $H(E)\times\R$. Thus the moment formula in Theorem~\ref{T:moments} can still be used to compute conditional moments of $Y_T$.
\end{ex}

Returning to the general discussion, we assume that the joint semimartingale $Z_t=(X_t,Y_t)$ is an $E\times\R^e$-valued jump-diffusion with extended generator of the form
\begin{equation} \label{eq:GZ_19}
\Gcal f(z) = \frac{1}{2}\tr(a(x)\nabla^2 f(z)) + b(x)^\top \nabla f(z) + \int_{\R^{d+e}} \left( f(z+\zeta) - f(z) - \zeta^\top \nabla f(z)\right)\nu(x,d\zeta),
\end{equation}
where we write $z=(x,y)$, for some measurable maps $a:\R^d\to\S^{d+e}_+$ and $b:\R^d\to\R^{d+e}$, and a transition kernel $\nu(x,d\zeta)$ from $\R^d$ into $\R^{d+e}$ satisfying $\nu(x,\{0\})=0$ and $\int_{\R^{d+e}}\|\zeta\|\wedge\|\zeta\|^2\nu(x,d\zeta)<\infty$ for all $x\in\R^d$.

According to the decomposition $Z_t=(X_t,Y_t)$, and accordingly $\zeta=(\xi,\eta)$, we write
\begin{equation}\label{aXaY_19}
 a(x)=\begin{pmatrix}
 a^{X}(x) & a^{XY}(x) \\ a^{YX}(x) & a^{Y}(x))
\end{pmatrix},\quad b(x)=\begin{pmatrix}b^X(x) \\ b^Y(x)
\end{pmatrix},\quad \nu(x,d\zeta)=\nu(x,d\xi\times d\eta),
\end{equation}
and denote by $\nu^X(x,d\xi)$ and $\nu^Y(x,d\eta)$ the marginal measures of $\nu(x,d\xi\times d\eta)$ given by
\begin{equation}\label{nuXnuY_19}
 \nu^X(x,A) =   \nu(x,A\times\R^e),\quad  \nu^Y(x,A) =   \nu(x,\R^d\times A).
\end{equation}
Then $a^X(x)$, $b^X(x)$, and $\nu^X(x,d\xi)$ are the coefficients of the extended generator $\Gcal^X$ of $X_t$, which is polynomial on $E$ by assumption. Note that $Y_t$ is a conditional L\'evy process in the sense of \cite{cin_03}. That is, conditional on the process $X_t$, the semimartingale $Y_t$ has independent increments.

Fix $n\in\N$ and a basis $\{1,h_1,\dots,h_N\}$ of $\Pol_n(E)$ as in~\eqref{CS1}--\eqref{CS3} for $\Gcal^X$ in lieu of $\Gcal$. Extending \eqref{Ldef}, we define the maps $\varphi:\R^{d+e}\to\R^{N+e}$ and $\psi:\R^{N+e}\to\R^{d+e}$ by
\[   \varphi(x,y) = (H(x),y),\quad \psi(\overline x,y) = (L(\overline x),y),\]
so that $\psi\circ\varphi=\id$ on $E\times\R^e$. The pullbacks  $\varphi^*$ and $\psi^*$ are defined in~\eqref{eq:pullbackNEW}.  Here is our first main result of this section, which extends Example~\ref{ex_5_1}.

\begin{thm} \label{T:augNEW1_19}
Assume that
\begin{equation}\label{DefPolaug1_19}
 \text{$  \int_{\R^{e}} \|\eta\|^{k}\,\nu^Y(x,d\eta)<\infty$ for all $x\in E$ and all $k\ge 2$.}
 \end{equation}
Then the augmented process $\overline Z_t =(H(X_t),Y_t)$ is a jump-diffusion on $H(E)\times\R^e$ with extended generator $\overline\Gcal = \psi^\ast\Gcal \varphi^\ast$, and the operators $\Gcal$ and $\overline\Gcal$ are well-defined on $\Pol(E\times\R^e)$ and $\Pol(H(E)\times\R^e)$, respectively. Furthermore, the properties
\begin{align}
b^Y  &\in \Pol_{n}(E),   \label{eq:T:aug:bYNew_19} \\
a^{Y}   + \int_{\R^{e}} \eta\eta^\top \nu^Y(\cdot,d\eta) &\in \Pol_{2n}(E),  \label{eq:T:aug:aYYNew_19} \\
a^{XY}  + \int_{\R^{d+e}} \xi\eta^\top \nu(\cdot,d\xi\times d\eta) &\in \Pol_{1+n}(E), \label{eq:T:aug:aXYNew_19} \\
\int_{\R^{d+e}} \xi^{\bm \alpha} \eta^{\bm \beta} \nu(\cdot, d\xi\times d\eta) &\in \Pol_{|{\bm\alpha}|+n|\bm\beta|}(E), \quad\text{for all $|{\bm\alpha}|+|{\bm\beta}|\ge 3$,} \label{eq:T:aug:nuew_19}
\end{align}
together imply
\begin{equation}\label{T_augNEW1_19_PJD}
\text{$\overline Z_t$ is polynomial on $H(E)\times\R^e$.}
\end{equation}
Conversely, \eqref{T_augNEW1_19_PJD} implies \eqref{eq:T:aug:bYNew_19}, \eqref{eq:T:aug:aYYNew_19}, and \eqref{eq:T:aug:nuew_19} for $\bm\alpha=0$.\footnote{We conjecture that properties~\eqref{eq:T:aug:aXYNew_19} and \eqref{eq:T:aug:nuew_19} are not necessary for \eqref{T_augNEW1_19_PJD} to hold in general. However, we have not found a counterexample.}
\end{thm}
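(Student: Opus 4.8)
The plan is to reduce everything to the coefficient characterization of Lemma~\ref{LemPPchar}, applied to the transformed operator $\overline\Gcal$ on $H(E)\times\R^e$. First I would set up the change of variables exactly as in the proof of Theorem~\ref{T:exp}: applying It\^o's formula to $\bar f(\overline Z_t)=(\varphi^\ast \bar f)(Z_t)$ for bounded $C^2$ functions $\bar f$ on $\R^{N+e}$, noting that $\varphi^\ast \bar f$ is $C^2$ since $\varphi$ is polynomial, and localizing to handle its unboundedness and the nonlocal jump term. Using $\psi\circ\varphi=\id$ on $E\times\R^e$ this shows $\overline Z_t$ is an $H(E)\times\R^e$-valued jump-diffusion with extended generator $\psi^\ast\Gcal\varphi^\ast$, and it identifies its coefficients $(\bar a,\bar b,\bar\nu)$ in the form \eqref{eq:GZ_19}: the $\overline X$-block of the drift coincides with that of the polynomial process $\overline X=H(X)$ from Theorem~\ref{T:exp} while $\bar b^Y=L^\ast b^Y$; the diffusion is $\bar a=D\varphi\,a\,D\varphi^\top$ evaluated at $\psi(\cdot)$; and $\bar\nu(\bar x,\cdot)$ is the image of $\nu(L(\bar x),\cdot)$ under $(\xi,\eta)\mapsto(H(L(\bar x)+\xi)-\bar x,\eta)$.

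Before invoking Lemma~\ref{LemPPchar} I would verify well-definedness. Condition \eqref{DefPol1} for $\Gcal$ follows from $\|\zeta\|^k\lesssim \|\xi\|^k+\|\eta\|^k$ together with the fact that $\nu^X$ has moments of all orders (as $\Gcal^X$ is polynomial) and \eqref{DefPolaug1_19} for $\nu^Y$; the bound transfers to $\bar\nu$ because $H(L(\bar x)+\xi)-\bar x$ is a polynomial of degree $\le n$ in $\xi$. For \eqref{DefPol2}, I would write a polynomial $f$ vanishing on $E\times\R^e$ as $f(x,y)=\sum_{\bm\alpha}f_{\bm\alpha}(x)y^{\bm\alpha}$ with each $f_{\bm\alpha}$ vanishing on $E$, so that $\nabla_y f$ and $\nabla_y^2 f$ vanish on $E\times\R^e$. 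The pure $x$-contributions then assemble, for each fixed $y$, into $\Gcal^X$ applied to $f(\cdot,y)$, which vanishes because $\Gcal^X$ is well-defined on $\Pol(E)$ and the support of $\nu^X(x,\cdot)$ lies in $\{x+\xi\in E\}$ (a consequence of $X$ being $E$-valued). The one genuinely new point is the surviving cross term $\sum_{i,k}a^{XY}_{ik}\partial_{x_i}\partial_{y_k}f$: this vanishes because for any $p$ vanishing on $E$ the identity $p(X_t)\equiv 0$ forces $\langle p(X),Y^{(k)}\rangle\equiv 0$, i.e.\ $\nabla p^\top a^{XY}_{\cdot k}=0$ on $E$ (tangency of the cross-diffusion). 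Well-definedness of $\overline\Gcal$ then transfers along $\varphi^\ast$, since $\varphi^\ast$ maps polynomials vanishing on $H(E)\times\R^e$ to polynomials vanishing on $E\times\R^e$.

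The core is the forward implication, a degree count for the three conditions of Lemma~\ref{LemPPchar} carried out through the isomorphism $L^\ast:\Pol_{mn}(E)\to\Pol_m(H(E))$ of Lemma~\ref{lemPBHL}. The drift is automatic: $\bar b^{\overline X}_i=L^\ast(\Gcal^X h_i)\in\Pol_1(H(E))$ since $\Gcal^X h_i\in\Pol_n(E)$, and $\bar b^Y=L^\ast b^Y\in\Pol_1(H(E))$ by \eqref{eq:T:aug:bYNew_19}. For the second-order condition, the $\overline X\overline X$-block is the corresponding second-order coefficient of $\overline X=H(X)$, hence in $\Pol_2(H(E))$ by Theorem~\ref{T:exp}; the $YY$-block equals $L^\ast(a^{Y}+\int\eta\eta^\top\nu^Y)\in\Pol_2(H(E))$ by \eqref{eq:T:aug:aYYNew_19}; and the cross block is where the work concentrates. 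Here I would expand $h_i(x+\xi)-h_i(x)=\sum_{|\bm\alpha|\ge 1}\tfrac{1}{\bm\alpha!}\partial^{\bm\alpha}h_i(x)\xi^{\bm\alpha}$, so the combined cross coefficient becomes $\nabla h_i^\top(a^{XY}+\int\xi\eta^\top\nu)_{\cdot k}$ plus $\sum_{|\bm\alpha|\ge 2}\tfrac{1}{\bm\alpha!}\partial^{\bm\alpha}h_i\int\xi^{\bm\alpha}\eta_k\,\nu$; using $\deg\partial^{\bm\alpha}h_i\le n-|\bm\alpha|$, condition \eqref{eq:T:aug:aXYNew_19} for the linear term and \eqref{eq:T:aug:nuew_19} (total order $|\bm\alpha|+1\ge 3$) for the rest, each summand lands in $\Pol_{2n}(E)$, hence in $\Pol_2(H(E))$ after $L^\ast$. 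The same bookkeeping applied to $\prod_i(h_i(x+\xi)-h_i(x))^{\gamma_i}\,\eta^{\bm\gamma_Y}$, with the degree balance $(n|\bm\gamma_{\overline X}|-|\bm\alpha|)+(|\bm\alpha|+n|\bm\gamma_Y|)=n|\bm\gamma|$, verifies $\int\bar\zeta^{\bm\gamma}\bar\nu\in\Pol_{|\bm\gamma|}(H(E))$ for $|\bm\gamma|\ge 3$; the point guaranteeing that \eqref{eq:T:aug:nuew_19} always applies is that each factor contributes $|\bm\alpha|\ge|\bm\gamma_{\overline X}|$, so the invoked moment order $|\bm\alpha|+|\bm\gamma_Y|\ge|\bm\gamma|\ge 3$.

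For the converse I would run Lemma~\ref{LemPPchar} backwards on $\overline\Gcal$ and read off only the $Y$-data, which are precisely the quantities not entangled with the $\overline X$-jumps $\bar\xi_i=h_i(x+\xi)-h_i(x)$: the pieces $\bar b^Y=L^\ast b^Y$, the $YY$-block $L^\ast(a^Y+\int\eta\eta^\top\nu^Y)$, and the pure jump moments $\int\eta^{\bm\beta}\bar\nu=L^\ast\!\int\eta^{\bm\beta}\nu^Y$ lie in $\Pol_1,\Pol_2,\Pol_{|\bm\beta|}$ of $H(E)$, which via $H^\ast$ yield exactly \eqref{eq:T:aug:bYNew_19}, \eqref{eq:T:aug:aYYNew_19}, and \eqref{eq:T:aug:nuew_19} with $\bm\alpha=0$. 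I expect the main obstacle to be the bidegree bookkeeping in the forward direction: one must show that the non-polynomial jump displacement $H(x+\xi)-H(x)$, after integration against $\nu$, never produces a term of $\bar z$-degree exceeding the target, and it is exactly the balancing of the degree drop of $\partial^{\bm\alpha}h_i$ against the growth $|\bm\alpha|\mapsto |\bm\alpha|+n|\bm\beta|$ in \eqref{eq:T:aug:nuew_19} that makes the weighted degrees match. This same bookkeeping makes transparent why the mixed moments with $\bm\alpha\neq 0$ enter $\overline\Gcal$ only through such combinations, consistent with the footnote's conjecture that \eqref{eq:T:aug:aXYNew_19} and \eqref{eq:T:aug:nuew_19} for $\bm\alpha\neq 0$ are not necessary.
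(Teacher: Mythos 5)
Your overall architecture is sound, and your core computation is essentially the paper's, organized differently. The paper never computes the coefficients of $\overline\Gcal$: it first proves (Theorem~\ref{T:augNEW2_19}, via the isomorphism $\varphi^\ast:\Pol_m(H(E)\times\R^e)\to V_m$ of Lemma~\ref{lemPBHL2_19}) that \eqref{T_augNEW1_19_PJD} is equivalent to $\Gcal$ mapping each $V_m$ into itself, and then runs the bidegree bookkeeping directly on $\Gcal(x^{\bm\alpha}y^{\bm\beta})$ using the modified second characteristics. Your route---compute $(\bar a,\bar b,\bar\nu)$ from Lemma~\ref{L:JD transformNEW_2} and feed them into Lemma~\ref{LemPPchar} applied to $\overline\Gcal$---is a legitimate alternative, and the two observations you isolate (the degree drop of $\partial^{\bm\alpha}h_i$ balancing the growth $|\bm\alpha|\mapsto|\bm\alpha|+n|\bm\beta|$, and each factor of $H(x+\xi)-H(x)$ carrying $\xi$-degree at least one, so that only moment conditions of total order $\ge 3$ are ever invoked) are exactly the ones that close the paper's count as well. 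Your converse, reading off the $Y$-data from Lemma~\ref{LemPPchar} and pulling back by $H^\ast$, matches the paper's converse in substance.

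The genuine gap is in your verification of \eqref{DefPol2} for $\Gcal$ on $E\times\R^e$. Both of your justifications are path statements, not pointwise ones: that $\nu^X(x,\cdot)$ is supported on $\{\xi:x+\xi\in E\}$, and that $p(X)\equiv 0$ forces $\langle p(X),Y_k\rangle\equiv 0$, hence $\nabla p^\top a^{XY}_{\cdot k}=0$ ``on $E$.'' What the bracket argument actually yields is that these identities hold $dt\otimes d\P$-a.e.\ along the trajectory of $X$; it does not constrain $a(x)$ or $\nu(x,\cdot)$ at points of $E$ that the process never visits, or visits on a Lebesgue-null time set, whereas \eqref{DefPol2} is a requirement at every $x\in E$. (This is precisely why well-definedness on $\Pol(E)$ is a standing hypothesis on $\Gcal^X$ rather than a consequence of $X$ being $E$-valued.) Moreover, your tangency claim covers only the diffusion cross term; the jump cross term $\int(p(x+\xi)-p(x))(q(y+\eta)-q(y))\,\nu(x,d\xi\times d\eta)$ is left to the unsupported support claim. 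The paper's fix (Lemma~\ref{L:GZ well def_19}) is pointwise and algebraic: reduce to $f=p(x)q(y)$ with $p=0$ on $E$, write $\Gcal(pq)=p\,\Gcal q+q\,\Gcal p+\Gamma(p,q)$, note $\Gamma(p,p)=\Gamma^X(p,p)=0$ on $E$ by Lemma~\ref{lemGamma0} (a pointwise consequence of \eqref{DefPol2} for $\Gcal^X$), and kill $\Gamma(p,q)$---diffusion and jump parts simultaneously---by the Cauchy--Schwarz inequality for the positive semidefinite bilinear form $\Gamma$. Substituting this argument for your step closes the gap; the rest of your proof stands.
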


\begin{rem}
Since $\Gcal^X$ is polynomial on $E$, and thus well-defined on $\Pol(E)$, condition \eqref{DefPolaug1_19} is equivalent to
\[ \text{$  \int_{\R^{d+e}} \|\zeta\|^{k}\,\nu(x,d\zeta)<\infty$ for all $x\in E$ and all $k\ge 2$.}\]
\end{rem}

As an application of Theorem~\ref{T:augNEW1_19} we show how to construct large classes of polynomial jump-diffusions by specifying $Y_t$ in terms of the polynomial jump-diffusion $X_t$ on $E$.

\begin{cor}\label{corPQdX_19}
Let $e=e'+e''$ for some $e',e''\ge 0$, and consider the maps $P:E\to\R^{e'}$ and $Q:E\to\R^{e''\times d}$ with polynomial components, $P \in\Pol_n(E)$ and $Q \in\Pol_{n-1}(E)$. Then for
\[ dY_t = \begin{pmatrix}
  P(X_t)\,dt \\ Q(X_{t-})\,dX_t
\end{pmatrix}\]
the conditions \eqref{DefPolaug1_19}--\eqref{eq:T:aug:nuew_19} in Theorem~\ref{T:augNEW1_19} are satisfied, so that $\overline Z_t=(H(X_t),Y_t)$ is a polynomial jump-diffusion on $H(E)\times\R^e$.

For $n\ge 2$, this covers the quadratic co-variations, $d[X_i,X_j]_t=d(X_{i,t}X_{j,t}) - X_{i,t-}dX_{j,t} - X_{j,t-}dX_{i,t}$, and their predictable compensators, $\Gamma^X(x_i,x_j)(X_t)\,dt$, where $\Gamma^X$ denotes the carr\'e-du-champ operator related to $\Gcal^X$ (see Section~\ref{S:cdc}).
\end{cor}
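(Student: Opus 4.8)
The plan is to compute the coefficients $a(x)$, $b(x)$, $\nu(x,d\zeta)$ of the joint jump-diffusion $Z_t=(X_t,Y_t)$ explicitly from those of $X_t$ and the maps $P,Q$, and then to verify the hypotheses \eqref{DefPolaug1_19}--\eqref{eq:T:aug:nuew_19} of Theorem~\ref{T:augNEW1_19} one at a time. Writing $Y_t=(Y^{(1)}_t,Y^{(2)}_t)$ with $Y^{(1)}_t=\int_0^t P(X_s)\,ds$ and $Y^{(2)}_t=\int_0^t Q(X_{s-})\,dX_s$, the rules for stochastic integration against the special semimartingale $X_t$ give drift $b^Y=(P,\ Qb^X)$, while the $Y$- and $XY$-blocks of the diffusion matrix are $a^Y=\bigl(\begin{smallmatrix}0&0\\0&Qa^XQ^\top\end{smallmatrix}\bigr)$ and $a^{XY}=(0,\ a^XQ^\top)$. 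The key structural observation is that the joint jump measure $\nu(x,d\zeta)$ is the image of $\nu^X(x,d\xi)$ under $\xi\mapsto(\xi,0,Q(x)\xi)$: when $X$ jumps by $\xi$, the finite-variation component $Y^{(1)}$ does not jump, whereas $Y^{(2)}$ jumps by $Q(x)\xi$.

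With these coefficients in hand, each of \eqref{DefPolaug1_19}--\eqref{eq:T:aug:nuew_19} reduces to a degree count using Lemma~\ref{LemPPchar} applied to $X_t$. Condition \eqref{DefPolaug1_19} follows from $\|\eta\|=\|Q(x)\xi\|\le\|Q(x)\|\,\|\xi\|$ together with \eqref{DefPol1} for $\nu^X$. Conditions \eqref{eq:T:aug:bYNew_19}, \eqref{eq:T:aug:aYYNew_19}, \eqref{eq:T:aug:aXYNew_19} follow from $b^X\in\Pol_1(E)$ and $a^X+\int\xi\xi^\top\nu^X\in\Pol_2(E)$, multiplied by the appropriate powers of $Q\in\Pol_{n-1}(E)$: for instance the nontrivial block of $a^Y+\int\eta\eta^\top\nu^Y$ equals $Q\bigl(a^X+\int\xi\xi^\top\nu^X\bigr)Q^\top$, of degree at most $(n-1)+2+(n-1)=2n$. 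The delicate case is \eqref{eq:T:aug:nuew_19}. Since $\eta^{(1)}\equiv0$, only multi-indices with $\bm\beta$ supported on the $Y^{(2)}$-coordinates contribute, and there one expands the vector power $(Q(x)\xi)^{\bm\beta}=\sum_{|\bm\gamma|=|\bm\beta|}c_{\bm\gamma}(x)\,\xi^{\bm\gamma}$, a polynomial in $\xi$ of degree $|\bm\beta|$ whose coefficients $c_{\bm\gamma}$ have $x$-degree at most $(n-1)|\bm\beta|$. Integrating against $\nu^X$ gives $\sum_{\bm\gamma}c_{\bm\gamma}(x)\int\xi^{\bm\alpha+\bm\gamma}\nu^X(x,d\xi)$; because $|\bm\alpha+\bm\gamma|=|\bm\alpha|+|\bm\beta|\ge3$, Lemma~\ref{LemPPchar} gives $\int\xi^{\bm\alpha+\bm\gamma}\nu^X\in\Pol_{|\bm\alpha|+|\bm\beta|}(E)$, and the total degree is exactly $(n-1)|\bm\beta|+|\bm\alpha|+|\bm\beta|=|\bm\alpha|+n|\bm\beta|$. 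Theorem~\ref{T:augNEW1_19} then yields \eqref{T_augNEW1_19_PJD}.

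I expect the only real obstacle to be this exact degree bookkeeping in \eqref{eq:T:aug:nuew_19}: one must simultaneously track the $x$-degree contributed by the $|\bm\beta|$ factors of $Q$ and the $\xi$-degree $|\bm\alpha|+|\bm\beta|$ of the moment of $\nu^X$, and check that they add up to precisely the bound $|\bm\alpha|+n|\bm\beta|$ rather than merely bounding it crudely. The fact that $|\bm\alpha|+|\bm\beta|\ge3$ is exactly what is needed to invoke the clean degree statement of Lemma~\ref{LemPPchar} and avoid the degree-two subtlety in that lemma. For the final assertion, when $n\ge2$ the carr\'e-du-champ $\Gamma^X(x_i,x_j)=a^X_{ij}+\int\xi_i\xi_j\,\nu^X\in\Pol_2(E)\subseteq\Pol_n(E)$, so it is an admissible component of $P$, making $Y^{(1)}$ the predictable compensator $\int_0^t\Gamma^X(x_i,x_j)(X_s)\,ds$ of $[X_i,X_j]$. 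The covariation itself is recovered from the integration-by-parts identity $d[X_i,X_j]_t=d(X_{i,t}X_{j,t})-X_{i,t-}\,dX_{j,t}-X_{j,t-}\,dX_{i,t}$: the product $X_iX_j$ is a degree-two polynomial, hence a coordinate of $H(X_t)$ once $n\ge2$, while the two stochastic integrals are of the form $\int Q(X_{s-})\,dX_s$ with $Q$ linear, i.e.\ in $\Pol_{n-1}(E)$. Thus $[X_i,X_j]$ appears as a linear combination of coordinates of $\overline Z_t$.
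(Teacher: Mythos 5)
Your proposal is correct and follows essentially the same route as the paper's proof: both compute the joint characteristics of $Z_t=(X_t,Y_t)$ (the paper does this by writing $dZ_t=K(X_{t-})\,d(t,X_t)$ and citing Kallsen's Proposition~2, where you assert the same block formulas directly), both observe that the joint jump measure is the image of $\nu^X(x,d\xi)$ under $\xi\mapsto(\xi,0,Q(x)\xi)$, and both carry out the identical degree bookkeeping --- the expansion of $(Q(x)\xi)^{\bm\beta}$ into monomials $\xi^{\bm\gamma}$ with coefficients of $x$-degree at most $(n-1)|\bm\beta|$, the use of $|\bm\alpha|+|\bm\beta|\ge 3$ to invoke Lemma~\ref{LemPPchar}, and the grouping into modified second characteristics such as $Q\bigl(a^X+\int\xi\xi^\top\nu^X\bigr)Q^\top$ for \eqref{eq:T:aug:aYYNew_19} and \eqref{eq:T:aug:aXYNew_19}. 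The only notable difference is that you additionally spell out the final claim about quadratic co-variations and their compensators, which the paper's proof leaves implicit.
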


To compute conditional moments of $Y_T$ using the moment formula in Theorem~\ref{T:moments}, we must understand the structure of $\Pol_m(H(E)\times\R^e)$ and how $\overline\Gcal$ acts on it. To this end, we introduce the subspace $V_m\subseteq\Pol_{nm}(E\times\R^e)$ defined by
\begin{equation}\label{eq:VmNew_19}
V_m = \vspan\{ p(x)y^{\bm\beta} \colon p\in\Pol(E),\  \deg p \le n(m- |{\bm\beta}|),\ |{\bm\beta}| \le m\}.
\end{equation}
Extending Lemma~\ref{lemPBHL} we have the following result.
\begin{lem}\label{lemPBHL2_19}
For every $m\in\N$, the pullback $\varphi^\ast:\Pol_{m}(H(E)\times\R^e)\to V_m$ is a linear isomorphism with inverse $\psi^\ast$.
\end{lem}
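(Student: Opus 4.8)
The plan is to reduce the claim, grade by grade in the powers of the $y$-variables, to the isomorphism already established in Lemma~\ref{lemPBHL}. The key structural observation is that both pullbacks leave the $y$-variables untouched and act only on the first argument: since $\varphi(x,y)=(H(x),y)$ and $\psi(\overline x,y)=(L(\overline x),y)$, we have $\varphi^\ast\bigl(r(\overline x)y^{\bm\beta}\bigr)=(H^\ast r)(x)\,y^{\bm\beta}$ and $\psi^\ast\bigl(p(x)y^{\bm\beta}\bigr)=(L^\ast p)(\overline x)\,y^{\bm\beta}$. Both maps are manifestly linear, and they are well defined on the respective spaces of restrictions because $\varphi(E\times\R^e)\subseteq H(E)\times\R^e$ and $\psi(H(E)\times\R^e)\subseteq E\times\R^e$, so that a pullback depends only on the values of a polynomial on the image.

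First I would record the graded structure of the two spaces. Grouping a degree-$\le m$ polynomial on $H(E)\times\R^e$ by powers of $y$ shows that $\Pol_m(H(E)\times\R^e)=\vspan\{\,r(\overline x)\,y^{\bm\beta}:r\in\Pol_{m-|\bm\beta|}(H(E)),\ |\bm\beta|\le m\,\}$, while by the very definition~\eqref{eq:VmNew_19} one has $V_m=\vspan\{\,p(x)\,y^{\bm\beta}:p\in\Pol_{n(m-|\bm\beta|)}(E),\ |\bm\beta|\le m\,\}$. Both sums are direct: if $\sum_{\bm\beta}c_{\bm\beta}(\cdot)\,y^{\bm\beta}$ vanishes on $H(E)\times\R^e$ (resp.\ $E\times\R^e$), then fixing the first argument and varying $y\in\R^e$ forces every coefficient $c_{\bm\beta}$ to vanish. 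Thus each space is the internal direct sum of its $y^{\bm\beta}$-homogeneous pieces.

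Next I would analyse the action on a single graded piece. Applying Lemma~\ref{lemPBHL} with $m$ replaced by $m-|\bm\beta|$ gives $H^\ast r\in\Pol_{n(m-|\bm\beta|)}(E)$ for $r\in\Pol_{m-|\bm\beta|}(H(E))$, so $\varphi^\ast$ carries the $\bm\beta$-piece of $\Pol_m(H(E)\times\R^e)$ into the $\bm\beta$-piece of $V_m$; symmetrically $L^\ast p\in\Pol_{m-|\bm\beta|}(H(E))$ for $p\in\Pol_{n(m-|\bm\beta|)}(E)$, so $\psi^\ast$ carries $V_m$ into $\Pol_m(H(E)\times\R^e)$. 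Since both pullbacks respect the $y^{\bm\beta}$-grading, the direct-sum decomposition reduces everything to the fact that $H^\ast$ and $L^\ast$ are mutually inverse isomorphisms $\Pol_{m-|\bm\beta|}(H(E))\leftrightarrow\Pol_{n(m-|\bm\beta|)}(E)$ on each slice (the slice $|\bm\beta|=m$ being the trivial identity on constants). To assemble these into a global inverse I would invoke the relations $\psi\circ\varphi=\id$ on $E\times\R^e$ and $\varphi\circ\psi=\id$ on $H(E)\times\R^e$ (the latter since $H(L(\overline x))=\overline x$ for $\overline x\in H(E)$); contravariance of the pullback then yields $\psi^\ast\varphi^\ast=\id$ on $\Pol_m(H(E)\times\R^e)$ and $\varphi^\ast\psi^\ast=\id$ on $V_m$, which is the assertion. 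I do not expect a genuine obstacle here: the one point demanding care is the degree bookkeeping, namely that the degree shift $k\mapsto nk$ supplied by Lemma~\ref{lemPBHL} on each slice matches exactly the exponent $n(m-|\bm\beta|)$ built into the definition~\eqref{eq:VmNew_19} of $V_m$.
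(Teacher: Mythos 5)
Your proposal is correct and follows essentially the same route as the paper's proof: both reduce the claim to the $x$-variable isomorphism of Lemma~\ref{lemPBHL}, treating the $y$-monomials as inert, and both obtain invertibility from the composition identities $\psi\circ\varphi=\id$ on $E\times\R^e$ and $\varphi\circ\psi=\id$ on $H(E)\times\R^e$. The only difference is organizational: you decompose both spaces into $y^{\bm\beta}$-graded slices and invoke Lemma~\ref{lemPBHL} on each slice, whereas the paper checks the mapping property on monomials $\overline x^{\bm\alpha}y^{\bm\beta}$ directly and derives surjectivity from Lemma~\ref{L:degkn}, the ingredient underlying Lemma~\ref{lemPBHL}.
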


Here is our second main result of this section.

\begin{thm} \label{T:augNEW2_19}
Assume~\eqref{DefPolaug1_19}. Then either of the following statements is equivalent to \eqref{T_augNEW1_19_PJD}:
\begin{enumerate}

\item\label{T:augNEW12_19} The operator $\Gcal$ maps the space $V_m$ to itself for each $m\in\N$;

\item\label{T:augNEW13_19} $\Gcal(y^{\bm\beta})$ and $\Gamma(x^{\bm\alpha},y^{\bm\beta})$ lie in $V_m$ whenever $|\bm\alpha|\le n(m-|\bm\beta|)$ and $|\bm\beta|\le m$, where $\Gamma$ denotes the carr\'e-du-champ operator related to $\Gcal$ (see Section~\ref{S:cdc}).
\end{enumerate}
In either case, for every $m\in\N$, the following diagram commutes:
\begin{equation}\label{diagXX2_19}
\begin{tikzcd}
\Pol_m(H(E)\times\R^e) \arrow{r}{\overline\Gcal} \arrow[swap]{d}{\varphi^*}& \Pol_m(H(E)\times\R^e) \\
V_m \arrow{r}{\Gcal}  & V_m \arrow[swap]{u}{\psi^*}
\end{tikzcd}
\end{equation}
\end{thm}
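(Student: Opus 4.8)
The plan is to prove the chain \eqref{T_augNEW1_19_PJD} $\Leftrightarrow$ \ref{T:augNEW12_19} $\Leftrightarrow$ \ref{T:augNEW13_19} and to read off the commuting diagram~\eqref{diagXX2_19} along the way. Throughout I would rely on Theorem~\ref{T:augNEW1_19} and the standing assumption~\eqref{DefPolaug1_19}, by which both $\Gcal$ and $\overline\Gcal=\psi^\ast\Gcal\varphi^\ast$ are well-defined linear operators on $\Pol(E\times\R^e)$ and $\Pol(H(E)\times\R^e)$, so that only the degree-preserving property is ever in question and all jump integrals below converge. The equivalence \eqref{T_augNEW1_19_PJD} $\Leftrightarrow$ \ref{T:augNEW12_19} I would obtain by transport of structure through Lemma~\ref{lemPBHL2_19}: since $\varphi^\ast\colon\Pol_m(H(E)\times\R^e)\to V_m$ is an isomorphism with inverse $\psi^\ast$, and since $\psi\circ\varphi=\id$ on $E\times\R^e$ yields $\varphi^\ast\psi^\ast=\id$ for functions on $E\times\R^e$, the argument runs as follows. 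If $\Gcal(V_m)\subseteq V_m$, then $\overline\Gcal(\Pol_m(H(E)\times\R^e))=\psi^\ast\Gcal(V_m)\subseteq\psi^\ast(V_m)=\Pol_m(H(E)\times\R^e)$, which is \eqref{T_augNEW1_19_PJD}. Conversely, given \eqref{T_augNEW1_19_PJD} and $g\in V_m$, I write $g=\varphi^\ast\bar f$ and obtain $\psi^\ast\Gcal g=\overline\Gcal\bar f\in\Pol_m(H(E)\times\R^e)$, whence $\Gcal g=\varphi^\ast\psi^\ast\Gcal g\in\varphi^\ast(\Pol_m(H(E)\times\R^e))=V_m$, which is \ref{T:augNEW12_19}. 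Once either side holds, the identity $\overline\Gcal=\psi^\ast\Gcal\varphi^\ast$ restricted to $\Pol_m(H(E)\times\R^e)$ is precisely the commutativity of~\eqref{diagXX2_19}.

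For the equivalence \ref{T:augNEW12_19} $\Leftrightarrow$ \ref{T:augNEW13_19} the tools I would use are the carr\'e-du-champ product rule $\Gcal(fg)=f\Gcal g+g\Gcal f+\Gamma(f,g)$ and careful degree bookkeeping, organized by three observations. First, a function of $x$ alone is acted on by $\Gcal$ exactly as by the generator $\Gcal^X$ of $X$, so $\Gcal(x^{\bm\alpha})=\Gcal^X(x^{\bm\alpha})\in\Pol_{|\bm\alpha|}(E)$ because $X$ is polynomial. Second, as $y$ ranges over all of $\R^e$ the monomials $y^{\bm\delta}$ are linearly independent over $\Pol(E)$, so every $g$ decomposes uniquely as $g=\sum_{\bm\delta}c_{\bm\delta}(x)\,y^{\bm\delta}$, and by~\eqref{eq:VmNew_19} one has $g\in V_m$ if and only if $\deg c_{\bm\delta}\le n(m-|\bm\delta|)$ for all $\bm\delta$. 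Third, whenever $\Gcal(y^{\bm\beta})\in V_{|\bm\beta|}$, multiplying by $x^{\bm\alpha}$ with $|\bm\alpha|\le n(m-|\bm\beta|)$ takes the coefficient of $y^{\bm\delta}$ to $x$-degree at most $|\bm\alpha|+n(|\bm\beta|-|\bm\delta|)\le n(m-|\bm\delta|)$, so $x^{\bm\alpha}\Gcal(y^{\bm\beta})\in V_m$. Granting these, \ref{T:augNEW12_19}$\Rightarrow$\ref{T:augNEW13_19} follows by noting that $y^{\bm\beta}\in V_{|\bm\beta|}$, so \ref{T:augNEW12_19} gives $\Gcal(y^{\bm\beta})\in V_{|\bm\beta|}\subseteq V_m$, and that in $\Gamma(x^{\bm\alpha},y^{\bm\beta})=\Gcal(x^{\bm\alpha}y^{\bm\beta})-x^{\bm\alpha}\Gcal(y^{\bm\beta})-y^{\bm\beta}\Gcal(x^{\bm\alpha})$ the three terms lie in $V_m$ by \ref{T:augNEW12_19}, by the third observation, and by the first, respectively. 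For \ref{T:augNEW13_19}$\Rightarrow$\ref{T:augNEW12_19} I would use that $V_m$ is spanned by the monomials $x^{\bm\alpha}y^{\bm\beta}$ with $|\bm\alpha|\le n(m-|\bm\beta|)$ and $|\bm\beta|\le m$, and expand $\Gcal(x^{\bm\alpha}y^{\bm\beta})=x^{\bm\alpha}\Gcal(y^{\bm\beta})+y^{\bm\beta}\Gcal(x^{\bm\alpha})+\Gamma(x^{\bm\alpha},y^{\bm\beta})$; the third summand is in $V_m$ by \ref{T:augNEW13_19}, the second by the first observation, and the first by the third observation, using that \ref{T:augNEW13_19} with $m=|\bm\beta|$ forces $\Gcal(y^{\bm\beta})\in V_{|\bm\beta|}$.

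The main obstacle is exactly the third observation and the degree cancellation it encodes: invariance of $V_m$ is not inherited from the weak membership $\Gcal(y^{\bm\beta})\in V_m$, but needs the sharp membership $\Gcal(y^{\bm\beta})\in V_{|\bm\beta|}$, so that the budget $n(m-|\bm\beta|)$ available for the factor $x^{\bm\alpha}$ is exactly enough to remain inside $V_m$---this is precisely why $V_m$ is defined in~\eqref{eq:VmNew_19} with the constraint $\deg p\le n(m-|\bm\beta|)$. Checking that this cancellation holds uniformly across the drift, diffusion, and all higher jump contributions to $\Gcal(y^{\bm\beta})$, as well as the cross terms in $\Gamma(x^{\bm\alpha},y^{\bm\beta})$, is the one genuinely delicate point; the product rule, the reduction of $\Gcal$ on $x$-functions to $\Gcal^X$, and the convergence of the jump integrals are all routine given~\eqref{DefPolaug1_19} and the polynomial property of $X$.
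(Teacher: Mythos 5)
Your proposal is correct and essentially reproduces the paper's own proof: you obtain the equivalence of \eqref{T_augNEW1_19_PJD} with \ref{T:augNEW12_19} by transporting through the isomorphism of Lemma~\ref{lemPBHL2_19} and the identity $\overline\Gcal=\psi^\ast\Gcal\varphi^\ast$ (which also gives the commuting diagram), and the equivalence \ref{T:augNEW12_19}$\Leftrightarrow$\ref{T:augNEW13_19} via the product rule $\Gcal(x^{\bm\alpha}y^{\bm\beta})=y^{\bm\beta}\Gcal(x^{\bm\alpha})+x^{\bm\alpha}\Gcal(y^{\bm\beta})+\Gamma(x^{\bm\alpha},y^{\bm\beta})$ together with your ``third observation,'' which is exactly the paper's key step~\eqref{T:augNEW1_pf_2}, with the same degree computation and the same reliance on the sharp membership $\Gcal(y^{\bm\beta})\in V_{|\bm\beta|}$ (the paper extracts it from $\Gamma(1,y^{\bm\beta})=0$, you from applying \ref{T:augNEW12_19} to $y^{\bm\beta}\in V_{|\bm\beta|}$, a cosmetic difference). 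One caution on citation: the well-definedness of $\Gcal$ and $\overline\Gcal$ should be drawn from Lemma~\ref{L:GZ well def_19}, which requires only \eqref{DefPolaug1_19}, rather than from Theorem~\ref{T:augNEW1_19} as you state, since the paper proves that theorem using the present one and invoking it here would be circular.
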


\begin{rem}
Note that for $n=1$ and $H(x)=x$ we have $\overline Z_t=Z_t$ and $V_m=\Pol_{m}(E\times\R^e)$, in which case Theorem~\ref{T:augNEW2_19} simply recovers the definition of $Z_t$ being polynomial on $E\times\R^e$.
\end{rem}

As an application of Theorem~\ref{T:augNEW2_19} we can infer the action of $\overline \Gcal$ on $\Pol_m(H(E)\times\R^e)$. This allows us to compute conditional moments of $Y_T$ using the moment formula in Theorem~\ref{T:moments}. Assume \eqref{DefPolaug1_19} and \eqref{T_augNEW1_19_PJD}. Let $1+ \overline N= \dim\Pol_{mn}(E)$ and extend the basis of $\Pol_n(E)$ to a basis of $\Pol_{mn}(E)$ as in \eqref{basisHE}. This induces a basis of $V_m$ of the form
\[ v_i(x,y) =  h_j(x)y^{\bm \beta},\quad   \deg h_j \le n(m-|\bm\beta|),\quad |\bm\beta|\le m , \quad i=0,\dots, M \]
where $1+ M=\dim V_m$. In view of the commuting diagram~\eqref{diagXX2_19} this induces a basis $\overline v_i = \psi^\ast v_i$, $i=0,\dots,  M$, of $\Pol_m(H(E)\times\R^e)$. Let $G$ be the $(1+ M)\times (1+ M)$ matrix representing $\Gcal$ on $V_m$ according to \eqref{CS1}--\eqref{CS3} with $v_i$ in lieu of $h_i$, which can be determined using symbolic calculus applied to $\Gcal v_i(x,y)$ for $v_i(x,y)=  h_j(x)y^{\bm \beta}$. Then $ G$ is the matrix representing $\overline\Gcal$ on $\Pol_m(H(E)\times\R^e)$, and Theorem~\ref{T:moments} can readily be applied to compute all conditional moments of $\overline Z_T$ up to order $m$.

The following corollary is useful for applications because it helps to reduce the dimension for moment computations. For example, if we only need the conditional moments of $Y_{1T}$, this does not involve the remaining components $Y_{it}$ for $i\neq 1$.
\begin{cor}\label{cordimred_19}
Assume \eqref{DefPolaug1_19} and \eqref{T_augNEW1_19_PJD}. Let $P:\R^e\to \R^{e'}$ be a linear map, for some $e'\in\N$. Then $Z_t'=(X_t, PY_t)$ satisfies \eqref{DefPolaug1_19} and \eqref{T_augNEW1_19_PJD} in lieu of $Z_t=(X_t,Y_t)$ with dimension $e$ replaced by $e'$.
\end{cor}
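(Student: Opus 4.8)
The plan is to run everything through the $V_m$-characterization of Theorem~\ref{T:augNEW2_19} rather than to verify the coefficient conditions \eqref{eq:T:aug:bYNew_19}--\eqref{eq:T:aug:nuew_19} directly for $Z'_t$; the latter route is blocked, since \eqref{T_augNEW1_19_PJD} is only known to imply \eqref{eq:T:aug:bYNew_19}, \eqref{eq:T:aug:aYYNew_19}, and \eqref{eq:T:aug:nuew_19} for $\bm\alpha=0$, while the remaining conditions are unavailable (and conjecturally unnecessary). First I would dispose of the integrability requirement \eqref{DefPolaug1_19} for $Z'_t$. Writing $Y'_t=PY_t$, the jumps of $Y'$ are $P\eta$, so its marginal jump measure $\nu^{Y'}(x,\cdot)$ is the image of $\nu^Y(x,\cdot)$ under $\eta\mapsto P\eta$; hence $\int\|\eta'\|^k\,\nu^{Y'}(x,d\eta')=\int\|P\eta\|^k\,\nu^Y(x,d\eta)\le\|P\|^k\int\|\eta\|^k\,\nu^Y(x,d\eta)<\infty$ by \eqref{DefPolaug1_19}. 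Applying the first part of Theorem~\ref{T:augNEW1_19} to $Z'$, the operator $\Gcal'$ is then well-defined on $\Pol(E\times\R^{e'})$, so $\Gcal'p$ is a genuine measurable function for each polynomial $p$, and Theorem~\ref{T:augNEW2_19} becomes applicable to $Z'$.

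The technical heart is an intertwining identity. Set $A=\left(\begin{smallmatrix}I_d&0\\0&P\end{smallmatrix}\right)$ and $\pi(z)=Az$, so that $Z'=AZ$. Transforming the semimartingale characteristics under the linear map $A$, with the identity truncation on both sides so that no truncation correction enters the drift, identifies the coefficients of $Z'$ as $a'=AaA^\top$, $b'=Ab$, and $\nu'(x,\cdot)$ equal to the image of $\nu(x,\cdot)$ under $A$ (restricted to nonzero jumps). Substituting $\nabla(g\circ\pi)=A^\top\,(\nabla g)\circ\pi$ and $\nabla^2(g\circ\pi)=A^\top\,((\nabla^2 g)\circ\pi)\,A$ into \eqref{eq:GZ_19}, and using the change of variables $\int F(A\zeta)\,\nu(x,d\zeta)=\int F(\zeta')\,\nu'(x,d\zeta')$ in the jump term (legitimate since the integrand vanishes at the origin), I obtain
\[
\Gcal(\pi^\ast g)=\pi^\ast(\Gcal' g)\qquad\text{on }E\times\R^e
\]
for every $g\in\Pol(E\times\R^{e'})$.

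Finally I would combine this with the spaces $V_m$ and their analogues $V'_m$ from \eqref{eq:VmNew_19}. A degree count shows $\pi^\ast(V'_m)\subseteq V_m$, since $\pi^\ast\!\bigl(p(x)(y')^{\bm\gamma}\bigr)=p(x)(Py)^{\bm\gamma}$ is $p(x)$ times a homogeneous polynomial of degree $|\bm\gamma|$ in $y$. By the standing hypotheses \eqref{DefPolaug1_19} and \eqref{T_augNEW1_19_PJD} for $Z$, Theorem~\ref{T:augNEW2_19} gives $\Gcal(V_m)\subseteq V_m$, so for $v'\in V'_m$ we get $\pi^\ast(\Gcal' v')=\Gcal(\pi^\ast v')\in V_m$. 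The one delicate point is to pass from this back to $\Gcal' v'\in V'_m$, because $\pi^\ast$ is not injective when $P$ fails to be surjective — this is the main obstacle. For the dimension-reducing projections that motivate the corollary (e.g.\ $PY=Y_1$) the map $P$ is surjective, and then I would pick a linear right inverse $P^+$ with $PP^+=I_{e'}$, set $\sigma(x,y')=(x,P^+y')$ so that $\pi\circ\sigma=\mathrm{id}$ and $\sigma^\ast\pi^\ast=\mathrm{id}$, and note that the same degree count yields $\sigma^\ast(V_m)\subseteq V'_m$. Thus $\Gcal' v'=\sigma^\ast\pi^\ast(\Gcal' v')=\sigma^\ast\bigl(\Gcal(\pi^\ast v')\bigr)\in V'_m$, giving $\Gcal'(V'_m)\subseteq V'_m$ for all $m$, whence \eqref{T_augNEW1_19_PJD} for $Z'$ by Theorem~\ref{T:augNEW2_19}. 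A general linear $P$ reduces to this case by factoring it through its range as a surjection followed by an injection, the injection being an invertible linear change of coordinates composed with the zero-padding $y_0\mapsto(y_0,0)$; each factor preserves the polynomial property by the displayed intertwining argument, so the conclusion follows in general.
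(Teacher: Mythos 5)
Your $V_m$-based strategy is sound in its core case: the intertwining $\Gcal(\pi^\ast g)=\pi^\ast(\Gcal' g)$, the inclusion $\pi^\ast(V_m')\subseteq V_m$, and the right-inverse device $\sigma^\ast\pi^\ast=\id$ together correctly prove the corollary whenever $P$ is surjective (which covers the motivating coordinate projections). The gap is in your final reduction for general $P$. You factor $P$ as a surjection followed by an injection, the injection being an invertible map composed with the zero-padding $J:y_0\mapsto(y_0,0)$, and assert that ``each factor preserves the polynomial property by the displayed intertwining argument.'' But $J$ is itself non-surjective: its pullback sends $g$ to $g(\cdot,\cdot,0)$, which has a large kernel, and no right inverse $J^+$ with $JJ^+=\id_{\R^{e'}}$ exists. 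So the displayed argument tells you only what $\Gcal''g$ is on the slice $\{y_1=0\}$, while the polynomial property concerns $\Gcal''g$ as a function on all of $E\times\R^{e'}$. In other words, the factorization reproduces, rather than removes, exactly the obstacle you yourself identified as the main point, and that step is unjustified as written. It is fillable, but by a different mechanism: since the coefficients of the padded generator do not depend on $y$ and the padded components never move, one has $(\Gcal'' g)(x,y_0,y_1)=\bigl(\Gcal_0' g(\cdot,\cdot,y_1)\bigr)(x,y_0)$ with $y_1$ an inert parameter; expanding $g=\sum_{\bm\gamma}g_{\bm\gamma}(x,y_0)\,y_1^{\bm\gamma}$, each $g_{\bm\gamma}$ lies in the $V$-space of level $m-|\bm\gamma|$ for the lower-dimensional system, and applying its polynomial property termwise gives $\Gcal''(V_m'')\subseteq V_m''$.

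It is also worth correcting your opening premise that the coefficient route is ``blocked,'' because the paper's proof is precisely that route and it handles arbitrary linear $P$ with no case distinction. What is blocked is verifying the sufficient conditions \eqref{eq:T:aug:bYNew_19}--\eqref{eq:T:aug:nuew_19} of Theorem~\ref{T:augNEW1_19} for the unaugmented $Z'$; the paper instead applies the two-sided characterization Lemma~\ref{LemPPchar} to the \emph{augmented} process. By hypothesis \eqref{T_augNEW1_19_PJD}, $(H(X_t),Y_t)$ is polynomial, so Lemma~\ref{LemPPchar}, direction \ref{LemPPchar1}$\Rightarrow$\ref{LemPPchar2}, gives the degree bounds for its coefficients $(\overline a,\overline b,\overline\nu)$. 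The process $(H(X_t),PY_t)$ is the image of $(H(X_t),Y_t)$ under $B=\mathrm{diag}(I,P)$, hence — by the modification of Lemma~\ref{L:JD transformNEW_2}, where no invertibility is needed because the coefficients depend only on $\overline x$, which is trivially read off the new state — a jump-diffusion with coefficients $B\overline aB^\top$, $B\overline b$, $B_\ast\overline\nu$. These manifestly satisfy the same degree bounds, since $\int(B\zeta)^{\bm\alpha}\overline\nu(\cdot,d\zeta)$ is a linear combination of moments of order $|\bm\alpha|$, and Lemma~\ref{LemPPchar}, direction \ref{LemPPchar2}$\Rightarrow$\ref{LemPPchar1}, then yields \eqref{T_augNEW1_19_PJD} for $Z'$: no pullback ever has to be inverted, so surjectivity of $P$ never enters. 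If you wish to keep your $V_m$ architecture, repair the padding step with the inert-variable argument above; otherwise the coefficient argument closes the general case in one stroke.
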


\section{L\'evy Time Change}\label{sectimechange}

The class of polynomial jump-diffusions is shown to be invariant under time change by a L\'evy subordinator. Let $X_t$ be a polynomial jump-diffusion on $E\subseteq\R^d$ with extended generator $\Gcal$. Let $Z_t$ be an independent nondecreasing L\'evy process (subordinator) with L\'evy measure $\nu^Z(d\zeta)$ and drift $b^Z\ge0$ so that its generator is
\[
\Gcal^Z f(z) = b^Z f'(z) + \int_0^\infty \left( f(z+\zeta) - f(z) \right) \nu^Z(d\zeta).
\]
The law of $Z_t$ is denoted by $\mu^t(dz)$. A heuristic argument suggests that the time-changed process $\widetilde X_t = X_{Z_t}$ is again a polynomial jump-diffusion on $E$. Indeed, the moment formula Theorem~\ref{T:moments}, the independence of $X_t$ and $Z_t$, and the L\'evy property of $Z_t$ give, for any polynomial $f(x)=(1,H(x)^\top) \vec f$ in $\Pol_n(E)$,
\begin{align*}
 \E[ f(\widetilde X_T) \mid \widetilde X_t ] &=  \E[ \E[ f(X_{Z_T}) \mid   X_{Z_t},Z_t,Z_T ] \mid X_{Z_t} ]=\E[(1, H(X_{Z_t})^\top)   \e^{(Z_T-Z_t)G}  \vec f \mid X_{Z_t} ]  \\
 &=(1, H(X_{Z_t})^\top) \int_0^\infty \e^{zG} \mu^{T-t}(dz)\vec f =(1, H(\widetilde X_t)^\top) \e^{(T-t)\widetilde G}\vec f,
\end{align*}
where the matrix $\widetilde G$ is given in \eqref{eqGGtilde} below, subject to $\mu^t(dz)$-integrability conditions. Hence $\widetilde X_t$ satisfies the moment formula. However, it turns out to be surprisingly difficult, if not impossible, to prove without any further assumptions that $\widetilde X_t$ is a jump-diffusion.\footnote{It is straightforward to prove that $\widetilde X_t$ is a semimartingale, but it is not clear that its jump characteristic is a function of the current state only, as required for jump-diffusions. What is more, the drift and jump characteristics in \eqref{sub1:btilde} and \eqref{sub1:nutilde} are given in terms of the Markov transition kernel of $X_t$.} Assuming Markovianity, we can prove the following result.

\begin{thm} \label{T:sub}
Assume that $X_t$ is a Feller process with transition kernel $p_t(x,dy)$, the domain of its generator contains $C^\infty_c(E)$, and the generator coincides with $\Gcal$ on $C^\infty_c(E)$. Assume also that the L\'evy measure $\nu^Z(d\zeta)$ admits exponential moments,
\begin{equation} \label{eq:nuZ exp moments}
\int_1^\infty e^{\zeta \lambda}  \nu^Z(d\zeta)<\infty,
\end{equation}
for any $\lambda$ from the set of real parts of eigenvalues of $\Gcal$ restricted to $\Pol(E)$. Then the time-changed process $\widetilde X_t = X_{Z_t}$ is a polynomial jump-diffusion on $E$ and a Feller process with transition kernel
\begin{equation}\label{deftildept}
\widetilde p_t(x,dy) = \int_0^\infty  p_z(x,dy)\mu^t(dz)
\end{equation}
with respect to the usual augmentation $\widetilde\Fcal_t$ of its natural filtration. Its extended generator
\begin{equation} \label{eq:sub1_Gtilde}
\begin{aligned}
  \widetilde\Gcal f(x) &= \frac{1}{2}\tr(\widetilde a(x)\nabla^2 f(x)) + \widetilde b(x)^\top \nabla f(x) \\
&\quad + \int_{\R^d} \left( f(x+\xi) - f(x) - \xi^\top \nabla f(x)\right)\widetilde\nu(x,d\xi)
\end{aligned}
\end{equation}
is given by
\begin{align}
\widetilde a(x) &= b^Z a(x), \label{sub1:atilde} \\
\widetilde b(x) &= b^Z b(x) + \int_0^\infty \int_E (y-x) p_\zeta(x,dy) \nu^Z(d\zeta), \label{sub1:btilde} \\
\widetilde\nu(x,d\xi) &= b^Z \nu(x,d\xi)+ \int_0^\infty  1_{\{\xi\neq 0\}} p_\zeta(x,x+d\xi)  \nu^Z(d\zeta). \label{sub1:nutilde}
\end{align}
The matrix representations $G$ and $\widetilde G$ of $\Gcal$ and $\widetilde\Gcal$ on $\Pol_n(E)$ are related by
\begin{equation}\label{eqGGtilde}
 \widetilde G = b^Z G + \int_0^\infty (\e^{\zeta G} - \id) \nu^Z(d\zeta)\quad\text{and}\quad \e^{t\widetilde G} = \int_0^\infty \e^{z G}\mu^{t}(dz).
\end{equation}

\end{thm}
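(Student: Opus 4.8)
The plan is to build on the classical Bochner--Phillips subordination theory and then translate everything into the finite-dimensional matrix picture supplied by Theorem~\ref{T:moments}. Write $P_t$ for the Feller semigroup of $X_t$ and $A$ for its generator, so that $Af=\Gcal f$ for $f\in C^\infty_c(E)\subseteq\mathrm{Dom}(A)$. Conditioning on the whole path of $Z$ and using the independence of $X$ and $Z$ gives, for bounded measurable $f$,
\[
\E[f(\widetilde X_t)\mid X_0=x] = \E\big[P_{Z_t}f(x)\big] = \int_0^\infty P_z f(x)\,\mu^t(dz),
\]
which is exactly the kernel \eqref{deftildept}; in particular $\widetilde X_t$ is Markov with the stated transition function. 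Phillips' theorem then shows that $\widetilde P_t=\int_0^\infty P_z\,\mu^t(dz)$ is again a Feller semigroup, that $\mathrm{Dom}(A)$ is a core for its generator $\widetilde A$, and that on $\mathrm{Dom}(A)$
\[
\widetilde A f = b^Z A f + \int_0^\infty \big(P_\zeta f - f\big)\,\nu^Z(d\zeta).
\]

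Next I would read off the coefficients \eqref{sub1:atilde}--\eqref{sub1:nutilde} from the $C^\infty_c(E)$ action, where $Af=\Gcal f$. Writing $P_\zeta f(x)-f(x)=\int_E (f(y)-f(x))\,p_\zeta(x,dy)$ and splitting $f(y)-f(x)=\big(f(y)-f(x)-(y-x)^\top\nabla f(x)\big)+(y-x)^\top\nabla f(x)$, the operator $\int_0^\infty(P_\zeta f-f)\,\nu^Z(d\zeta)$ decomposes into a compensated pure-jump operator with jump kernel $\int_0^\infty 1_{\{\xi\neq0\}}\,p_\zeta(x,x+d\xi)\,\nu^Z(d\zeta)$ and a first-order term with coefficient $\int_0^\infty\int_E(y-x)\,p_\zeta(x,dy)\,\nu^Z(d\zeta)$. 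Since this operator carries no second-order part, the whole diffusion matrix of $\widetilde\Gcal$ comes from $b^Z\Gcal$, giving $\widetilde a=b^Z a$, while collecting the remaining contributions yields $\widetilde b$ and $\widetilde\nu$ exactly as stated. One then has to check that the extra drift is finite and that $\widetilde\nu$ integrates $\|\xi\|\wedge\|\xi\|^2$ and all higher moments demanded by \eqref{DefPol1}; these follow from moment bounds on $p_\zeta(x,\cdot)$ supplied by Theorem~\ref{T:moments} together with the exponential moment hypothesis \eqref{eq:nuZ exp moments}. With the coefficients in hand, standard Feller/martingale-problem theory confirms that $\widetilde X_t$ is a genuine jump-diffusion in the sense of \eqref{eq:f-Gf loc mg_NEW}.

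Now for the polynomial property and the first identity in \eqref{eqGGtilde}. Although a polynomial $f$ need not lie in the Feller domain, the operator identity $\widetilde\Gcal f = b^Z\Gcal f + \int_0^\infty(P_\zeta f - f)\,\nu^Z(d\zeta)$ still holds for every $f\in\Pol_n(E)$: both sides are finite and can be reassembled from \eqref{sub1:atilde}--\eqref{sub1:nutilde} by reversing the decomposition above, using that $P_\zeta f(x)=\E[f(X_\zeta)\mid X_0=x]$ is finite for all $\zeta$ by Theorem~\ref{T:moments}. That same theorem represents the restriction of $P_z$ to $\Pol_n(E)$ in the coordinates \eqref{CS3} by $\e^{zG}$, so passing to coordinates gives
\[
\vec{\widetilde\Gcal f} = \Big(b^Z G + \int_0^\infty (\e^{\zeta G}-\id)\,\nu^Z(d\zeta)\Big)\vec f, \qquad \vec f\in\R^{1+N}.
\]
Hence $\widetilde\Gcal$ maps $\Pol_n(E)$ into itself and is represented by the matrix $\widetilde G$ of \eqref{eqGGtilde}. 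The crux is the convergence of $\int_0^\infty(\e^{\zeta G}-\id)\,\nu^Z(d\zeta)$: near $\zeta=0$ one has $\e^{\zeta G}-\id=O(\zeta)$, integrable because $\int_0^1\zeta\,\nu^Z(d\zeta)<\infty$ for a subordinator, whereas for large $\zeta$ the entries of $\e^{\zeta G}$ grow like $q(\zeta)\e^{\zeta\lambda}$ with $\lambda$ ranging over the real parts of the eigenvalues of $G$, which is precisely what \eqref{eq:nuZ exp moments} is designed to control. I expect this estimate --- in particular taming the polynomial factor $q(\zeta)$ produced by non-trivial Jordan blocks at the maximal eigenvalue real parts --- to be the main technical obstacle of the proof.

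Finally, for the second identity in \eqref{eqGGtilde} I would argue as follows. Having established that $\widetilde X_t$ is a polynomial jump-diffusion with matrix $\widetilde G$, the moment formula of Theorem~\ref{T:moments} applied to $\widetilde X$ gives $\E[f(\widetilde X_t)\mid\widetilde\Fcal_0]=(1,H(\widetilde X_0)^\top)\,\e^{t\widetilde G}\vec f$ for every $f\in\Pol_n(E)$, while the subordination representation $\widetilde P_t=\int_0^\infty P_z\,\mu^t(dz)$ evaluated on $\Pol_n(E)$ yields the same conditional expectation as $(1,H(\widetilde X_0)^\top)\big(\int_0^\infty \e^{zG}\,\mu^t(dz)\big)\vec f$. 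Since these agree for all $f$ and all initial states, $\e^{t\widetilde G}=\int_0^\infty \e^{zG}\,\mu^t(dz)$. Alternatively, and without invoking Theorem~\ref{T:moments} a second time, one can set $M(t)=\int_0^\infty\e^{zG}\,\mu^t(dz)$, use $\mu^{s+t}=\mu^s\ast\mu^t$ and the stationary independent increments of $Z$ to obtain the semigroup relation $M(s+t)=M(s)M(t)$ with $M(0)=\id$, and then compute $M'(0)=\widetilde G$ from the Lévy--Khintchine structure of $Z$, whence $M(t)=\e^{t\widetilde G}$.
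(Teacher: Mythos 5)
Your route is essentially the paper's own: Phillips' theorem for the subordinated Feller semigroup, Theorem~\ref{T:moments} to represent $P_\zeta$ on $\Pol_n(E)$ by $\e^{\zeta G}$, and the same semigroup argument ($M(s+t)=M(s)M(t)$, $M'(0)=\widetilde G$) for the second identity in \eqref{eqGGtilde}. There is, however, one genuine gap, plus one point where your instinct is actually sharper than the paper's treatment.

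The gap is the sentence ``standard Feller/martingale-problem theory confirms that $\widetilde X_t$ is a genuine jump-diffusion in the sense of \eqref{eq:f-Gf loc mg_NEW}''. This dismisses precisely the step the paper flags as the hard one (see its footnote preceding the theorem: proving the jump-diffusion property is ``surprisingly difficult''). The definition of extended generator used here requires that $f(\widetilde X_t)-f(\widetilde X_0)-\int_0^t\widetilde\Gcal f(\widetilde X_s)\,ds$ be a local martingale for \emph{every bounded $C^2$ function} $f$ on $\R^d$, and such functions generally lie outside the domain of the Feller generator of $\widetilde X_t$; Phillips' theorem and Dynkin's formula only produce martingales for functions in that domain, in particular for $C^\infty_c(E)$. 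The paper bridges this with a dedicated argument: it first proves that $\widetilde\Gcal$ is polynomial on $E$, so that Lemma~\ref{LemPPchar} makes the moments $\int\|\xi\|^k\,\widetilde\nu(x,d\xi)$ locally bounded; this is then used in Lemma~\ref{L:Gf loc_bdd} (so that $\widetilde\Gcal f$ is locally bounded for smooth $f$ of polynomial growth and the candidate local martingale is well defined) and Lemma~\ref{L:Gf Gfn} (cutoff approximations $f_n\in C^\infty_c(\R^d)$ with $f_n\to f$ and $\widetilde\Gcal f_n\to\widetilde\Gcal f$ locally uniformly), and finally a stopping and $L^1$-convergence argument upgrades the martingales associated with the $f_n$ to the required local martingale for $f$. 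Two consequences for your plan: (i) the polynomial property of $\widetilde\Gcal$ must be established \emph{before} the jump-diffusion property, the reverse of your ordering, since the approximation step feeds on it; (ii) before invoking any martingale characterization you also need the Markov property of $\widetilde X_t$ with respect to the augmented filtration $\widetilde\Fcal_t$, which the paper obtains from the Feller property via Revuz--Yor.

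On what you call the ``main technical obstacle'': you are right, and the paper does not actually resolve it. The paper disposes of the integrability of $\int_0^\infty(\e^{\zeta G}-\id)\,\nu^Z(d\zeta)$ (and of the coefficients \eqref{sub1:btilde}--\eqref{sub1:nutilde}, which involve the same quantities entrywise) by appealing to Remark~\ref{remLK}, i.e.\ Sato's equivalence of \eqref{eq:nuZ exp moments} with $\E[\e^{\lambda Z_t}]<\infty$. But exponential moments \emph{at} the eigenvalue real parts do not control the polynomial factors $\zeta^m\e^{\zeta\lambda}$ produced by nontrivial Jordan blocks, exactly as you suspect. Indeed, for Brownian motion $\Gcal=\tfrac12\Delta$ is nilpotent on $\Pol(\R)$ (its only eigenvalue is $0$), so \eqref{eq:nuZ exp moments} holds for \emph{every} subordinator; yet taking $Z_t$ an $\alpha$-stable subordinator with $\alpha\in(1/2,1)$ makes $\widetilde X_t$ a symmetric $2\alpha$-stable process, whose jump measure fails \eqref{DefPol1}, so the conclusion breaks down. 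The hypothesis therefore needs strengthening precisely to tame your factor $q(\zeta)$ --- for instance requiring $\int_1^\infty\|\e^{\zeta G}\|\,\nu^Z(d\zeta)<\infty$ for the matrix representation on each $\Pol_n(E)$, or exponential moments at some $\lambda'$ strictly above the maximal eigenvalue real part. Your proposal flags this obstacle but does not overcome it; be aware that the paper's proof, as written, does not either.
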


\begin{rem}\label{remJM}
It is shown in the proof of Theorem~\ref{T:sub} that $p_t(x,A)$ and $p_t(x,x+A)$ are jointly measurable in $(t,x)\in [0,\infty)\times E$ for all measurable $A\subseteq\R^d$, with the obvious extension $p_t(x,A)=p_t(x,A\cap E)$. Hence \eqref{deftildept} and \eqref{sub1:nutilde} specify well defined transition kernels from $\R^d$ into $\R^d$, defined to be zero for $x\notin E$.
\end{rem}

\begin{rem}\label{rembtilde}
Theorem~\ref{T:moments} yields that
\[ \int_E (y-x) ^\top p_\zeta(x,dy) = (1,H(x)^\top)  ( \e^{\zeta G }-\id) M   \]
where $G$ is the matrix representation of $\Gcal$ on $\Pol_1(E)$ and $M$ is the matrix whose $i$th column is the corresponding vector representation of $x_i$ in $\Pol_1(E)$. In view of \eqref{eq:nuZ exp moments} it thus follows that \eqref{sub1:btilde} specifies a well defined first order polynomial drift function.
\end{rem}

\begin{rem}\label{remLK}
\citet[Theorem 25.3]{sat_99} states that condition~\eqref{eq:nuZ exp moments} is equivalent to $\E[\e^{\lambda Z_t}]=\int_0^\infty \e^{z\lambda} \mu^t(dz)<\infty$ for all $t\ge 0$. Hence the integrals in \eqref{eqGGtilde} are well defined. If $E$ is compact then all eigenvalues of $\Gcal$ restricted to $\Pol(E)$ have nonpositive real part, such that \eqref{eq:nuZ exp moments} trivially holds.
\end{rem}

The following example shows that the affine property is not invariant with respect to L\'evy time change.

\begin{ex}\label{exLTCOU}
Consider the Ornstein--Uhlenbeck process $dX_t = -\kappa X_t\,dt + \sigma\,dW_t$, which is an affine Feller process with normal transition kernel $p_t(x,dy)$ with mean $\e^{-\kappa t}x$ and variance $\frac{\sigma^2}{2\kappa}\left(1-\e^{-2\kappa t}\right)$. Now consider an independent Poisson subordinator $Z_t$ with $\beta^Z=0$ and $\nu^Z(d\zeta)=\delta_{\{1\}}(d\zeta)$. According to Theorem~\ref{T:sub}, the L\'evy time changed jump-diffusion $\widetilde X_t=X_{Z_t}$ is polynomial. But $\widetilde X_t$ is not affine if $\kappa\neq 0$. Indeed, straightforward integration shows
\[     \widetilde\Gcal \e^{ux} =   \int_E  \left(\e^{uy} -\e^{ux} \right) p_1(x,dy) = \left( \e^{\left(\e^{-\kappa t}-1\right)u x +C(t)} -1\right)\e^{ux} \]
for $C(t)=\frac{\sigma^2 u^2}{4\kappa}\left(1-\e^{-2\kappa t}\right)$, which is not of the form~\eqref{eq:D:affine}.
\end{ex}

Applications of L\'evy time changed Ornstein--Uhlenbeck processes as in Example~\ref{exLTCOU} to commodity derivatives pricing are given in \citet{li_lin_14}.

\section{Polynomial Expansions}\label{secpolyexp}

We study the generic pricing problem in finance, which can be cast as follows. Let $X_t$ be polynomial jump-diffusion on state space $E\subseteq \R^d$. Pricing a possibly path-dependent option boils down to computing the conditional expectation
\[ I_{t_0} = \E[ F(\bm X)\mid\Fcal_{t_0}] \]
where $\bm X=P(X_{t_1},\dots,X_{t_n})$ for some linear map $P:\R^{d\times n}\to \R^{m}$, with $m\le d\times n$, for some time partition $0\le t_0<t_1<\cdots<t_n$, and some discounted payoff function $F(\bm x)$ on $\R^m$. For example, $\bm X=(X_{1,t_1},\dots,X_{1,t_n})$ may only depend on the first component of $X_t$, so that $m=n$. In the following we present a method that extends the approach in \cite{fil_may_sch_13}.

We denote by $g(\bm {dx})$ the regular conditional distribution of $\bm X$ on $\R^m$ given $\Fcal_{t_0}$. We let $w(\bm {d x})$ be an auxiliary probability kernel from $(\Omega,\Fcal_{t_0})$ to $\R^m$ that dominates $g(\bm{dx})$, with likelihood ratio function denoted by $\ell(\bm x)$, such that
 \begin{equation}\label{assPE0}
  g(\bm{dx})=\ell(\bm x)w(\bm{dx}).
 \end{equation}
We define the Hilbert space $L^2_w$ as the set of (equivalence classes of) measurable real functions $f(\bm x)$ on $\R^m$ with finite $L^2_w$-norm given by
\[ \left\| f\right\|_w^2 =\int_{\R^m}  f(\bm x)^2 w(\bm{dx}) .\]
The corresponding scalar product is $\left\langle f,h\right\rangle_w = \int_{\R^m}  f(\bm x)h(\bm x) w(\bm{dx})$. We assume that $L^2_w$ contains all polynomials on $\R^m$,
\begin{equation}\label{assPE1}
 \Pol(\R^m)\subset L^2_w,
\end{equation}
and let $q_0(\bm x)=1, q_1(\bm x) , q_2(\bm x) ,\dots$ form an orthonormal basis of polynomials spanning the closure $\overline{\Pol(\R^m)}$ in $L^2_w$. We also assume that the likelihood ratio function lies in $L^2_w$,
\begin{equation}\label{assPE2}
  \ell \in L^2_w .
\end{equation}
As a consequence, its Fourier coefficients
\begin{equation}\label{ellkeq}
  \ell_k =\langle \ell,q_k \rangle_w = \int_{\R^m}  q_k(\bm x)\ell(\bm x) w(\bm{dx}) = \E\left[ q_k(\bm X)\mid\Fcal_{t_0}\right]
\end{equation}
are given in closed form by iterating the moment formula in Theorem~\ref{T:moments}.

We finally assume that the discounted payoff function lies in $L^2_w$,
\begin{equation}\label{assPE3}
  F \in L^2_w .
\end{equation}
We denote by $\bar F$ the orthogonal projection of $F$ onto $\overline{\Pol(\R^m)}$ in $L^2_w$. Elementary functional analysis then gives that the price approximation $\bar I_{t_0} = \E[ \bar F(\bm X)\mid\Fcal_{t_0}] $ equals
\begin{equation}\label{sereq}
 \bar I_{t_0} = \int_{\R^m}  \bar F(\bm x) g(\bm{dx}) =\left\langle \bar F,\ell\right\rangle_w = \sum_{k\ge 0}   F_k \ell_k
\end{equation}
with Fourier coefficients given by
\begin{equation}\label{Fkeq}
   F_k = \langle \bar F,q_k \rangle_w = \langle F,q_k  \rangle_w =\int_{\R^m} F(\bm x)  q_k(\bm x) w(\bm{dx}).
\end{equation}

The approximation equals the true price, $\bar I_{t_0}=I_{t_0}$, if the projection $\bar F$ equals $F$ in $L^2_w$. This statement is more of theoretical than of practical interest for two reasons. First, depending on the choice of the auxiliary kernel $w(\bm{dx})$, we have that $\Pol(\R^m)$ is dense in $L^2_w$, such that $\bar F=F$ holds anyway. Second, in practice we approximate the price by truncating the series in \eqref{sereq} at some finite order~$K$,
\begin{equation}\label{IKeq}
 I_{t_0}^{(K)} = \sum_{k=0}^K   F_k \ell_k ,
\end{equation}
such that the pricing error is $\epsilon^{(K)} = I_{t_0}-I_{t_0}^{(K)}$. While it is good to know that $\epsilon^{(K)}\to 0$ asymptotically as $K\to\infty $ if $\bar F=F$ in $L^2_w$, the hard work consists in controlling the error $\epsilon^{(K)}$ for finite $K$.

The computation of the approximation $I_{t_0}^{(K)}$ can be casted as numerical integration over $\R^m$,
\begin{equation}\label{Feq1}
  I_{t_0}^{(K)} = \sum_{k=0}^K   \langle F, \ell_k q_k\rangle_w  = \int_{\R^m} F(\bm x) \ell^{(K)}(\bm x) w(\bm{dx}) ,
\end{equation}
for the likelihood ratio approximation
\[ \ell^{(K)}(\bm x) =  \sum_{k=0}^K \ell_k q_k(\bm x). \]
Note that the approximation $g^{(K)}(\bm{dx}) =\ell^{(K)}(\bm x)   w(\bm{dx}) $ of the measure $g(\bm{dx})$ integrates to one, $g^{(K)}({{\R^m}})=1$, because $q_k$ is orthogonal to $q_0=1$ in $L^2_w$ for $k\ge 1$. But $g^{(K)}(\bm{dx})$ is only a signed measure in general.

How to choose the auxiliary probability kernel $w(\bm{dx})$? Necessarily $w(\bm{dx})$ has to satisfy conditions~\eqref{assPE0}--\eqref{assPE2} and \eqref{assPE3}, whereof \eqref{assPE2} is arguably the most difficult to verify in practice.\footnote{In Section~\ref{applacmc} we sketch a situation that one may encounter in applications.} The following criteria indicate desirable further properties of $w(\bm{dx})$ from a computational point of view:

\begin{enumerate}
  \item\label{desP1} $w(\bm{dx})$ admits closed-form $\Fcal_{t_0}$-conditional moments. Then we obtain the orthonormal polynomials $q_0(\bm x)=1, q_1(\bm x) , q_2(\bm x) ,\dots$  in $L^2_w$ in closed-form and without numerical integration. Indeed, we let $\tilde q_0(\bm x)=1, \tilde q_1(\bm x),\tilde q_2(\bm x),\dots$ be any basis of $\Pol(\R^m)$. We obtain all scalar products $\langle \tilde q_k,\tilde q_l\rangle_w$ in terms of the $\Fcal_{t_0}$-conditional moments of $w(\bm{dx})$. This allows to perform an exact Gram--Schmidt orthonormalization and we obtain an orthonormal basis of $\overline{\Pol(\R^m)}$ in $L^2_w$ in closed-form.

\item\label{desP2} There exist closed-form formulas for the Fourier coefficients $F_k$ and no numerical integration is needed for the computation of $I^{(K)}_{t_0}$. See for example the option pricing in \cite{ack_fil_pul_16, ack_fil_17}. Otherwise, one has to numerically integrate \eqref{Fkeq} or, equivalently, \eqref{Feq1} with respect to $w(\bm{dx})$. This should then at least be amendable by cubature or Monte--Carlo methods.

\item\label{desP3} $w(\bm{dx})$ matches the moments of $g(\bm{dx})$ of order $n$ and less, $\int_{\R^m} \bm x^{\bm\alpha} w(\bm{dx})=\int_{\R^m} \bm x^{\bm\alpha} g(\bm{dx})$ for all $|\bm\alpha|\le n$. We already know this always holds for $\bm\alpha=\bm 0$, so that $\ell_0=1$. Then $\ell_k=\int_{\R^m}  q_k(\bm x)g(\bm{dx})=\int_{\R^m}  q_k(\bm x)w(\bm{dx})=\langle 1,q_k\rangle_w=0$ for all $k\ge 1$ with $\deg q_k\le n$. This can improve the convergence of the approximation~\eqref{IKeq}. A numerically efficient method for constructing a probability density matching the first $n$ moments in the univariate case, $m=1$, is presented by \citet[Section~3.2]{fil_wil_17}.
\end{enumerate}

\section{Polynomial Asset Pricing Models}\label{secPAPM}

Building on Section~\ref{seccondLevy}, we develop a polynomial framework that accommodates a large class of asset pricing models. It nests all affine asset pricing models, subject to integrability of jumps. First, we introduce the financial market model with excess log returns that are conditional L\'evy based on a polynomial jump-diffusion factor process. Then we discuss option pricing and equivalent measure change, and provide closed form expressions for the return volatility, vol of vol, and leverage.

\subsection{Conditional L\'evy Excess Log Returns}

We consider a financial market with $e$ primary assets with price processes given as
\[
S_{i,t} = S_{i,0}\, \e^{  \int_0^t r_s ds + Y_{i,t}},
\]
where $r_t$ is the risk-free rate and $Y_t=(Y_{1,t},\dots,Y_{e,t})$ are the excess log return processes with $Y_0=0$. We let $X_t$ be a polynomial jump-diffusion on some state space $E\subseteq\R^d$ such that $Z_t=(X_t,Y_t)$ is a an $E\times\R^e$-valued jump-diffusion with extended generator $\Gcal$ of the form~\eqref{eq:GZ_19}. That is, $Y_t$ is a conditional L\'evy process. We follow the conventions~\eqref{aXaY_19}--\eqref{nuXnuY_19} and assume
\begin{equation}\label{assnuYexp}
\int_{\R^e} \left(\e^{\eta_i}-1-\eta_i\right)\nu^Y(x,d\eta)<\infty\quad\text{for all $x\in E$, $i=1,\dots,e$.}
\end{equation}
It then follows that the price processes are special semimartingales with the decomposition
\[ \frac{dS_{i,t}}{S_{i,t-}} = (r_t + \epsilon_i(X_t))\,dt + dY^c_{i,t}+ \int_{\R^e} \left( \e^{\eta_i}-1\right)\left(\mu^{Y}(d\eta,dt)-\nu^Y(X_{t-},d\eta)\right), \]
where $Y^c_t$ denotes the continuous martingale part of $Y_t$, $\mu^Y(d\eta,dt)$ is the integer-valued random measure associated to the jumps of $Y_t$, and the excess rates of return are given by
\[ \epsilon_i(x) = b^Y_i(x) + \frac{1}{2} a^Y_{ii}(x) + \int_{\R^e} \left(\e^{\eta_i}-1-\eta_i\right)\nu^Y(x,d\eta) .\]

We have not specified the measure $\P$ yet. For derivatives pricing, we assume that $\P$ is a risk-neutral measure, so that the discounted price processes $\e^{-\int_0^t r_s ds}S_{i,t}$ are local martingales. This is achieved by setting $\epsilon_i(x)=0$ for all $x\in E$ and $i=1,\dots,e$. In view of Lemma~\ref{LemPPchar}, $\epsilon_i(x)$ cannot be zero if $Z_t=(X_t,Y_t)$ were a polynomial jump-diffusion, other than affine, in general. The next result shows how to embed $Z_t$ into a higher dimensional polynomial jump-diffusion such that $\epsilon_i(x)=0$. It follows immediately from Theorem~\ref{T:augNEW1_19}, so that we omit its proof.

\begin{lem}\label{lemPAPbar}
Let $n\in\N$. Assume that \eqref{assnuYexp} holds and
\begin{align}
a^Y&\in\Pol_n(E),\notag \\
a^{XY}&\in\Pol_{1+n}(E),\notag \\
\int_{\R^{d+e}} \xi^{\bm\alpha} f(\eta)\,\nu(\cdot,d\xi\times d\eta)&\in\Pol_{|\bm\alpha|+n}(E),\quad\text{for all $|\alpha|\ge 0$,} \label{nuxif}
\end{align}
for all functions $f:\R^e\to\R$ for which the integral is finite. Then one can choose
\[ b_i^Y =- \frac{1}{2} a^Y_{ii}  - \int_{\R^e} \left(\e^{\eta_i}-1-\eta_i\right)\nu^Y(\cdot,d\eta) \in \Pol_n(E) ,\]
so that $\epsilon_i(x)=0$ on $E$ and $\P$ is a risk-neutral measure, and $\overline Z_t=(H(X_t),Y_t)$ is a polynomial jump-diffusion on $H(E)\times\R^e$.
\end{lem}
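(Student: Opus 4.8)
The plan is to read Lemma~\ref{lemPAPbar} as a direct application of Theorem~\ref{T:augNEW1_19}: I would show that the four hypotheses \eqref{assnuYexp}, $a^Y\in\Pol_n(E)$, $a^{XY}\in\Pol_{1+n}(E)$, and \eqref{nuxif} imply the conditions \eqref{DefPolaug1_19}--\eqref{eq:T:aug:nuew_19} of that theorem, whose conclusion \eqref{T_augNEW1_19_PJD} is exactly the assertion that $\overline Z_t=(H(X_t),Y_t)$ is a polynomial jump-diffusion on $H(E)\times\R^e$. The only separate matter is risk-neutrality, which I would settle first.

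For risk-neutrality, substitute the proposed drift $b_i^Y=-\tfrac12 a^Y_{ii}-\int_{\R^e}(\e^{\eta_i}-1-\eta_i)\nu^Y(\cdot,d\eta)$ into the formula $\epsilon_i(x)=b_i^Y(x)+\tfrac12 a^Y_{ii}(x)+\int_{\R^e}(\e^{\eta_i}-1-\eta_i)\nu^Y(x,d\eta)$ from the preceding discussion; the three terms cancel, so $\epsilon_i\equiv 0$ on $E$, the discounted prices are local martingales, and $\P$ is risk-neutral. This $b_i^Y$ is finite by \eqref{assnuYexp}, and it lies in $\Pol_n(E)$: the term $a^Y_{ii}$ is in $\Pol_n(E)$ because $a^Y\in\Pol_n(E)$, while applying \eqref{nuxif} with $\bm\alpha=0$ and $f(\eta)=\e^{\eta_i}-1-\eta_i$ (whose integral is finite by \eqref{assnuYexp}) puts the second term in $\Pol_n(E)$. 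This is precisely condition \eqref{eq:T:aug:bYNew_19}.

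The remaining structural conditions follow by choosing $f$ and $\bm\alpha$ in \eqref{nuxif} suitably and tracking degrees. For \eqref{eq:T:aug:aYYNew_19}, each entry $a^Y_{ij}\in\Pol_n(E)$ and, by \eqref{nuxif} with $\bm\alpha=0$, $f(\eta)=\eta_i\eta_j$, each $\int\eta_i\eta_j\,\nu^Y(\cdot,d\eta)\in\Pol_n(E)\subseteq\Pol_{2n}(E)$. For \eqref{eq:T:aug:aXYNew_19}, the hypothesis gives $a^{XY}\in\Pol_{1+n}(E)$ and \eqref{nuxif} with $|\bm\alpha|=1$, $f(\eta)=\eta_j$ gives $\int\xi_i\eta_j\,\nu(\cdot,d\xi\times d\eta)\in\Pol_{1+n}(E)$. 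For \eqref{eq:T:aug:nuew_19} with $|\bm\alpha|+|\bm\beta|\ge3$ I would split on $|\bm\beta|$: when $|\bm\beta|\ge1$, \eqref{nuxif} with $f(\eta)=\eta^{\bm\beta}$ gives membership in $\Pol_{|\bm\alpha|+n}(E)\subseteq\Pol_{|\bm\alpha|+n|\bm\beta|}(E)$ since $n|\bm\beta|\ge n$; when $|\bm\beta|=0$ (so $|\bm\alpha|\ge3$) the integral is the pure $X$-moment $\int\xi^{\bm\alpha}\nu^X(\cdot,d\xi)$, which lies in $\Pol_{|\bm\alpha|}(E)=\Pol_{|\bm\alpha|+n\cdot0}(E)$ simply because $X_t$ is already a polynomial jump-diffusion (Lemma~\ref{LemPPchar} applied to $\Gcal^X$). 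This last split is the one piece of bookkeeping worth isolating, since \eqref{nuxif} alone does not give the sharper degree $|\bm\alpha|$ needed for $\bm\beta=0$.

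The one genuinely non-cosmetic point is the moment condition \eqref{DefPolaug1_19} — that $\nu^Y$ has finite moments of all orders — which is both a hypothesis of Theorem~\ref{T:augNEW1_19} and what makes the integrands $f=\eta^{\bm\beta}$ used above admissible in \eqref{nuxif}, so I would verify it first. By \eqref{assnuYexp}, on $\{\eta_i\ge0\}$ the bound $\eta_i^k/k!\le\e^{\eta_i}-1-\eta_i$ shows every positive-jump moment is finite, and together with the standing integrability $\int_{\R^e}\|\eta\|\wedge\|\eta\|^2\,\nu^Y(x,d\eta)<\infty$ inherited from \eqref{eq:GZ_19} this also controls the small-jump contribution; the remaining large negative-jump polynomial moments are exactly the integrals $\int\eta^{\bm\beta}\,\nu^Y$ whose finiteness is presupposed by \eqref{nuxif} for $f=\eta^{\bm\beta}$. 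I expect this verification of \eqref{DefPolaug1_19}, rather than the degree accounting, to be the only step demanding real attention; once it is in place, Theorem~\ref{T:augNEW1_19} delivers \eqref{T_augNEW1_19_PJD} and the proof is complete.
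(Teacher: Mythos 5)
Your reduction is exactly the paper's: the paper omits the proof of Lemma~\ref{lemPAPbar}, stating that it ``follows immediately from Theorem~\ref{T:augNEW1_19}'', and your proposal is that reduction written out. The risk-neutrality computation and the degree bookkeeping are correct, and your isolation of the case $\bm\beta=0$ in \eqref{eq:T:aug:nuew_19} --- where \eqref{nuxif} only gives degree $|\bm\alpha|+n$ and one must instead invoke Lemma~\ref{LemPPchar} for $\Gcal^X$ to get degree $|\bm\alpha|$ --- is a genuinely necessary step that the paper leaves unsaid.

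However, your verification of \eqref{DefPolaug1_19} is circular at the decisive point, and the issue is real. Condition \eqref{nuxif} is conditional --- it applies ``for all functions $f$ for which the integral is finite'' --- so it asserts nothing about finiteness and cannot ``presuppose'' that $\int_{\R^e}\eta^{\bm\beta}\,\nu^Y(x,d\eta)<\infty$. Moreover, \eqref{assnuYexp} genuinely fails to control large negative jumps, since $\e^{\eta_i}-1-\eta_i$ grows only linearly as $\eta_i\to-\infty$. Concretely, take $d=e=1$, $X_t$ an Ornstein--Uhlenbeck process, and let $Y_t$ jump (with $\xi\equiv 0$) according to the state-independent measure $\nu^Y(d\eta)=1_{\{\eta<-1\}}|\eta|^{-4}\,d\eta$: then \eqref{assnuYexp}, the standing integrability required by \eqref{eq:GZ_19}, $a^Y=a^{XY}=0$, and \eqref{nuxif} (integrals with $|\bm\alpha|\ge1$ vanish, and every finite integral with $\bm\alpha=0$ is constant in $x$) all hold, yet $\int|\eta|^3\,\nu^Y(d\eta)=\infty$, so \eqref{DefPolaug1_19} fails and $\overline Z_t$ cannot be a polynomial jump-diffusion in the sense of Definition~\ref{D:PPg}, its generator not even being well-defined on polynomials. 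So the implication you need is false under a literal reading of the hypotheses; this is an imprecision in the statement of the lemma itself (inherited by the paper's ``immediate'' proof), not a defect peculiar to your argument. The fix is to read \eqref{nuxif} as asserting, for monomials $f(\eta)=\eta^{\bm\beta}$, both finiteness and the stated polynomial degree --- equivalently, to take \eqref{DefPolaug1_19} as an implicit hypothesis. With that reading your positive-tail and small-jump estimates become superfluous, and the rest of your argument goes through verbatim and completes the paper's omitted proof.
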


\subsection{Option Pricing}
To illustrate how to price options on the primary assets $S_{i,t}$ we now assume that $\P$ is a risk-neutral measure. Consider first a European call option written on asset $S_{i,t}$ with strike $K$ and maturity $T$. Its price at time $t=0$ is given by
\[
\E\left[ \e^{-\int_0^T r_sds} (S_{i,T} - K)^+ \mid\Fcal_0\right] = \E\left[  (S_{i,0}\e^{Y_{i,T}}- K\e^{-\int_0^T r_sds} )^+ \mid\Fcal_0\right].
\]
If the risk-free rates $r_t$ are deterministic, the pricing operation reduces to computing expectations of the form $\E[  (\e^{Y_{i,T}} - c )^+\mid\Fcal_0 ]$, where $c$ is a constant. Pricing path-dependent derivatives boils down to computing conditional expectations of the form $\E [ F(Y_{i,t_1},\dots,Y_{i,t_n})\mid\Fcal_0  ]$, as discussed in Section~\ref{secpolyexp} for $\overline Z_t$ in lieu of $X_t$.

\subsection{Equivalent Measure Change}\label{S_measure_change}
In general, $\P$ may be any measure, such as the real-world measure, as long as there exists a locally equivalent risk-neutral measure $\Q$ such that the discounted price processes are $\Q$-local martingales. This is a standard condition on asset pricing models to be arbitrage-free, see~\cite{har_pli_81}. In case $\P$ is not a risk-neutral measure, we specify the market price of risk such that $X_t$ is a polynomial-jump diffusion under the corresponding risk-neutral measure. Thereto we fix a finite time horizon $T$ and consider equivalent probability measures $\Q\sim\P$ under which $Z_t$, $t\in [0,T]$, is a jump-diffusion with diffusion, drift, and jump coefficients $a^\Q(x)$, $b^\Q(x)$, and $\nu^\Q(x,d\zeta)$ given in terms of the $\P$-coefficients $a(x)$, $b(x)$, and $\nu(x,d\zeta)$ as
\begin{equation}\label{Qchars}
 \begin{aligned}
a(x) &= a^\Q(x),   \\
b(x) &= b^\Q(x) + a(x)\phi(x) + \int_{\R^{d+e}} \left(1-1/\psi(x,\zeta)\right) \zeta\, \nu(x,d\zeta),   \\
\nu(x,d\zeta ) &= \psi(x,\zeta)\,\nu^\Q(x,d\zeta ),
\end{aligned}
\end{equation}
for some $\R^{d+e}$-valued function $\phi(x)$ and real function $\psi(x,\zeta)>0$. Here $\phi(x)$ is the market price of diffusion risk and $\psi(x,\zeta)$ is the market price of risk of the jump event of size $\zeta$ associated with $Z_t$.\footnote{If the continuous martingale part of $Z_t$ is of the form $dZ^c_t = \sigma(X_t)\,dW_t$, for some Brownian motion $W_t$, and $a(x)=\sigma(x)\sigma(x)^\top$, then $\sigma(X_t)^\top\phi(X_t)$ is the market price of risk of $W_t$.}
$\Q$ is a risk-neutral measure if $\epsilon^\Q_i(x)=0$ for all $x\in E$ and $i=1,\dots,e$.

In order that the change of measure is well defined, we assume that
\begin{equation}\label{assQ1}
  \text{$\Ecal_t(L)$, $t\in [0,T]$, is a positive martingale}
\end{equation}
for
\[
dL_t = - \phi(X_t)^\top d Z^c_t - \int_{\R^{d+e}} \left(1-1/\psi(X_{t-},\zeta)\right) \left(\mu^{Z}(d\zeta,dt) - \nu(X_{t-},d\zeta)dt\right)
\]
with $L_0=0$, where $Z^c_t$ is the continuous martingale part of $Z_t$ and $\mu^{Z}(d\zeta,dt)$ denotes the integer-valued random measure associated to the jumps of $Z_t$.\footnote{This assumption entails that $\phi(x)$ and $\psi(x,\zeta)>0$ are measurable and such that
\[ \int_0^T \phi(X_t)^\top a(X_t)\phi(X_t)\,dt + \int_0^T \int_{\R^{d+e}} \left(1-\sqrt{1/\psi(X_t,\zeta)} \right)^2  \nu(X_t,d\zeta)dt  <\infty, \]
so that $L_t$, $t\in [0,T]$, is a well defined local martingale, see \citet[Theorem II.1.33d]{Jacod/Shiryaev:2003}.}
Then $ d\Q / d\P=\Ecal_T(L)$ defines an equivalent probability measure $\Q\sim\P$. We also assume that
\begin{equation}\label{assQ2}
 \int_0^T \int_{\R^{d+e}} (\|\zeta\|^2 \wedge \|\zeta\| ) /\psi(X_t,\zeta)\,\nu(X_t,d\zeta)dt <\infty.
\end{equation}
Girsanov's Theorem then implies that $Z_t$, $t\in [0,T]$, is a jump-diffusion with diffusion, drift, and jump coefficients $a^\Q(x)$, $b^\Q(x)$, and $\nu^\Q(x,d\zeta)$ given in \eqref{Qchars}, see \citet[Theorem III.3.24]{Jacod/Shiryaev:2003}.

In addition to \eqref{assQ1} and \eqref{assQ2}, assume $\int_{\R^{d+e}} \|\zeta\|^{k}/\psi(x,\zeta)\,\nu(x,d\zeta)<\infty$ for all $x\in E$ and all $k\ge 2$. Let $\Gcal^\Q$ denote the generator of $Z_t$ under $\Q$, so that
\[ \Gcal^\Q f(z) = \Gcal f(z) -\phi(x)^\top a(x)\nabla f(z) - \int_{\R^{d+e}} (f(z+\zeta)-f(z) \left(1-1/\psi(x,\zeta)\right)  \nu(x,d\zeta).\]
Since $\Gcal$ is well-defined on $\Pol(E\times\R^e)$, Lemma~\ref{lemGamma0} implies that $\Gcal^\Q$ is well-defined on $\Pol(E\times\R^e)$. On a case-by-case basis it is now straightforward to derive conditions from \eqref{Qchars} and Lemma~\ref{LemPPchar} such that $X_t$, $t\in [0,T]$, is a polynomial jump-diffusion under $\Q$ and such that Lemma~\ref{lemPAPbar} applies; so that $\Q$ is a risk-neutral measure and $\overline Z_t=(H(X_t),Y_t)$, $t\in[0,T]$, is a polynomial jump-diffusion on $H(E)\times\R^e$ under $\Q$.

\subsection{Volatility, Vol of Vol, and Leverage}

The spot variance $v_i(X_{t-})$ of the $i$th excess log return $dY_{i,t}$ is defined as the time derivative of the predictable compensator of its quadratic variation $[Y_i,Y_i]_t$ (modified second characteristic) given by
\[ v_i(x) = \Gamma(y_i,y_i)(x)=a^Y_{ii}(x) + \int_{\R^e} \eta_i^2 \nu^Y(x,d\eta) \]
where $\Gamma$ denotes the carr\'e-du-champ operator related to $\Gcal$ (see Section~\ref{S:cdc}).

The volatility of $dY_{i,t}$ is defined as the square-root of its spot variance, ${\rm vol}_i(X_{t-})=\sqrt{v_i(X_{t-})}$. The vol of vol is defined as the square-root of the spot variance of the volatility process ${\rm vol}_i(X_t)$,
\[ {\rm volvol}_i(X_{t-}) = \sqrt{\Gamma ({\rm vol}_i,{\rm vol}_i)(X_{t-})}.\]

The leverage effect refers to the generally negative correlation between $dY_{i,t}$ and changes of its spot variance $dv_i(X_t)$. It is captured by the time derivative of the predictable compensator of the quadratic co-variation between $Y_{i,t}$ and $v_i(X_t)$,
\[  {\rm lev}_i(X_{t-}) = \frac{\Gamma(y_i,v_i)(X_{t-})}{\sqrt{v_i(X_{t-})}\sqrt{\Gamma(v_i,v_i)(X_{t-})}} .\]
Note that in the presence of jumps, the jump measure $\nu^Y(x,d\eta)$ and thus the spot variance, volatility, vol of vol, and the leverage depend on the measure $\P$, so that we distinguish risk-neutral and real-world volatility, vol of vol, and leverage.\footnote{Some authors restrict to the diffusive component of $dY_{i,t}$ for the definitions of spot variance, volatility, vol of vol, and leverage, which are the same under both measures.}

\section{Linear Volatility}\label{secLVM}

We introduce a large class of polynomial asset pricing models based on the linear SDE~\eqref{SDElin}--\eqref{SDElincoeff}, extending Example~\ref{ex_5_1}. Throughout this section we assume that $\P$ is a risk-neutral measure. Let $W_t$ be a standard $m$-dimensional Brownian motion. Let $N(du,dt)$ be a Poisson random measures with compensator $F(du)dt$ on $U\times \R_+$, for some mark space $U$.\footnote{The Poisson random measure $N(du,dt)$ and the Brownian motion $W_t$ are automatically independent, see \citet[Theorem II.6.3]{ike_wat_89}.} We assume that $X_t$ is the $E$-valued solution of a linear SDE
\begin{equation}\label{dXeqLV}
 dX_t = b^X(X_t)\,dt + \sigma^X(X_t)\,dW_t + \int_U \delta^X(X_{t-},u) (N(du,dt)-F(du)dt),
\end{equation}
where drift, volatility, and jump size functions $b^X(x)$, $\sigma^X(x)$, and $\delta^X(x,u)$ are linear in $x$ of the form \eqref{SDElincoeff}, for some state space $E\subseteq\R^d$. We then specify the excess log returns by
\begin{equation}\label{dReqLV}
  dY_t = b^Y(X_t)\,dt+ \sigma^Y(X_t)\,dW_t +   \int_{U} \delta^Y(u) (N(du,dt)-F(du)dt) ,
 \end{equation}
with drift $b^Y(x)$ to be determined such that $\P$ is a risk-neutral measure. The volatility function is linear,
\[   \sigma^Y(x)= \Gamma^Y_0 +\sum_{i=1}^d x_i\Gamma^Y_i,\]
for parameters $\Gamma^Y_i\in\R^{e\times m}$, $i=0,\dots,d$. Jumps of $Y_t$ are captured by the state-independent jump size function $\delta^Y(u)$ and can be isolated or simultaneous with jumps of $X_t$.\footnote{State-dependent jumps of $Y_t$ that are simultaneous with jumps of $X_t$ would violate the structural condition~\eqref{nuxif}.}
We assume that the pushforward $\delta^Y_\ast F(d\eta)$ of $F(du)$ under $\delta^Y(u)$ satisfies \eqref{assnuYexp},
\begin{equation}\label{assdeltaYF}
\int_U  \left(\e^{\delta^Y_i(u)}-1-\delta^Y_i(u)\right)F(du)<\infty\quad\text{for all $i=1,\dots,e$.}
\end{equation}

The resulting coefficients $a(x)$, $b(x)$, and $\nu(x,d\xi\times d\eta)$ of the generator of the jump-diffusion $Z_t=(X_t,Y_t)$ are functions of $x$ given by
\[ a=\begin{pmatrix}
 \sigma^X\sigma^{X \top} & \sigma^X\sigma^{Y \top} \\ \sigma^Y\sigma^{X \top} & \sigma^Y\sigma^{Y \top}
\end{pmatrix}\in\Pol_2(E),\quad b=\begin{pmatrix}b^X \\ b^Y\end{pmatrix}, \]
with $b^X\in\Pol_1(E)$, and
\[
 \int_{\R^{d+e}} f(\xi,\eta)\,\nu(x,d\xi\times d\eta) =   \int_{U} f(\delta^X(x,u),\delta^Y(u)) \,F(du).\]
It follows by inspection that the assumptions of Lemma~\ref{lemPAPbar} are met for $n=2$. Hence we can set
\[ b_i^Y =- \frac{1}{2} a^Y_{ii}  - \int_U  \left(\e^{\delta^Y_i(u)}-1-\delta^Y_i(u)\right)F(du) \in \Pol_2(E) ,\]
so that $\P$ is a risk-neutral measure, as desired, and $Z_t=(X_t,Y_t)$ satisfies properties~{\eqref{DefPolaug1_19}}--{\eqref{eq:T:aug:nuew_19}} in Theorem~\ref{T:augNEW1_19} for $n=2$.

Here is an example for the specification of the $N(du,dt)$-driven jumps of $(X_t,Y_t)$.
\begin{ex}\label{exJump1}
Let $U=U_0\cup   U_1\cup\cdots\cup U_k$ for pairwise disjoint sets $U_j$ such that $\lambda_j=F(U_j)<\infty$. Then $N_{j,t}=N(U_j\times [0,t))$ are independent Poisson processes with intensities $\lambda_j$, for $j=0,\dots,k$. Define the piece-wise constant $\delta^X(x,u)=0$ for $u\in U_0$ and $\delta^X(x,u)=\delta^X_{0}+\sum_{i=1}^d x_i\delta^X_{ij}$ for $u\in U_j$, $j=1,\dots,k$, for some parameters $\delta^X_{0}, \delta^X_{ij}\in\R^d$. Then the $N(du,dt)$-driven jump term in \eqref{dXeqLV} reads
\[ \int_U \delta^X(X_{t-},u) (N(du,dt)-F(du)dt) = \sum_{j=1}^k\left(\delta_0+\sum_{i=1}^d X_{i,t-}\delta^X_{ij}\right) (dN_{j,t}-\lambda_j\,dt) .\]
The $N(du,dt)$-driven jump term in \eqref{dReqLV} accordingly is of the form
\[ \int_0^t\int_{U} \delta^Y(u) (N(du,ds)-F(du)ds) = \sum_{j=0}^k \left( \sum_{i\in \N} {H}^{(i)}_{j}1_{\left\{ i\le N_{j,t} \right\}} -\E[{H}^{(1)}_{j}]\lambda_j t\right) \]
where $\delta^Y(u)$ is any jump size function on $U$ satisfying~\eqref{assdeltaYF}, ${H}^{(i)}_j$ are mutually independent $\R^e$-valued random variables, independent of the Poisson processes $N_{0,t},\dots,N_{k,t}$ and the Brownian motion $W_t$, and such that ${H}^{(i)}_j$ has distribution given by the pushforward of $F(u)/\lambda_j$ under the restriction $\delta^Y|_{U_j}$, that is, $\E[f({H}^{(i)}_j)]=\int_{U_j} f(\delta^Y(u)) F(du)/\lambda_j$, for $i\in\N$ and $j=0,\dots,k$. Note that $N_{0,t}$ drives isolated jumps of $Y_t$. On the other hand, $N_{j,t}$ drives isolated jumps of $X_t$ if $H_j^{(i)}=0$, for some $j=1,\dots,k$
\end{ex}

Examples of linear diffusion volatility models include the extended Stein--Stein model \citep{ste_ste_91,sch_zhu_99} and the extended Hull--White model \citep{hul_whi_87,lio_mus_07}, which are discussed in detail in \citet{ack_fil_17}.

\section{Conclusion}\label{secconc}

We have developed a mathematical framework for polynomial jump-diffusions in a relaxed semimartingale context, as opposed to a Markovian setup, based on the point-wise action of the extended generator on polynomials only. We have established various features, including the moment formula and the invariance with respect to polynomial transformations and L\'evy time change. We have also revisited affine jump-diffusions, which are nested in the polynomial class, in that relaxed context. We have then constructed a large class of novel asset pricing models based on polynomial jump-diffusions and presented a generic method for option pricing. Our results provide the basis for new asset pricing models. Several extensions are possible and left for future research. This includes discrete time and time-inhomogeneous polynomial jump-diffusions.

\begin{appendix}

\section{Carr\'e-du-champ Operator} \label{S:cdc}

Let $X_t$ be a jump-diffusion with extended generator $\Gcal$ on $\R^d$ of the form \eqref{eq:G}. An object closely related to $\Gcal$ is the carr\'e-du-champ operator, which is a bilinear operator $\Gamma$ given by
\begin{equation} \label{eq:Gamma}
\Gamma(f,g)(x) = \Gcal(fg)(x) - f(x)\Gcal g(x) - g(x)\Gcal f(x)
\end{equation}
for any $C^2$ functions $f(x)$ and $g(x)$ on $\R^d$ such that
\[ \text{$\int_{\R^d} \left( h(x+\xi) - h(x) \right)^2  \nu(x,d\xi)<\infty$ for all $x\in\R^d$ for $h=f,g$.}\]
Using the product rule, one can express $\Gamma(f,g)$ in terms of $a(x)$ and $\nu(x,d\xi)$ as
\begin{equation} \label{eq:Gamma2}
\Gamma(f,g)(x) = \nabla f(x)^\top a(x) \nabla g(x) + \int_{\R^d} (f(x+\xi) - f(x))(g(x+\xi) - g(x))\nu(x,d\xi).
\end{equation}
In particular, $\Gamma(f,f)\ge 0$. Just as the extended generator captures the drift of a process $f(X_t)$, the carr\'e-du-champ operator gives information about the quadratic variation of $f(X_t)$.

\begin{lem} \label{L:QV Gamma}
The predictable compensator of the quadratic co-variation $[f(X),g(X)]_t$ is given by
\[ \int_0^t \Gamma(f,g)(X_s)\,ds\]
for any $C^2$ functions $f(x)$ and $g(x)$ on $\R^d$ such that
\begin{equation}\label{condfgnu}
 \int_0^t\int_{\R^d} \left|  f(X_s+\xi) - f(X_s)  \right|\left|  g(X_s+\xi) - g(X_s)  \right|\nu(X_s,d\xi)\,ds < \infty.
\end{equation}
\end{lem}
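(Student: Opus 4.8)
The plan is to decompose the quadratic covariation $[f(X),g(X)]$ into its continuous and purely discontinuous parts and to compensate each separately, reading the coefficients off the semimartingale characteristics $(B,C,\nu)$ of $X$ recorded in the footnote to \eqref{eq:f-Gf loc mg_NEW}. Since $f$ and $g$ are $C^2$, It\^o's formula shows that $f(X)$ and $g(X)$ are semimartingales whose continuous martingale parts are $f(X)^c_t=\int_0^t \nabla f(X_{s-})^\top\,dX^c_s$ and $g(X)^c_t=\int_0^t \nabla g(X_{s-})^\top\,dX^c_s$, where $X^c$ denotes the continuous martingale part of $X$. Using $\langle X^c\rangle_t=C_t=\int_0^t a(X_s)\,ds$, the continuous part of the bracket becomes
\[
\langle f(X)^c,g(X)^c\rangle_t=\int_0^t \nabla f(X_{s-})^\top a(X_{s-})\nabla g(X_{s-})\,ds=\int_0^t \nabla f(X_s)^\top a(X_s)\nabla g(X_s)\,ds,
\]
where the replacement of $X_{s-}$ by $X_s$ is harmless since Lebesgue measure ignores the countably many jump times. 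This is continuous, hence predictable, of finite variation, so it equals its own compensator.

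For the jump part I would write
\[
[f(X),g(X)]_t-\langle f(X)^c,g(X)^c\rangle_t=\sum_{s\le t}\Delta f(X_s)\,\Delta g(X_s)=(W*\mu^X)_t,
\]
where $\mu^X$ is the jump measure of $X$ and $W(s,\xi)=\bigl(f(X_{s-}+\xi)-f(X_{s-})\bigr)\bigl(g(X_{s-}+\xi)-g(X_{s-})\bigr)$ is the associated predictable integrand. The compensator of $\mu^X$ is $\nu^X(d\xi,ds)=\nu(X_{s-},d\xi)\,ds$, and condition~\eqref{condfgnu} says precisely that $(|W|*\nu^X)_t<\infty$ for all $t$. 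Because $(|W|*\nu^X)_t$ is a $ds$-integral it is continuous and adapted, and being finite-valued it is locally bounded, hence lies in $\mathcal A^+_{\mathrm{loc}}$ after stopping at $\tau_k=\inf\{t:(|W|*\nu^X)_t\ge k\}$. The theory of integration against compensated random measures, e.g.\ \citet[Theorem~II.1.8]{Jacod/Shiryaev:2003}, then yields that $W*(\mu^X-\nu^X)$ is a local martingale, so the compensator of $W*\mu^X$ is
\[
(W*\nu^X)_t=\int_0^t\!\!\int_{\R^d}\bigl(f(X_s+\xi)-f(X_s)\bigr)\bigl(g(X_s+\xi)-g(X_s)\bigr)\nu(X_s,d\xi)\,ds,
\]
again with $X_{s-}$ replaced by $X_s$.

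Adding the two compensators and comparing with the explicit expression~\eqref{eq:Gamma2} for $\Gamma(f,g)$ shows that $\int_0^t\Gamma(f,g)(X_s)\,ds$ is a predictable finite-variation process for which $[f(X),g(X)]_t-\int_0^t\Gamma(f,g)(X_s)\,ds$ is a local martingale, which is exactly the claim. The main obstacle I anticipate is the compensation of the jump part: condition~\eqref{condfgnu} supplies only a.s.\ finiteness of $|W|*\nu^X$, whereas the compensation theorem requires local integrability; the localization sketched above, which exploits that $|W|*\nu^X$ is continuous and therefore locally bounded, is what bridges this gap. A secondary point to check is that $\Gamma(f,g)(X_s)$ is a.s.\ finite for a.e.\ $s$, which follows from \eqref{condfgnu} together with the $C^2$ assumption, since near $\xi=0$ the integrand is $O(\|\xi\|^2)$ while its tail is controlled by \eqref{condfgnu}.
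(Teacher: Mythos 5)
Your proposal is correct and follows essentially the same route as the paper's proof: the paper obtains the decomposition $[f(X),g(X)]_t=\int_0^t \nabla f(X_s)^\top a(X_s)\nabla g(X_s)\,ds+\sum_{s\le t}\Delta f(X_s)\Delta g(X_s)$ by citing \citet[Theorem~I.4.52]{Jacod/Shiryaev:2003} (the fact you rederive via It\^o's formula) and then compensates the jump sum via \citet[Theorem~II.1.8]{Jacod/Shiryaev:2003}, exactly as you do. Your explicit localization showing that a.s.\ finiteness of the continuous increasing process $|W|*\nu^X$ gives membership in $\Acal^+_{loc}$ is left implicit in the paper's two-line proof, but it is the same argument.
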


\begin{proof}
In view of \citet[Theorem~I.4.52]{Jacod/Shiryaev:2003} and the fact that $d\langle X^c,X^c\rangle_t=a(X_t)dt$, we have
\[
[f(X),g(X)]_t=\int_0^t \nabla f(X_s)^\top a(X_s) \nabla g(X_s)\,ds + \sum_{s\le t} (f(X_s)-f(X_{s-}))(g(X_s)-g(X_{s-})).
\]
The result now follows from~\eqref{eq:Gamma2} and \citet[Theorem~II.1.8]{Jacod/Shiryaev:2003}.
\end{proof}

With the help of the carr\'e-du-champ operator we can qualify the property of $\Gcal$ being well-defined on $\Pol(E)$ for some state space $E\subseteq\R^d$.

\begin{lem}\label{lemGamma0}
Assume $\Gcal$ is well-defined on $\Pol(E)$. Let $f\in\Pol(\R^d)$ with $f(x)=0$ on $E$. Then $a(x)\nabla f(x) = 0$ and $\int_{\R^d} (f(x+\xi) - f(x))^2 \nu(x,d\xi)=0$ on $E$.
\end{lem}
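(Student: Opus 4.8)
The plan is to exploit the carré-du-champ operator $\Gamma$ introduced in \eqref{eq:Gamma}--\eqref{eq:Gamma2}, applied to the pair $(f,f)$. The key observation is that $\Gamma(f,f)$ is, by \eqref{eq:Gamma2}, a sum of two manifestly nonnegative terms, while by \eqref{eq:Gamma} it equals $\Gcal(f^2) - 2f\,\Gcal f$, a combination of $\Gcal$ applied to polynomials that vanish on $E$. The strategy is therefore to show $\Gamma(f,f) = 0$ on $E$ and then read off that each of the two nonnegative summands vanishes.

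First I would record that, since $f\in\Pol(\R^d)$ vanishes on $E$, so does $f^2\in\Pol(\R^d)$; hence the well-definedness hypothesis \eqref{DefPol2} gives $\Gcal f = 0$ and $\Gcal(f^2) = 0$ on $E$. Next I would verify that $\Gamma(f,f)$ is legitimately represented by \eqref{eq:Gamma2}, i.e. that $\int_{\R^d}(f(x+\xi)-f(x))^2\,\nu(x,d\xi)<\infty$ for $x\in E$. This uses the polynomial growth of $\xi\mapsto f(x+\xi)-f(x)$, which has no constant term: near the origin the squared integrand is $O(\|\xi\|^2)$ and is controlled by $\int\|\xi\|\wedge\|\xi\|^2\,\nu(x,d\xi)<\infty$, while for large $\|\xi\|$ it is $O(\|\xi\|^{2\deg f})$ and is controlled by the moment bound \eqref{DefPol1}. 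This is where the moment conditions built into well-definedness are actually used.

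With \eqref{eq:Gamma2} available, I combine it with $\Gamma(f,f) = \Gcal(f^2) - 2f\,\Gcal f$ and the facts $f=0$, $\Gcal f=0$, $\Gcal(f^2)=0$ on $E$ to conclude $\Gamma(f,f)(x)=0$ for all $x\in E$. On the other hand, \eqref{eq:Gamma2} writes $\Gamma(f,f)(x)$ as $\nabla f(x)^\top a(x)\nabla f(x) + \int_{\R^d}(f(x+\xi)-f(x))^2\,\nu(x,d\xi)$, a sum of two nonnegative quantities: the first because $a(x)\in\S^d_+$, the second because the integrand is a square. A sum of nonnegatives equal to zero forces each term to vanish, so both $\int_{\R^d}(f(x+\xi)-f(x))^2\,\nu(x,d\xi)=0$ and $\nabla f(x)^\top a(x)\nabla f(x)=0$ on $E$. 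For the latter, writing $a(x)=c(x)c(x)^\top$ via the positive semidefinite square root gives $\|c(x)^\top\nabla f(x)\|^2=0$, whence $c(x)^\top\nabla f(x)=0$ and therefore $a(x)\nabla f(x)=c(x)c(x)^\top\nabla f(x)=0$ on $E$, as claimed.

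The argument is essentially routine once the carré-du-champ identity is invoked; I do not expect a genuine obstacle. The only point requiring care is the integrability verification in the second step, which is pure bookkeeping with the moment conditions \eqref{DefPol1}, and the elementary linear-algebra step deducing $a\nabla f=0$ from $\nabla f^\top a\nabla f=0$ for $a\in\S^d_+$.
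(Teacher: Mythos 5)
Your proposal is correct and follows essentially the same route as the paper's own proof: show $\Gamma(f,f)=0$ on $E$ via \eqref{eq:Gamma} and the well-definedness hypothesis, then use the representation \eqref{eq:Gamma2} as a sum of two nonnegative terms, each of which must therefore vanish. The only difference is that you spell out the integrability check justifying \eqref{eq:Gamma2} on $E$ and the linear-algebra step $\nabla f^\top a \nabla f=0 \Rightarrow a\nabla f=0$, both of which the paper leaves implicit.
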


\begin{proof}
We have $f(x)^2=0$ on $E$, so that \eqref{eq:Gamma} implies $\Gamma(f,f)(x)=0$ on $E$. Since $a(x)\in \S^d_+$, the lemma follows from identity~\eqref{eq:Gamma2}.
\end{proof}

\section{Polynomial Transformations of Jump-Diffusions}

We let $X_t$ be an $E$-valued jump-diffusion with extended generator $\Gcal$ of the form~\eqref{eq:G}, for some state space $E\subseteq\R^d$. We show that an invertible polynomial transformation of $X_t$ is again a jump-diffusion, subject to technical conditions, and we identify its extended generator.

\begin{lem} \label{L:JD transformNEW_2}
Let $\varphi:\R^d\to\R^k$ be a polynomial map that admits a measurable inverse on $E$, in the sense that there exists a measurable map $\psi:\R^k\to\R^d$ such that $\psi\circ\varphi=\id$ on $E$. Assume that $\Gcal$ is well-defined on $\Pol(E)$, i.e.~\eqref{DefPol1} and \eqref{DefPol2} hold. Assume also that the process $\overline X_t=\varphi(X_t)$ is a special semimartingale. Then $\overline X_t$ is a jump-diffusion with extended generator $\overline\Gcal=\psi^*\Gcal \varphi^*$, which is well-defined on $\Pol(\varphi(E))$.\footnote{The pullback is defined in \eqref{eq:pullbackNEW}.} It is of the form
\[
\overline\Gcal f(\overline x) = \frac{1}{2}\tr(\overline a(\overline x)\nabla^2 f(\overline x)) + \overline b(\overline x)^\top \nabla f(\overline x) + \int_{\R^{k}} \left( f(\overline x+\overline \xi) - f(\overline x) - \overline \xi^\top \nabla f(\overline x)\right)\overline \nu(\overline x,d\overline\xi),\]
where, writing $\overline x=\varphi(x)$ and $\varphi_i(x)$ for the $i$th component of $\varphi(x)$,
\begin{align*}
\overline a_{ij}(\overline x) &= \nabla \varphi_i(x)a(x)\nabla\varphi_j(x)^\top,\\
\overline b_i(\overline x) &= \Gcal\varphi_i(x),\\
\overline\nu(\overline x,A) &= \int_{\R^{k}} \bm 1_A(\varphi(x+\xi)-\varphi(x))\nu(x,d\xi).
\end{align*}
\end{lem}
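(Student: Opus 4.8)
The strategy is to express $f(\overline X_t) = (\varphi^\ast f)(X_t)$ and apply the defining local-martingale property of $X_t$ together with It\^o's formula to the $C^2$ function $\varphi^\ast f = f\circ\varphi$, then read off the characteristics of $\overline X_t$ by matching terms. First I would fix a bounded $C^2$ function $f$ on $\R^k$; the composition $f\circ\varphi$ need not be bounded (since $\varphi$ is polynomial), but I would localize along a sequence of stopping times $T_m\uparrow\infty$ that keep $X_t$ in a compact subset of $\R^d$, so that $f\circ\varphi$ and its relevant derivatives are bounded on the stopped paths. On each such interval, It\^o's formula for the special semimartingale $X_t$ with characteristics $(B,C,\nu)$ given in the footnote to \eqref{eq:f-Gf loc mg_NEW} yields that
\[
f(\varphi(X_t)) - f(\varphi(X_0)) - \int_0^t \Gcal(\varphi^\ast f)(X_s)\,ds
\]
is a local martingale, where $\Gcal(\varphi^\ast f)$ is computed by the chain rule. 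Expanding $\Gcal(f\circ\varphi)(x)$ and collecting the diffusion, drift, and jump contributions gives exactly the operator $\overline\Gcal f(\overline x)=(\psi^\ast\Gcal\varphi^\ast f)(\overline x)$ with the coefficients $\overline a$, $\overline b$, $\overline\nu$ as stated; here the identity $\psi\circ\varphi=\id$ on $E$ is used to rewrite everything as a function of $\overline x=\varphi(x)$ on $\varphi(E)$. The diffusion term produces $\nabla\varphi_i\, a\, \nabla\varphi_j^\top$ by the usual quadratic-covariation computation, the jump term produces the pushforward measure $\overline\nu(\overline x,A)=\int \bm 1_A(\varphi(x+\xi)-\varphi(x))\,\nu(x,d\xi)$, and the drift is forced to be $\overline b_i(\overline x)=\Gcal\varphi_i(x)$ precisely so that the first-order jump compensator terms cancel correctly.

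**Key steps in order.** (i) Reduce to bounded $C^2$ test functions $f$ on $\R^k$ and localize so that $f\circ\varphi$ and its first two derivatives are controlled. (ii) Apply It\^o's formula to $f\circ\varphi$ using the known characteristics of $X_t$, obtaining a decomposition into a continuous-martingale part, a purely discontinuous part, and a finite-variation part. (iii) Identify the finite-variation (drift) part as $\int_0^t \overline\Gcal f(\overline X_s)\,ds$ by the chain-rule computation, verifying along the way that each of the three coefficient formulas emerges; the jump part requires rewriting $f(\varphi(X_s))-f(\varphi(X_{s-}))$ in terms of $\overline\xi=\varphi(x+\xi)-\varphi(x)$ and compensating. (iv) Remove the localization to conclude that the displayed expression is a genuine local martingale for all bounded $C^2$ $f$, which is exactly \eqref{eq:f-Gf loc mg_NEW} for $\overline X_t$ with operator $\overline\Gcal$. (v) Verify that $\overline\Gcal$ is well-defined on $\Pol(\varphi(E))$: condition \eqref{DefPol1} follows because $\varphi$ is polynomial, so moments of $\overline\nu$ reduce to moments of $\nu$ of the same or mixed orders, all finite by \eqref{DefPol1} for $\Gcal$; condition \eqref{DefPol2} follows from $\psi\circ\varphi=\id$ on $E$ together with Lemma~\ref{lemGamma0}, which guarantees that polynomials vanishing on $\varphi(E)$ pull back to polynomials vanishing on $E$ and hence are annihilated.

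**The main obstacle.** The principal difficulty is the integrability bookkeeping in the jump term rather than the algebra of matching coefficients. Because $\varphi$ is polynomial, the mapped jump $\overline\xi=\varphi(x+\xi)-\varphi(x)$ can grow polynomially in $\|\xi\|$, so establishing that $\overline\nu$ has the integrability required of a jump-diffusion kernel (namely $\int(\|\overline\xi\|\wedge\|\overline\xi\|^2)\,\overline\nu(\overline x,d\overline\xi)<\infty$) and that the compensated jump integrals in It\^o's formula actually converge relies essentially on the higher-moment hypothesis \eqref{DefPol1} for $\nu$. I would discharge this by a Taylor expansion of $\varphi_i(x+\xi)-\varphi_i(x)$ in $\xi$, bounding the linear term against $\|\xi\|$ and the higher-order terms against powers $\|\xi\|^n$ that are $\nu$-integrable by assumption, uniformly on compacts in $x$. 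Closely tied to this is justifying that the assumed special-semimartingale property of $\overline X_t$ is consistent with---indeed forces---the candidate drift $\overline b_i=\Gcal\varphi_i$; this is where the hypothesis that $\overline X_t$ is a special semimartingale is genuinely needed, since it guarantees the finite-variation part identified in step (ii) is the predictable one and that no large-jump truncation is lost.
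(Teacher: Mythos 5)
Your proposal is correct and follows essentially the same route as the paper: the It\^o-formula/characteristics-matching computation that you carry out by hand (with localization and the integrability bookkeeping for the mapped jumps) is exactly what the paper outsources to \citet[Proposition~3]{Kallsen:2006} ``in conjunction with a direct calculation,'' and your verification of \eqref{DefPol1} for $\overline\nu$ via the polynomial growth of $\varphi$ and of \eqref{DefPol2} via pullback matches the paper's argument. One minor remark: your appeal to Lemma~\ref{lemGamma0} in step (v) is unnecessary---if $f\in\Pol(\R^k)$ vanishes on $\varphi(E)$, then $\varphi^*f$ vanishes on $E$ directly by definition of the pullback, so \eqref{DefPol2} for $\Gcal$ together with $\psi\circ\varphi=\id$ on $E$ already yields $\overline\Gcal f=0$ on $\varphi(E)$.
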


\begin{proof}
Since $\overline X_t$ is a special semimartingale, \citet[Proposition~3]{Kallsen:2006} in conjunction with a direct calculation shows that it is a jump-diffusion with extended generator $\overline\Gcal=\psi^*\Gcal \varphi^*$ of the stated form. In particular, the jump measure satisfies
\[
\int_{\R^k} \|\overline\xi\|^n\, \overline\nu(\overline x,d\overline\xi) = \int_{\R^d} \|\varphi(x+\xi) - \varphi(x)\|^n\, \nu(x,d\xi), \quad \text{where $\overline x=\varphi(x)$,}
\]
which is finite for every $\overline x\in\varphi(E)$ due to \eqref{DefPol1} and since $\varphi(x)$ is polynomial. Finally, if $f\in\Pol(\R^k)$ vanishes on $\varphi(E)$, then $\varphi^*f\in\Pol(\R^d)$ vanishes on $E$, therefore $\Gcal\varphi^*f$ also vanishes on $E$ in view of \eqref{DefPol2}, and hence $\overline\Gcal f$ vanishes on $\varphi(E)$. Thus $\overline\Gcal$ is well-defined on $\Pol(\varphi(E))$.
\end{proof}

\section{Locally Absolutely-Continuous Measure Change}\label{applacmc}
We sketch a situation that may occur in applications for the choice of an auxiliary probability kernel $w(\bm{dx})$ satisfying assumption~\eqref{assPE2}.

Let $\Q$ be a probability measure that is equivalent to $\P$ on each $\Fcal_t$ with Radon-Nikodym density $D_t$. We define $w(\bm{dx})$ as the $\Q$-regular conditional distribution of $\bm X$ given $\Fcal_{t_0}$. Then \eqref{assPE0} holds with likelihood ratio function given by the $\Q$-regular conditional distribution of ${D_{t_m}}/{D_{t_0}}$ given $\Fcal_{t_0}\vee\sigma(\bm X)$,
\begin{equation}\label{claimD1}
  \ell(\bm x)=  \E_\Q\left[\frac{D_{t_m}}{D_{t_0}}\mid \Fcal_{t_0},\,\bm X=\bm x\right] ,
\end{equation}
where we set ${D_{t_m}}/{D_{t_0}}=0$ if $D_{t_0}=0$. Indeed, let $f(\bm x)$ be a bounded measurable function on $\R^m$. Taking conditional expectation gives
\begin{align*}
  \int_{\R^m} f(\bm x)g(\bm{dx}) &= \E_\P\left[ f(\bm X)\mid\Fcal_{t_0}\right]  = \E_\Q\left[ f(\bm X)\frac{D_{t_m}}{D_{t_0}}\mid\Fcal_{t_0}\right]\\
  &= \E_\Q\left[ \E_\Q\left[ f(\bm X)\frac{D_{t_m}}{D_{t_0}}\mid\Fcal_{t_0}\vee\sigma(\bm X)\right] \mid\Fcal_{t_0}\right] \\
  &= \E_\Q\left[ f(\bm X)\ell(\bm X)\mid\Fcal_{t_0}\right] =\int_{\R^m} f(\bm x)\ell(\bm x)w(\bm{dx}),
\end{align*}
which proves the claim~\eqref{claimD1}.

The likelihood ratio function satisfies the estimate
\begin{equation}\label{boundD1}
  \int_{\R^m} \ell(\bm x)^2 w(\bm{dx})= \E_\Q\left[ \E_\Q\left[ \frac{D_{t_m}}{D_{t_0}}\mid\Fcal_{t_0}\vee\sigma(\bm X)\right] ^2 \mid\Fcal_{t_0}\right] \le \E_\Q\left[ \left(\frac{D_{t_m}}{D_{t_0}}\right)^2\mid\Fcal_{t_0}\right].
\end{equation}
The bound in \eqref{boundD1} is sharp to the extent that ${D_{t_m}}/{D_{t_0}}$ could be $\Fcal_{t_0}\vee\sigma(\bm X)$-measurable such that we have equality in \eqref{boundD1}. The estimate \eqref{boundD1} can be useful for verifying assumption~\eqref{assPE2}.

In practice we could approximate $w(\bm{dx})$, the $\Q$-regular conditional distribution of $\bm X$ given $\Fcal_{t_0}$, by simulating $X_t$ under $\Q$. Specifically, we would estimate the Fourier coefficients
\begin{equation}\label{FkQeq}
  F_k = \E_\Q\left[ q_k(\bm X)F(\bm X)\mid\Fcal_{t_0}\right]
\end{equation}
in \eqref{Fkeq} by (nested) Monte-Carlo methods. This addresses property~\ref{desP2}. If we further assume that $X_t$ is a polynomial jump-diffusion with respect to $\Q$, then $w(\bm{dx})$ admits closed-form $\Fcal_{t_0}$-conditional moments, as indicated in property~\ref{desP1}.

\section{Proofs}

This appendix contains the proofs of the lemmas and theorems in the main text.

\subsection{Proof of Theorem~\ref{T:moments}} \label{S:moments_proof}

The proof of Theorem~\ref{T:moments} builds on the following four lemmas.

\begin{lem}\label{lemlocmart}
The local martingale property~\eqref{eq:f-Gf loc mg_NEW} holds for any $C^2$ function $f(x)$ on $\R^d$ satisfying
\begin{equation}\label{eqfff}
 V_t=\int_0^t\int_{\R^d} \left| f(X_s+\xi) - f(X_s) - \xi^\top \nabla f(X_s)\right|\nu(X_s,d\xi)\,ds < \infty.
\end{equation}
\end{lem}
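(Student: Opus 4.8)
The plan is to obtain~\eqref{eq:f-Gf loc mg_NEW} for such $f$ by applying It\^o's formula directly to the semimartingale $X_t$, whose characteristics are recorded in the footnote to~\eqref{eq:f-Gf loc mg_NEW}: with respect to the identity truncation function these are $B_t=\int_0^t b(X_s)\,ds$, $C_t=\int_0^t a(X_s)\,ds$, and jump compensator $\nu(X_{s-},d\xi)\,ds$, and $M_t=X_t-X_0-B_t$ is a local martingale. Writing $\mu$ for the jump measure of $X_t$, the idea is to read off the bounded-variation drift from It\^o's formula, recognize it as $\int_0^t\Gcal f(X_s)\,ds$, and use~\eqref{eqfff} to control the remaining jump term.

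Concretely, It\^o's formula for the $C^2$ function $f$ (see \citet[Theorem~I.4.57]{Jacod/Shiryaev:2003}) gives
\begin{align*}
f(X_t)-f(X_0) &= \int_0^t\nabla f(X_{s-})^\top\,dX_s + \frac12\int_0^t\tr\!\big(a(X_s)\nabla^2 f(X_s)\big)\,ds \\
&\quad + \int_0^t\!\!\int_{\R^d}\big(f(X_{s-}+\xi)-f(X_{s-})-\xi^\top\nabla f(X_{s-})\big)\,\mu(ds,d\xi).
\end{align*}
I would then split $dX_s=dM_s+b(X_s)\,ds$ in the first integral and compensate the jump term by writing $\mu=(\mu-\nu)+\nu$ with $\nu(ds,d\xi)=\nu(X_{s-},d\xi)\,ds$. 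Collecting the three resulting bounded-variation contributions --- the gradient paired with $b$, the Hessian paired with $a$, and the $\nu$-compensator of the jumps --- reproduces exactly $\int_0^t\Gcal f(X_s)\,ds$ by the definition~\eqref{eq:G} of $\Gcal$ (using that $ds$-a.e.\ one may replace $X_{s-}$ by $X_s$). Hence
\begin{align*}
f(X_t)-f(X_0)-\int_0^t\Gcal f(X_s)\,ds &= \int_0^t\nabla f(X_{s-})^\top\,dM_s \\
&\quad + \int_0^t\!\!\int_{\R^d}\big(f(X_{s-}+\xi)-f(X_{s-})-\xi^\top\nabla f(X_{s-})\big)(\mu-\nu)(ds,d\xi),
\end{align*}
and it remains to show the right-hand side is a local martingale.

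The first term is the stochastic integral of the locally bounded, left-continuous predictable integrand $\nabla f(X_{s-})$ against the local martingale $M$, hence a local martingale. The second term is where assumption~\eqref{eqfff} enters, and this is the crux of the argument: \eqref{eqfff} says precisely that $\int_0^t\!\int_{\R^d}|g(s,\xi)|\,\nu(X_s,d\xi)\,ds<\infty$ for $g(s,\xi)=f(X_{s-}+\xi)-f(X_{s-})-\xi^\top\nabla f(X_{s-})$, which simultaneously makes the $\nu$-drift absorbed into $\Gcal f$ finite and places $g$ in the domain of integration against the compensated measure $\mu-\nu$, so that the compensated integral is a purely discontinuous local martingale by \citet[Theorem~II.1.33]{Jacod/Shiryaev:2003}. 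I expect the main obstacle to lie exactly here: verifying that~\eqref{eqfff} is the correct integrability condition making both the compensation legitimate and the resulting integral a genuine (rather than merely formal) local martingale; the computational matching of the drift with $\Gcal f$ is routine. Once this is settled, the two martingale terms combine to yield~\eqref{eq:f-Gf loc mg_NEW}.
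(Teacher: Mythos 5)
Your proof is correct and is essentially the paper's argument written out in full: the paper's one-line proof observes that \eqref{eqfff} places $V_t$ in $\Acal^+_{loc}$ and then cites \citet[Theorem~II.1.8 and proof of Theorem~II.2.42]{Jacod/Shiryaev:2003}, where the latter is precisely the It\^o-formula decomposition into $\int_0^t\nabla f(X_{s-})^\top dM_s$ plus the compensated jump integral that you carry out explicitly. Your appeal to Theorem~II.1.33 rather than II.1.8 for the local martingale property of the compensated jump integral is an inessential difference in citation.
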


\begin{proof}
Property \eqref{eqfff} states that $V_t$ is in $\Acal^+_{loc}$. The lemma now follows from \citet[Theorem~II.1.8 and proof of Theorem~II.2.42]{Jacod/Shiryaev:2003}.
\end{proof}

For the rest of this section, we assume that $\Gcal$ is polynomial on $E$ and we let $f\in\Pol_n(E)$. Then the process
\[ M^f_t=f(X_t)-f(X_0)-\int_0^t \Gcal f(X_s)\,ds\]
is well defined.

\begin{lem}\label{lemLM}
$M^f_t$ is a local martingale.
\end{lem}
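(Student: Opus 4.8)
The plan is to reduce the claim to Lemma~\ref{lemlocmart}, whose integrability hypothesis~\eqref{eqfff} is the only thing that really needs checking. First I would fix a polynomial representative $q\in\Pol_n(\R^d)$ of $f$, i.e.\ $q|_E=f$ with $\deg q\le n$. Since $X_t$ is $E$-valued and $\Gcal$ is well-defined on $\Pol(E)$, so that $\Gcal q=\Gcal f$ on $E$ independently of the chosen representative by~\eqref{DefPol2}, we have $q(X_s)=f(X_s)$ and $\Gcal q(X_s)=\Gcal f(X_s)$ for all $s$, whence $M^q_t=M^f_t$. Thus it suffices to show that $M^q_t$ is a local martingale, and since $q$ is $C^2$ on $\R^d$ this follows from Lemma~\ref{lemlocmart} as soon as~\eqref{eqfff} is verified for $q$.

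To verify~\eqref{eqfff} I would use that $q$ has degree at most $n$, so its second-order Taylor expansion around $x$ is exact and finite:
\[
q(x+\xi)-q(x)-\xi^\top\nabla q(x)=\sum_{2\le|\bm\alpha|\le n}\frac{\partial^{\bm\alpha}q(x)}{\bm\alpha!}\,\xi^{\bm\alpha}.
\]
Taking absolute values and integrating against $\nu(x,d\xi)$ yields the pointwise bound
\[
\int_{\R^d}\bigl|q(x+\xi)-q(x)-\xi^\top\nabla q(x)\bigr|\,\nu(x,d\xi)\le\sum_{2\le|\bm\alpha|\le n}\frac{|\partial^{\bm\alpha}q(x)|}{\bm\alpha!}\int_{\R^d}|\xi^{\bm\alpha}|\,\nu(x,d\xi).
\]
Each $\partial^{\bm\alpha}q$ is a polynomial, hence locally bounded on $\R^d$, so it remains to control the jump-moment factors on $E$.

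For these I would use $|\xi^{\bm\alpha}|\le\|\xi\|^{|\bm\alpha|}$ together with local boundedness of the absolute moments of $\nu$ on $E$. Lemma~\ref{LemPPchar} gives that $\int_{\R^d}\xi^{\bm\beta}\nu(\cdot,d\xi)$ is locally bounded on $E$ for every $|\bm\beta|\ge2$; expanding $\|\xi\|^{2m}=(\sum_i\xi_i^2)^m$ into a finite sum of monomials $\xi^{\bm\beta}$ with all entries of $\bm\beta$ even, so that $\xi^{\bm\beta}\ge0$, shows that $\int_{\R^d}\|\xi\|^{2m}\nu(\cdot,d\xi)$ is locally bounded on $E$ for each $m\ge1$, and the odd powers are handled via $\|\xi\|^{2m+1}\le\tfrac12(\|\xi\|^{2m}+\|\xi\|^{2m+2})$. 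Hence $\int_{\R^d}|\xi^{\bm\alpha}|\nu(\cdot,d\xi)$ is locally bounded on $E$ for all $|\bm\alpha|\ge2$.

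Finally I would combine these facts with the observation that, $X_t$ being a (special) semimartingale, its paths are c\`adl\`ag, so $\{X_s:s\le t\}$ lies in a random bounded set for each fixed $t$. On the intersection of its closure with $E$ both the polynomials $\partial^{\bm\alpha}q$ and the functions $\int_{\R^d}|\xi^{\bm\alpha}|\nu(\cdot,d\xi)$ are bounded, so the integrand defining $V_t$ in~\eqref{eqfff} is bounded on $[0,t]$ and $V_t<\infty$ follows. Lemma~\ref{lemlocmart} then gives that $M^q_t=M^f_t$ is a local martingale. The only mildly delicate step is the passage from the signed moments supplied by Lemma~\ref{LemPPchar} to the absolute moments $\int|\xi^{\bm\alpha}|\nu$ appearing in the bound; everything else is exact Taylor expansion combined with local boundedness.
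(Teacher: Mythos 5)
Your proof is correct and follows essentially the same route as the paper's: reduce to Lemma~\ref{lemlocmart} and verify \eqref{eqfff} by writing the integrand as a linear combination of monomials $x^{\bm\beta}\xi^{\bm\gamma}$ with $2\le|\bm\gamma|\le n$, then invoke the local boundedness of the jump moments supplied by Lemma~\ref{LemPPchar}. The additional details you provide (fixing a representative $q$ of $f$, the even/odd power trick to pass from the signed moments of Lemma~\ref{LemPPchar} to absolute moment bounds, and the c\`adl\`ag path-boundedness step) simply make explicit what the paper's terse proof leaves implicit.
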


\begin{proof}
In view of Lemma~\ref{lemlocmart} it is enough to show that \eqref{eqfff} holds. But $W(x,\xi)=f(x +\xi) - f(x) - \xi^\top \nabla f(x)$ is a linear combination of monomials $x^{\bm \beta}\xi^{\bm \gamma}$ with $2\le |\bm\gamma|\le n$. Hence $|W(x,\xi)| \le C(x) \left( \|\xi\|^2 + \|\xi\|^{2n}\right)$ for some polynomial $C(x)$. Now \eqref{eqfff} follows from Lemma~\ref{LemPPchar}.
 \end{proof}

\begin{lem} \label{L:moment bound}
For any $k\in\N$ there is a finite constant $C$ such that
\[
\E[ 1 + \|X_t\|^{2k} \mid \Fcal_0] \le \big( 1 + \|X_0\|^{2k} \big)\,e^{Ct}, \quad t\ge0.
\]
\end{lem}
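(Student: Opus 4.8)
The plan is to apply the local martingale property to the single polynomial $f(x)=1+\|x\|^{2k}$ and close a Gronwall estimate. The starting observation is that $f\in\Pol_{2k}(\R^d)$, so by Definition~\ref{D:PPg} the polynomial property of $\Gcal$ gives $\Gcal f\in\Pol_{2k}(E)$. Since every polynomial of degree at most $2k$ is dominated by a constant multiple of $1+\|x\|^{2k}$, there is a finite constant $C$ with $\Gcal f(x)\le C f(x)$ for all $x\in E$. This linear-in-$f$ bound on the drift of $f(X_t)$ is exactly what the polynomial property buys us, and it is the structural heart of the argument.

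Next I would set up the stochastic analysis. By Lemma~\ref{lemLM} the process $M^f_t = f(X_t)-f(X_0)-\int_0^t\Gcal f(X_s)\,ds$ is a local martingale. To turn this into an identity after taking conditional expectations, I would localize with the combination $\sigma_{n,m}=\tau_n\wedge\rho_m$, where $\tau_n=\inf\{t:\|X_t\|\ge n\}$ and $(\rho_m)$ is a sequence reducing $M^f$ to a true martingale. Stopping at $\tau_n$ keeps $\|X_s\|<n$ for $s<\sigma_{n,m}$, so the finite-variation term $\int_0^{t\wedge\sigma_{n,m}}|\Gcal f(X_s)|\,ds$ is bounded by a deterministic constant; this secures the integrability needed to read off, from $\E[M^f_{t\wedge\sigma_{n,m}}\mid\Fcal_0]=0$,
\[ \E[f(X_{t\wedge\sigma_{n,m}})\mid\Fcal_0] = f(X_0) + \E\Big[\int_0^{t\wedge\sigma_{n,m}}\Gcal f(X_s)\,ds \;\Big|\; \Fcal_0\Big]. \]

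Then I would invoke the drift bound together with $f\ge0$. Writing $u(t)=\E[f(X_{t\wedge\sigma_{n,m}})\mid\Fcal_0]$ and using $\Gcal f\le C f$ and $f(X_s)\mathbf{1}_{\{s<\sigma_{n,m}\}}\le f(X_{s\wedge\sigma_{n,m}})$, Tonelli gives $u(t)\le f(X_0)+C\int_0^t u(s)\,ds$, and Gronwall's inequality yields $u(t)\le f(X_0)\e^{Ct}$, with $f(X_0)$ treated as the $\Fcal_0$-measurable initial value. Finally I would remove the localization: letting $m\to\infty$ (so $\sigma_{n,m}\uparrow\tau_n$) and then $n\to\infty$ (so $\tau_n\uparrow\infty$, as the c\`adl\`ag paths of $X$ do not explode on $[0,t]$), the conditional Fatou lemma passes the bound to the limit, giving $\E[f(X_t)\mid\Fcal_0]\le f(X_0)\e^{Ct}$, which is the claim.

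I expect the main obstacle to be the bookkeeping of the localization: ensuring the stopped processes are genuine martingales and that all conditional expectations are finite before the Gronwall step, since a jump of $X$ at $\tau_n$ prevents $f(X_{\cdot\wedge\tau_n})$ from being bounded on its own, which is why the double localization $\tau_n\wedge\rho_m$ is needed. Once that is in place the remaining estimates are routine, and the one genuinely substantive input---the inequality $\Gcal f\le Cf$---is immediate from the polynomial property.
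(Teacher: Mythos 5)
Your proof is correct and follows essentially the same route as the paper: the same choice $f(x)=1+\|x\|^{2k}$, the bound $\Gcal f\le Cf$ from the polynomial property, localization that simultaneously reduces $M^f$ to a martingale and keeps $\|X\|$ bounded (the paper's single sequence $T_m$ plays the role of your $\tau_n\wedge\rho_m$), followed by Gronwall and Fatou. The only cosmetic difference is that you spell out the double localization and the inequality $f(X_s)\mathbf{1}_{\{s<\sigma\}}\le f(X_{s\wedge\sigma})$, which the paper leaves implicit.
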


\begin{proof}
We recall the argument in \citet[Theorem~2.10]{Cuchiero/etal:2012} or \citet[Lemma~B.1]{fil_lar_16}. Let $f(x)=1+\|x\|^{2k}$, and let $C$ be a finite constant such that $|\Gcal f(x)|\le Cf(x)$ on $E$. Such a constant exists by the polynomial property of $\Gcal$. Let $0\le T_1\le T_2\le\cdots$ be a localizing sequence for the local martingale $M^f_t$, see Lemma~\ref{lemLM}, such that $\|X_t\|\le m$ for $t<T_m$. Then
\[
\E\left[ f(X_{t\wedge T_m}) \mid \Fcal_0 \right] = f(X_0) + \E\left[ \int_0^{t\wedge T_m} \Gcal f(X_s)\,ds \mid \Fcal_0 \right]
\le f(X_0) + C \int_0^t \E\left[ f(X_{s\wedge T_m}) \mid \Fcal_0 \right]\,ds.
\]
Gronwall's inequality and Fatou's lemma now yield the result.
\end{proof}

\begin{lem} \label{L:sqr int mg}
For any finite $c$ the process $M^f_t 1_{\{\|X_0\|\le c\}}$ is a martingale.
\end{lem}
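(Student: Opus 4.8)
The plan is to upgrade the local martingale $M^f$ to a genuine martingale, after the $\Fcal_0$-measurable truncation $1_{\{\|X_0\|\le c\}}$, by producing a localizing sequence along which the stopped processes are bounded in $L^2$ \emph{uniformly}, and then passing to the limit. Write $n=\deg f$ and $g(x)=1+\|x\|^{2n}$, and recall from the proof of Lemma~\ref{L:moment bound} that there is a finite constant $C$ with $|\Gcal g|\le Cg$ on $E$. Both $|f|$ and $|\Gcal f|$ are dominated by a constant times $1+\|x\|^{n}$, hence $f(x)^2$ and $(\Gcal f(x))^2$ are bounded by a constant times $g(x)$.

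First I would build one localizing sequence that serves all purposes. Let $U_m=\inf\{t:\|X_t\|\ge m\}$; since $X$ has c\`adl\`ag paths it is bounded on compact time intervals, so $U_m\uparrow\infty$. Intersecting a localizing sequence for $M^f$ and one for $M^g$ with $U_m$ yields stopping times $T_m\uparrow\infty$ that localize $M^f$ and $M^g$ simultaneously and satisfy $\|X_s\|\le m$ for $s<T_m$ (stopping a uniformly integrable martingale keeps it one). For each fixed $m$, the process $N^{(m)}_t:=M^f_{t\wedge T_m}1_{\{\|X_0\|\le c\}}$ is a martingale, since $1_{\{\|X_0\|\le c\}}$ is $\Fcal_0$-measurable and $M^f_{\cdot\wedge T_m}$ is a martingale.

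The key step, and the main obstacle, is a moment bound \emph{at the stopped times} $t\wedge T_m$ that is uniform in $m$. Running the Gronwall argument from the proof of Lemma~\ref{L:moment bound} with this common $T_m$ — using that $T_m$ localizes $M^g$ and that $\|X_s\|\le m$ for $s<T_m$ keeps every integral finite — gives $\E[\,g(X_{t\wedge T_m})\mid\Fcal_0\,]\le g(X_0)\,e^{Ct}$, now directly at the stopped time rather than only after the Fatou step that sends $m\to\infty$. On $\{\|X_0\|\le c\}$ the right-hand side is at most $(1+c^{2n})e^{Ct}$, a finite constant. Combining this with $f(x)^2\le\mathrm{const}\cdot g(x)$, the Cauchy--Schwarz estimate $(\int_0^t|\Gcal f(X_u)|\,du)^2\le t\int_0^t(\Gcal f(X_u))^2\,du$, and Tonelli, I would bound
\[
\E\big[(N^{(m)}_t)^2\big]\le 3\,\E\Big[\big(f(X_{t\wedge T_m})^2+f(X_0)^2+(\textstyle\int_0^t|\Gcal f(X_u)|\,du)^2\big)1_{\{\|X_0\|\le c\}}\Big]
\]
by a finite constant independent of $m$: the middle and last terms use the unstopped bound $\E[g(X_u)1_{\{\|X_0\|\le c\}}]\le(1+c^{2n})e^{Cu}$ from Lemma~\ref{L:moment bound}, and the first term uses the stopped bound just derived.

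Hence $\{N^{(m)}_t\}_m$ is bounded in $L^2$, therefore uniformly integrable. Since $T_m\uparrow\infty$ we have $N^{(m)}_t=M^f_t1_{\{\|X_0\|\le c\}}$ for all $m$ large enough (pointwise), so $N^{(m)}_t\to M^f_t1_{\{\|X_0\|\le c\}}$ in $L^1$. Passing to the limit in $\E[N^{(m)}_t\mid\Fcal_s]=N^{(m)}_s$ for $s\le t$ (conditional expectation is an $L^1$-contraction) gives $\E[M^f_t1_{\{\|X_0\|\le c\}}\mid\Fcal_s]=M^f_s1_{\{\|X_0\|\le c\}}$, which is the desired martingale property. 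The delicate point, and the reason the indicator is indispensable, is that Lemma~\ref{L:moment bound} only supplies an $\Fcal_0$-conditional bound proportional to $1+\|X_0\|^{2n}$, which need not be integrable; restricting to $\{\|X_0\|\le c\}$ converts it into the genuine constant that makes the $L^2$ bound, and thus the uniform integrability, unconditional.
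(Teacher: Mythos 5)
Your proof is correct, and it takes a genuinely different route from the paper's. The paper stays with the unstopped local martingale $N^f_t=M^f_t 1_{\{\|X_0\|\le c\}}$ and upgrades it to a martingale through its bracket: by Lemma~\ref{L:QV Gamma} (after verifying condition~\eqref{condfgnu} with the same monomial estimate as in Lemma~\ref{lemLM}), the predictable compensator of $[N^f,N^f]_t$ is $\int_0^t\Gamma(f,f)(X_s)\,ds\,1_{\{\|X_0\|\le c\}}$; since $\Gamma(f,f)$ is a polynomial on $E$, Lemma~\ref{L:moment bound} makes this compensator integrable, and a local martingale with integrable predictable bracket is a square-integrable martingale. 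You avoid the carr\'e-du-champ machinery entirely: you build one localizing sequence $T_m$ along which the state stays bounded and both $M^f$ and $M^g$ (with $g(x)=1+\|x\|^{2n}$) are stopped martingales, extract from the proof of Lemma~\ref{L:moment bound} its pre-Fatou, stopped-time estimate $\E[g(X_{t\wedge T_m})\mid\Fcal_0]\le g(X_0)\e^{Ct}$ — legitimate, since that Gronwall argument only needs $T_m$ to localize $M^g$ and to keep $\|X_s\|\le m$ for $s<T_m$, both of which your $T_m$ satisfies — and combine it with the unstopped bound of Lemma~\ref{L:moment bound} and Cauchy--Schwarz to obtain an $L^2$ bound on $M^f_{t\wedge T_m}1_{\{\|X_0\|\le c\}}$ that is uniform in $m$; uniform integrability then transports the martingale property to the limit. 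The trade-off: the paper's argument is shorter given its appendix infrastructure and yields structural information (an explicit predictable bracket for $N^f$, hence square-integrability with an identified compensator), whereas yours is more elementary and self-contained, needing only stopped martingales, Gronwall, and UI convergence, and it likewise recovers $M^f_t 1_{\{\|X_0\|\le c\}}\in L^2$ by Fatou. Both proofs hinge on the truncation by $1_{\{\|X_0\|\le c\}}$ for exactly the reason you identify: Lemma~\ref{L:moment bound} bounds conditional moments by $(1+\|X_0\|^{2n})\e^{Ct}$, which becomes a usable constant only on $\{\|X_0\|\le c\}$.
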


\begin{proof}
Let $c$ be a finite number. Then $N^f_t=M^f_t 1_{\{\|X_0\|\le c\}}$ is a local martingale by Lemma~\ref{lemLM} with quadratic variation $[N^f,N^f]_t=[f(X),f(X)]_t 1_{\{\|X_0\|\le c\}}$. We claim that its predictable compensator is given by
\[\langle N^f,N^f\rangle_t= \int_0^t \Gamma(f,f)(X_s)\,ds 1_{\{\|X_0\|\le c\}}.\]
Indeed, in view of Lemma~\ref{L:QV Gamma} the claim follows as soon as \eqref{condfgnu} holds for $g=f$. But $W(x,\xi)=(f(x +\xi) - f(x))^2$ is a linear combination of monomials $x^{\bm \beta}\xi^{\bm \gamma}$ with $2\le |\bm\gamma|\le 2n$. Hence $|W(x,\xi)| \le C(x) \left( \|\xi\|^2 + \|\xi\|^{2n}\right)$ for some polynomial $C(x)$, and \eqref{condfgnu} follows from Lemma~\ref{LemPPchar}.

By \eqref{eq:Gamma}, $\Gamma(f,f)(x)$ is a polynomial on $E$. Combining this with Lemma~\ref{L:moment bound} we infer that $\E[\langle N^f,N^f\rangle_t]<\infty$ for all $t\ge 0$, and hence $N^f_t$ is a square-integrable martingale.
\end{proof}

We now prove Theorem~\ref{T:moments}. Fix a finite $c$ and $t\ge 0$. By Lemma~\ref{L:sqr int mg}, the row vector valued function $F(T)=\E[(1,H(X_T)^\top)1_{\{\|X_0\|\le c\}}\mid\Fcal_t]$ satisfies for $T\ge t$
\[F(T) = (1,H(X_t)^\top)1_{\{\|X_0\|\le c\}} + \int_t^T \E[\Gcal (1,H^\top)(X_s)1_{\{\|X_0\|\le c\}}\mid\Fcal_t]\,ds = F(t) +\int_t^T F(s) G\,ds .\]
Hence $\E[(1,H(X_T)^\top)\mid\Fcal_t]1_{\{\|X_0\|\le c\}}=F(T)=(1,H(X_t)^\top)\e^{(T-t)G}1_{\{\|X_0\|\le c\}}$. Theorem~\ref{T:moments} now follows by letting $c\uparrow\infty$.

\subsection{Proof of Lemma~\ref{lemaffine}}
We first assume that $0\in E$ and the affine span of $E$ is all of $\R^d$.

Assume $\Gcal$ is affine. Straightforward calculations show that
\[
\Gcal \e^{ u^\top x}  = \left(\frac{1}{2}u^\top a(x) u + b(x)^\top u + \int_{\R^d} \left( e^{u^\top \xi} - 1 - u^\top\xi \right) \nu(x,d\xi) \right) \e^{ u^\top x}
\]
so that, by virtue of the assumed relation~\eqref{eq:D:affine}, we obtain
\begin{equation} \label{eq:Ri det}
F(u) + R(u)^\top x  = \frac{1}{2}u^\top a(x) u + b(x)^\top u
   + \int_{\R^d} \left( e^{u^\top \xi} - 1 - u^\top\xi \right) \nu(x,d\xi) \quad \text{for all $x\in E$, $u\in{\rm i}\R^d$}.
\end{equation}
We claim that $F(u)$ and $R(u)$ are of the form~\eqref{eq:affine Ri}. Since $0\in E$, this clear for $F(u)$, setting $a_0=a(0)$, $b_0=b(0)$, $\nu_0(d\xi)=\nu(0,d\xi)$. Next, since the affine span of $E$ is all of $\R^d$, there exist numbers $\lambda_1,\ldots,\lambda_d$ with $\sum_{k=1}^d \lambda_k=1$ and points $x^1,\ldots,x^d\in E$ such that $\lambda_1x^1+\cdots+\lambda_dx^d=e_1$, the first canonical unit vector. Evaluating both sides of~\eqref{eq:Ri det} at $x=x^k$, multiplying by $\lambda_k$, summing over $k$, and using the form of $F(u)$, it follows that $R_1(u)$ is of the form~\eqref{eq:affine Ri} with
\[
a_1 = \sum_{k=1}^d \lambda_k a(x^k) - a_0, \qquad b_1 = \sum_{k=1}^d \lambda_k b(x^k) - b_0, \qquad \nu_1(d\xi) = \sum_{k=1}^d \lambda_k \nu(x^k,d\xi) - \nu_0(d\xi).
\]
The same argument shows that $R_2(u),\ldots,R_d(u)$ are also of the form~\eqref{eq:affine Ri}.

It remains to prove~\eqref{eq:affine abnu}. Given $F(u)$ and $R(u)$ just obtained, it is clear that taking $a(x)$, $b(x)$, $\nu(x,d\xi)$ as in~\eqref{eq:affine abnu} is consistent with~\eqref{eq:Ri det}. Furthermore, for each fixed $x\in E$, knowing the right-hand side of~\eqref{eq:Ri det} for all $u\in{\rm i}\R^d$ uniquely determines $a(x)$, $b(x)$, $\nu(x,d\xi)$; see \citet[Lemma~II.2.44]{Jacod/Shiryaev:2003}. Thus~\eqref{eq:affine abnu} is in fact the only possibility, completing the proof of the forward direction.

For the converse, assume $a(x)$, $b(x)$, $\nu(x,d\xi)$ are of the form~\eqref{eq:affine abnu}. A calculation then shows that $\Gcal$ satisfies~\eqref{eq:D:affine} with $F(u)$ and $R(u)$ given by~\eqref{eq:affine Ri}, and thus is affine.

In the general case, where either $0\notin E$ or the affine span of $E$ is not $\R^d$, we apply an invertible affine transformation $T:\R^d\to\R^d$ such that $0\in T(E)$ and the affine span of $T(E)$ is $\R^{d'}\times\{0\}$ for some $d'\le d$. In these new coordinates we set the corresponding $a_i$, $b_i$, and $\nu_i(d\xi)$ to zero for $i>d'$, and then transform back by $T^{-1}$.

\subsection{Proof of Theorem~\ref{thmATF}}
Define the function $f(t,x)=\exp(\phi(T-t)+ \psi(T-t)^\top x)$ and the complex-valued process $M_t=f(t,X_t)$. Then \eqref{eq:Re bound} yields $|M_t|\le 1$. Moreover, a calculation using~\eqref{eq:Riccati} yields
\[
\partial_t f(t,x) + \Gcal f(t,x) = 0, \qquad 0\le t\le T, \quad x\in E,
\]
where $\Gcal$ acts on the real and imaginary parts of $f(t,\cdot)$ separately. Thus $M_t$ is a martingale on $[0,T]$ with $M_T=\exp(u^\top X_T)$. The affine transform formula is now just the equality $M_t=\E[M_T\mid\Fcal_t]$.

\subsection{Proof of Lemma~\ref{lemPBHL}}

The proof of Lemma~\ref{lemPBHL} builds on the following lemma.
\begin{lem} \label{L:degkn}
Any polynomial $p\in\Pol_{mn}(E)$ is of the form $p(x)=f(H(x))$ for some $f\in\Pol_m(H(E))$.
\end{lem}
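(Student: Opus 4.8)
The goal is to show $\Pol_{mn}(E)\subseteq H^\ast\Pol_m(H(E))$, that is, every polynomial on $E$ of degree $\le mn$ arises as $f\circ H$ with $f$ of degree $\le m$ on $H(E)$. The tempting approach is to substitute the identities $x_i=\ell_i(H(x))$, valid on $E$, into $p$; but this fails to control the degree. Since each $\ell_i$ is affine in $\overline x=H(x)$, a monomial $x^{\bm\alpha}$ would only be recovered as a polynomial in $\overline x$ of degree $|\bm\alpha|$, which can be as large as $mn$ rather than the required $m$. The plan is instead to exploit that $\{1,h_1,\dots,h_N\}$ already spans \emph{all} polynomials of degree $\le n$ on $E$, so that $m$-fold products of these generators reach degree $mn$ efficiently.

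I would proceed in three steps. First, I reduce to $\R^d$: by the definition of the degree of a polynomial on $E$ as the minimum over representatives, any $p\in\Pol_{mn}(E)$ admits a representative $q\in\Pol_{mn}(\R^d)$ with $p=q|_E$, so that $\Pol_{mn}(E)=\Pol_{mn}(\R^d)|_E$. Second, I factor monomials. Writing $q$ as a linear combination of monomials $x^{\bm\alpha}$ with $|\bm\alpha|\le mn$, I observe that any such multi-index splits as $\bm\alpha=\bm\alpha_1+\cdots+\bm\alpha_m$ with $|\bm\alpha_i|\le n$ for each $i$, by distributing the total degree $\le mn$ greedily across $m$ blocks. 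Hence $x^{\bm\alpha}=x^{\bm\alpha_1}\cdots x^{\bm\alpha_m}$ is a product of $m$ polynomials of degree $\le n$, which yields $\Pol_{mn}(\R^d)=\vspan\{q_1\cdots q_m:q_i\in\Pol_n(\R^d)\}$. Restricting to $E$ and using that each $q_i|_E\in\Pol_n(E)=\vspan\{1,h_1,\dots,h_N\}$, every such product restricts to a linear combination of products $h_{i_1}\cdots h_{i_m}$ with indices in $\{0,1,\dots,N\}$ and $h_0:=1$.

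Third, I read off $f$. Each product $h_{i_1}\cdots h_{i_m}$ equals $\prod_{j=1}^N h_j(x)^{\gamma_j}=H(x)^{\bm\gamma}$ for some $\bm\gamma\in\N_0^N$ with $|\bm\gamma|\le m$; thus it equals $H^\ast(\overline x^{\bm\gamma})$ with $\overline x^{\bm\gamma}\in\Pol_m(\R^N)$, whose restriction lies in $\Pol_m(H(E))$. Collecting the terms, $p=H^\ast f$ for some $f\in\Pol_m(H(E))$, which is the claim.

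The proof is not technically hard; the one point requiring care is the degree bookkeeping, namely seeing why the target degree is $m$ rather than $mn$. The conceptual crux is therefore the multi-index splitting in the second step, which replaces the degree-inflating substitution $x_i=\ell_i(H(x))$ by an efficient factorization through the degree-$n$ generators $h_j$; everything else is routine manipulation of the definitions of degree on $E$ and of the pullback $H^\ast$. An alternative is an induction on $m$ based on the identity $\Pol_{mn}(E)=\vspan\bigl(\Pol_n(E)\cdot\Pol_{(m-1)n}(E)\bigr)$, but the direct factorization above makes the induction unnecessary.
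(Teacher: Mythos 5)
Your proof is correct and follows essentially the same route as the paper's: reduce to monomials $x^{\bm\alpha}$ with $|\bm\alpha|\le mn$, split $\bm\alpha$ into $m$ multi-indices of degree at most $n$, and use that each factor $x^{\bm\alpha_i}|_E$ lies in $\Pol_n(E)=\vspan\{1,h_1,\dots,h_N\}$, i.e.\ is a linear expression in $H(x)$, so the product is a polynomial of degree at most $m$ in $\overline x$. The only cosmetic difference is that the paper keeps the product as a product of linear polynomials $f_i\in\Pol_1(\R^N)$, while you expand it into monomials $\overline x^{\bm\gamma}$ with $|\bm\gamma|\le m$; the degree bookkeeping is identical.
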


\begin{proof}
It suffices to consider monomials $p(x)=x^{\bm\alpha}$ with $|{\bm\alpha}|\le mn$. It follows by inspection that ${\bm\alpha}={\bm\alpha}_1+\cdots+{\bm\alpha}_k$ for some multi-indices ${\bm\alpha}_i\in\N^d_0$ with $|{\bm\alpha}_i|\le n$. Thus $x^{\bm\alpha_i}=f_i(H(x))$ on $E$ for some linear polynomial $f_i\in\Pol_1(\R^N)$, for each $i$. We deduce that
\[
p(x) = \prod_{i=1}^m x^{{\bm\alpha}_i} = f(H(x)) \quad\text{on $E$},
\]
where $f(\overline x) = \prod_{i=1}^m f_i(\overline x)$ is of degree at most $m$.
\end{proof}

We now prove Lemma~\ref{lemPBHL}. If a function $f\colon \R^N\to\R$ vanishes on $H(E)$, then $H^*f(x)=f(H(x))$ vanishes on $E$. Thus $H^*$ is well-defined as a map from $\Pol(H(E))$ to $\Pol(E)$, and it is linear with inverse $L^*$. It is clear that $H^*$ maps $\Pol_m(H(E))$ to $\Pol_{mn}(E)$ for each $m\in\N$. To see that $L^*$ maps $p\in\Pol_{mn}(E)$ to an element of $\Pol_m(H(E))$, observe that $p(x)=f(H(x))$ for some $f\in\Pol_m(H(E))$ by Lemma~\ref{L:degkn}, so that $L^*p(\overline x)=f(H(L(\overline x)))=f(\overline x)$. This proves Lemma~\ref{lemPBHL}.

\subsection{Proof of Theorem~\ref{T:exp}}

Since $\overline X_t$ is a special semimartingale by Lemma~\ref{lemLM}, Lemma~\ref{L:JD transformNEW_2} implies that $\overline X_t$ is an $H(E)$-valued jump-diffusion with extended generator $\overline\Gcal=L^*\Gcal H^*$, which is well-defined on $\Pol(H(E))$. Lemma~\ref{lemPBHL} implies that $\overline\Gcal$ is polynomial on $H(E)$, and the diagram~\eqref{diagXX} commutes. This completes the proof of Theorem~\ref{T:exp}.

\subsection{Proof of Theorem \ref{T:augNEW1_19}}

The proof of Theorem~\ref{T:augNEW1_19} builds on the following lemma.

\begin{lem} \label{L:GZ well def_19}
Assume \eqref{DefPolaug1_19}. Then the augmented process $\overline Z_t =(H(X_t),Y_t)$ is a jump-diffusion on $H(E)\times\R^e$ with extended generator $\overline\Gcal = \psi^\ast\Gcal \varphi^\ast$, and the operators $\Gcal$ and $\overline\Gcal$ are well-defined on $\Pol(E\times\R^e)$ and $\Pol(H(E)\times\R^e)$, respectively.
\end{lem}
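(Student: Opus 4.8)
The plan is to obtain this lemma as a direct application of the general polynomial-transformation result, Lemma~\ref{L:JD transformNEW_2}, with the polynomial map $\varphi(x,y)=(H(x),y)$ and its measurable left inverse $\psi(\overline x,y)=(L(\overline x),y)$, which satisfy $\psi\circ\varphi=\id$ on $E\times\R^e$ and $\varphi(E\times\R^e)=H(E)\times\R^e$. Once the two hypotheses of that lemma are verified—namely that $\Gcal$ is well-defined on $\Pol(E\times\R^e)$ and that $\overline Z_t=\varphi(Z_t)$ is a special semimartingale—the conclusion follows verbatim: $\overline Z_t$ is a jump-diffusion on $H(E)\times\R^e$ with extended generator $\overline\Gcal=\psi^\ast\Gcal\varphi^\ast$, and $\overline\Gcal$ is well-defined on $\Pol(\varphi(E\times\R^e))=\Pol(H(E)\times\R^e)$.

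To check that $\Gcal$ is well-defined on $\Pol(E\times\R^e)$, I would first dispose of the integrability requirement \eqref{DefPol1}: writing $\zeta=(\xi,\eta)$ and bounding $\|\zeta\|^k$ by a multiple of $\|\xi\|^k+\|\eta\|^k$, the $\xi$-moments of $\nu$ equal the $\nu^X$-moments, which are finite since $\Gcal^X$ is polynomial on $E$, while the $\eta$-moments are finite by assumption~\eqref{DefPolaug1_19}. The substantive requirement is \eqref{DefPol2}. Given $f\in\Pol(\R^{d+e})$ with $f=0$ on $E\times\R^e$, I would expand $f(x,y)=\sum_{\bm\beta}p_{\bm\beta}(x)y^{\bm\beta}$; fixing $x\in E$ and letting $y$ range over $\R^e$ shows each coefficient satisfies $p_{\bm\beta}|_E=0$. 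By linearity it then suffices to prove $\Gcal(p\,y^{\bm\beta})=0$ on $E\times\R^e$ whenever $p|_E=0$. Using the carr\'e-du-champ identity $\Gcal(p\,y^{\bm\beta})=p\,\Gcal(y^{\bm\beta})+y^{\bm\beta}\,\Gcal p+\Gamma(p,y^{\bm\beta})$ (all terms well-defined by the moment bounds just established), the first term vanishes because $p=0$ on $E$, and the second vanishes because $\Gcal p=\Gcal^X p$ for functions of $x$ alone, which is $0$ on $E$ since $\Gcal^X$ is well-defined on $\Pol(E)$.

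The main obstacle is showing that the cross term $\Gamma(p,y^{\bm\beta})$ vanishes on $E\times\R^e$. From \eqref{eq:Gamma2} this term splits into a diffusion part $\nabla p(x)^\top a^{XY}(x)\,\nabla_y(y^{\bm\beta})$ and a jump part $\int (p(x+\xi)-p(x))\bigl((y+\eta)^{\bm\beta}-y^{\bm\beta}\bigr)\,\nu(x,d\xi\times d\eta)$. Applying Lemma~\ref{lemGamma0} to $\Gcal^X$ gives, for $p|_E=0$, both $a^X(x)\nabla p(x)=0$ and $\int_{\R^d}(p(x+\xi)-p(x))^2\,\nu^X(x,d\xi)=0$ on $E$. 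For the diffusion part I would exploit positive semidefiniteness of the full matrix $a(x)\in\S^{d+e}_+$: since $\nabla p(x)^\top a^X(x)\nabla p(x)=0$, the vector $(\nabla p(x),0)$ lies in the kernel of $a(x)$, whence $a^{YX}(x)\nabla p(x)=0$, i.e.\ $\nabla p(x)^\top a^{XY}(x)=0$, and the diffusion part vanishes on $E$. For the jump part I would use that $\nu^X$ is the $\xi$-marginal of $\nu$, so the vanishing integral above equals $\int_{\R^{d+e}}(p(x+\xi)-p(x))^2\,\nu(x,d\xi\times d\eta)=0$ on $E$; Cauchy--Schwarz with respect to $\nu(x,\cdot)$ then bounds the jump part by the square root of this (zero) quantity times $\bigl(\int ((y+\eta)^{\bm\beta}-y^{\bm\beta})^2\,\nu(x,d\xi\times d\eta)\bigr)^{1/2}$, the latter being finite by \eqref{DefPolaug1_19}. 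Hence $\Gamma(p,y^{\bm\beta})=0$ on $E\times\R^e$, completing the verification of \eqref{DefPol2}.

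It remains to check that $\overline Z_t=(H(X_t),Y_t)$ is a special semimartingale. The component $Y_t$ is special because $Z_t$ is a jump-diffusion on $E\times\R^e$, and each component $h_i(X_t)$ of $H(X_t)$ is special because, by Lemma~\ref{lemLM} applied to $\Gcal^X$, the process $h_i(X_t)-h_i(X_0)-\int_0^t\Gcal^X h_i(X_s)\,ds$ is a local martingale, exhibiting a decomposition of $h_i(X_t)$ into a local martingale plus a continuous predictable finite-variation part. With both hypotheses verified, Lemma~\ref{L:JD transformNEW_2} yields all assertions of the lemma.
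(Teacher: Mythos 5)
Your proposal is correct and follows essentially the same route as the paper: reduce \eqref{DefPol2} to products $p(x)y^{\bm\beta}$ with $p|_E=0$, kill the three terms in the carr\'e-du-champ identity, verify that $\overline Z_t$ is a special semimartingale via Lemma~\ref{lemLM}, and conclude with Lemma~\ref{L:JD transformNEW_2}. The only deviation is cosmetic: where you split $\Gamma(p,y^{\bm\beta})$ into diffusion and jump parts (handling the first through the kernel of the positive semidefinite matrix $a(x)$ and the second by Cauchy--Schwarz for $\nu(x,\cdot)$), the paper disposes of the whole cross term in one stroke via the Cauchy--Schwarz inequality $|\Gamma(p,q)|^2\le\Gamma(p,p)\,\Gamma(q,q)$ for the positive semidefinite bilinear form $\Gamma$, together with $\Gamma(p,p)=\Gamma^X(p,p)=0$.
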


\begin{proof}
We first prove that $\Gcal$ is well-defined on $\Pol(E\times\R^e)$. Due to \eqref{DefPolaug1_19} we only need to verify \eqref{DefPol2}. Let $f(x,y)$ be a polynomial that vanishes on $E\times\R^e$. Collecting the monomials in $y$ yields the representation
\[
f(x,y) = \sum_{\bm\beta} p_{\bm\beta}(x)y^{\bm\beta}
\]
for finitely many polynomials $p_{\bm\beta}(x)$ on $\R^d$. For each fixed $x\in E$, $f(x,y)$ is the zero polynomial on $\R^e$, and hence $p_{\bm\beta}(x)=0$ for all $\bm\beta$. Thus, we may suppose $f(x,y)=p(x)q(y)$ where $p(x)$ vanishes on $E$ and $q(y)=y^{\bm\beta}$. One has
\[
\Gcal (pq)(x,y) = p(x)\Gcal q(x,y) + q(y)\Gcal p(x,y) + \Gamma(p,q)(x,y).
\]
The first term is zero for any $x\in E$. So is the second term, since $\Gcal p(x,y)=\Gcal^X p(x)$ and $\Gcal^X$ is well-defined on $\Pol(E)$. For the third term, note that the carr\'e-du-champ operator is bilinear and positive semidefinite, and hence satisfies the Cauchy-Schwarz inequality. That is,
\[
|\Gamma(p,q)(x,y)|^2 \le \Gamma(p,p)(x,y)\,\Gamma(q,q)(x,y).
\]
But $\Gamma(p,p)(x,y)=\Gamma^X(p,p)(x)=0$ since $\Gcal$ is well-defined on $\Pol(E)$. Thus $\Gcal (pq)(x,y)=0$ on $E\times\R^e$, and we deduce that $\Gcal$ is well-defined on $\Pol(E\times\R^e)$.

Next, since $H(X_t)$ is a special semimartingale by Lemma~\ref{lemLM}, and since $Y_t$ is a special semimartingale by assumption, $\overline Z_t$ is also a special semimartingale. Therefore, since $\Gcal$ is well-defined on $\Pol(E\times\R^e)$, it follows from Lemma~\ref{L:JD transformNEW_2} that $\overline Z_t$ is a jump-diffusion with extended generator $\overline\Gcal=\psi^*\Gcal\varphi^*$, which is well-defined on $\Pol(H(E)\times\R^e)$.
\end{proof}

We now prove Theorem~\ref{T:augNEW1_19}. Due to Lemma~\ref{L:GZ well def_19}, it remains to prove that \eqref{eq:T:aug:bYNew_19}--\eqref{eq:T:aug:nuew_19} together imply \eqref{T_augNEW1_19_PJD}, and, conversely, that \eqref{T_augNEW1_19_PJD} implies \eqref{eq:T:aug:bYNew_19}, \eqref{eq:T:aug:aYYNew_19}, and \eqref{eq:T:aug:nuew_19} for $\bm\alpha=0$. To do this we make use of Theorem~\ref{T:augNEW2_19}.

We first assume that \eqref{eq:T:aug:bYNew_19}--\eqref{eq:T:aug:nuew_19} hold, and prove property {\ref{T:augNEW12_19}} in Theorem~\ref{T:augNEW2_19}. Fix $m\in\N$ and consider any monomial $f(z)=f(x,y)=x^{\bm\alpha} y^{\bm\beta}$ with $|\bm\alpha|\le n(m-|\bm\beta|)$ and $|\bm\beta|\le m$, which then lies in $V_m$. It suffices to show that $\Gcal f$ again lies in $V_m$. Let $\widetilde a^X(x)$, $\widetilde a^{XY}(x)$, and $\widetilde a^Y(x)$ denote the modified second characteristics of $\Gcal$, that is,
\begin{gather*}
\widetilde a^X(x) = a^X(x) + \int_{\R^d} \xi\xi^\top \nu^X(x,d\xi), \qquad \widetilde a^Y(x) = a^Y(x) + \int_{\R^{d+e}} \eta\eta^\top\nu(x,d\xi\times d\eta), \\
\widetilde a^{XY}(x) = a^{XY}(x) + \int_{\R^{d+e}} \xi\eta^\top\nu(x,d\xi\times d\eta).
\end{gather*}
Furthermore, write $\nabla_xf(x,y)$ for the first $d$ components of $\nabla f(x,y)$, and similarly for $\nabla_y f(x,y)$, $\nabla^2_{xx}f(x,y)$, $\nabla^2_{xy}f(x,y)$, and $\nabla^2_{yy}f(x,y)$. We then get
\begin{align}
\Gcal f(x,y) &= \frac12 \tr\left( \widetilde a^X(x) \nabla^2_{xx}f(x,y) \right) + \tr\left( \widetilde a^{XY}(x) \nabla^2_{xy}f(x,y) \right) + \frac12 \tr\left( \widetilde a^Y(x) \nabla^2_{yy}f(x,y) \right) \label{T:aug:GZ 1} \\
&\quad + b^X(x)^\top \nabla_x f(x,y) + b^Y(x)^\top \nabla_y f(x,y) \label{T:aug:GZ 2} \\
&\quad + \int_{\R^{d+e}} \left( f(z+\zeta) - f(z) - \zeta^\top \nabla f(z) - \frac12 \zeta^\top \nabla^2 f(z) \zeta\right) \nu(x,d\zeta). \label{T:aug:GZ 3}
\end{align}

Consider first \eqref{T:aug:GZ 1}. Since $\nabla^2_{xx}f(x,y)=y^{\bm\beta}\nabla^2(x^{\bm\alpha})$ and since $\Gcal^X$ is polynomial, the first term in~\eqref{T:aug:GZ 1} is of degree at most $|\bm\alpha|$ in $x$ and $|\bm\beta|$ in $y$, and thus lies in $V_m$. Next, $\nabla^2_{xy}f(x,y)$ is of degree at most $|\bm\alpha|-1$ in $x$ and $|\bm\beta|-1$ in $y$, which together with \eqref{eq:T:aug:aXYNew_19} implies that the second term in~\eqref{T:aug:GZ 1} is of degree at most $n+1+|\bm\alpha|-1=n+|\bm\alpha|$ in $x$ and $|\bm\beta|-1$ in $y$. This term therefore also lies in $V_m$. Finally, $\nabla^2_{yy}f(x,y)$ is of degree at most $|\bm\alpha|$ in $x$ and $|\bm\beta|-2$ in $y$, which together with \eqref{eq:T:aug:aYYNew_19} implies that the third term in~\eqref{T:aug:GZ 1} is of degree at most $2n+|\bm\alpha|$ in $x$ and $|\bm\beta|-2$ in $y$. Again this yields membership in $V_m$.

Consider now \eqref{T:aug:GZ 2}. Since $\nabla_xf(x,y)=y^{\bm\beta}\nabla(x^{\bm\alpha})$ and since $\Gcal^X$ is polynomial, the first term in~\eqref{T:aug:GZ 2} is of degree at most $|\bm\alpha|$ in $x$ and $|\bm\beta|$ in $y$, and thus lies in $V_m$. Similarly as above, the second term also lies in $V_m$ due to \eqref{eq:T:aug:bYNew_19}.

Consider finally \eqref{T:aug:GZ 3}. It follows from the multi-binomial theorem that the expression in parentheses is a linear combination of monomials $x^{\bm\gamma} y^{\bm\delta} \xi^{\bm\epsilon}\eta^{\bm\upsilon}$ with $|\bm\gamma|+|\bm\epsilon|\le|\bm\alpha|$, $|\bm\delta|+|\bm\upsilon|\le|\bm\beta|$, and $|\bm\epsilon|+|\bm\upsilon|\ge3$. Thus \eqref{T:aug:GZ 3} is a linear combination of expressions of the form
\[
x^{\bm\gamma} y^{\bm\delta} \int_{\R^{d+e}} \xi^{\bm\epsilon}\eta^{\bm\upsilon} \nu(x,d\xi\times d\eta).
\]
Due to \eqref{eq:T:aug:nuew_19}, these expressions are polynomial of degree at most $|\bm\gamma|+|\bm\epsilon|+n|\bm\upsilon|$ in $x$ and $|\bm\delta|$ in $y$. Since $|\bm\gamma|+|\bm\epsilon|+n|\bm\upsilon|+n|\bm\delta| \le |\bm\alpha|+n|\bm\beta|\le nm$, it follows that \eqref{T:aug:GZ 3} lies in $V_m$. This completes the proof of property {\ref{T:augNEW12_19}} in Theorem~\ref{T:augNEW2_19}, showing that \eqref{T_augNEW1_19_PJD} holds.

Conversely, assume that that \eqref{T_augNEW1_19_PJD} holds, so that property {\ref{T:augNEW13_19}} in Theorem~\ref{T:augNEW2_19} holds as well. Since $\Gcal(y^{\bm\beta})\in V_{|\bm\beta|}$, the identity
\begin{align*}
\Gcal(y^{\bm\beta}) &= \frac12 \tr\left( \widetilde a^Y(x) \nabla^2_{yy}(y^{\bm\beta} \right) + b^Y(x)^\top \nabla_y (y^{\bm\beta}) \\
&\quad + \int_{\R^{d+e}} \left( (y+\eta)^{\bm\beta} - y^{\bm\beta} - \eta^\top \nabla (y^{\bm\beta}) - \frac12 \eta^\top \nabla^2 (y^{\bm\beta})\right) \nu(x,d\xi\times d\eta)
\end{align*}
applied with $\bm\beta=e_i$ yields $b^Y_i \in V_1$, which gives \eqref{eq:T:aug:bYNew_19}. Taking $\beta=e_i+e_j$ we similarly obtain $\widetilde a^Y_{ij} \in V_2$, which gives \eqref{eq:T:aug:aYYNew_19}. By considering $|\bm\beta|\ge3$, we obtain \eqref{eq:T:aug:nuew_19} for $\bm\alpha=0$.

\subsection{Proof of Corollary~\ref{corPQdX_19}}

The process $Z_t=(X_t,Y_t)$ is given by
\[
dZ_t=\begin{pmatrix} dX_t \\ P(X_t)\,dt \\ Q(X_{t-})\,dX_t \end{pmatrix} = K(X_{t-}) \,d\begin{pmatrix} t\\ X_t\end{pmatrix}, \quad\text{where}\quad K(x) = \begin{pmatrix} 0 & \id \\ P(x) & 0 \\ 0 & Q(x) \end{pmatrix}.
\]
Its differential characteristics can then be computed using \citet[Proposition~2]{Kallsen:2006}. One finds that they are deterministic functions $a(x)$, $b(x)$, and $\nu(x,d\zeta)$ of $X_t$, where
\begin{align}
a(x) &= K(x)\begin{pmatrix}0 & 0 \\ 0 & a^X(x) \end{pmatrix} K(x)^\top = \begin{pmatrix} a^X(x) & 0 & a^X(x)Q(x)^\top \\ 0&0&0 \\ Q(x)a^X(x) & 0 & Q(x)a^X(x)Q(x)^\top \end{pmatrix}, \label{corPQdX_eqaZ}\\
b(x) &= K(x)\begin{pmatrix}1\\b^X(x)\end{pmatrix}, \nonumber\\
\nu(x,A) &= \int_{\R^d} \bm 1_A\left(K(x)\begin{pmatrix}0\\ \xi\end{pmatrix}\right) \nu^X(x,d\xi). \nonumber
\end{align}
In particular, we have
\[
\int_{\R^{d+e}} \|\zeta\|^n\, \nu(x,d\zeta) = \int_{\R^d} (\|\xi\|^2+\|Q(x)\xi\|^2)^{n/2}\,\nu^X(x,d\xi) \le (1+\|Q(x)\|^2)^{n/2} \int_{\R^d} \|\xi\|^n\,\nu^X(x,d\xi) < \infty
\]
for all $x\in E$ and all $n\ge2$, where $\|Q(x)\|$ denotes the operator norm of $Q(x)$. Thus \eqref{DefPolaug1_19} holds.

Next, \eqref{eq:T:aug:bYNew_19} holds since the components of $b(x)$ are polynomials of degree at most $n$. To verify \eqref{eq:T:aug:aYYNew_19}--\eqref{eq:T:aug:nuew_19}, we first observe the identity
\begin{equation} \label{eq:C:aug1}
\int_{\R^{d+e}} f(\xi)g(\eta', \eta'')\nu(x,d\xi\times d\eta'\times d\eta'') = \int_{\R^d} f(\xi) g(0,Q(x)\xi) \nu^X(x,d\xi),
\end{equation}
where we write $\eta=(\eta',\eta'')$ for a generic vector in $\R^e=\R^{e'+e''}$. Let $f(\xi)=x^{\bm\alpha}$ and $g(\eta)=\eta^{\bm\beta}=\eta'^{\bm\beta'}\eta''^{\bm\beta''}$, where we decompose $\bm\beta=(\bm\beta',\bm\beta'')$ according to the decomposition $\eta=(\eta',\eta'')$. Since $g(0,Q(x)s\xi)=s^{|\bm\beta''|}g(0,Q(x)\xi)$ for any $s\in\R$, it follows that
\[
g(0,Q(x)\xi) = \sum_{\bm\gamma\colon\,|\bm\gamma|=|\bm\beta''|} r_{\bm\gamma}(x)\xi^{\bm\gamma}
\]
for some polynomials $r_{\bm\gamma}(x)$. Since the left-hand side vanishes if $\bm\beta'\ne0$, we take $r_{\bm\gamma}(x)=0$ in this case. Moreover, since the components of $Q(x)$ are of degree at most $n-1$, the degree of $g(0,Q(x)\xi)$, regarded as a polynomial in $x$, is at most $(n-1)\beta''_1+\cdots+(n-1)\beta''_{e''} = (n-1)|\bm\beta''|$. This is therefore also an upper bound on the degrees of the polynomials $r_{\bm\gamma}(x)$.

These observations readily yield \eqref{eq:T:aug:nuew_19}. To see this, we write
\begin{equation} \label{eq:C:aug2}
\int_{\R^{d+e}} \xi^{\bm\alpha}\eta^{\bm\beta}\nu(x,d\xi\times d\eta) = \sum_{\bm\gamma\colon\,|\bm\gamma|=|\bm\beta''|} r_{\bm\gamma}(x)\int_{\R^d} \xi^{\bm\alpha+\bm\gamma}\nu^X(x,dx).
\end{equation}
Either $\bm\beta'\ne0$, in which case the right-hand side of~\eqref{eq:C:aug2} vanishes. Or, $\bm\beta'=0$, hence $|\bm\beta''|=|\bm\beta|$, and thus the right-hand side of~\eqref{eq:C:aug2} is a polynomial of degree at most $(n-1)|\bm\beta| + |\bm\alpha+\bm\gamma|=|\bm\alpha|+n|\bm\beta|$, provided $|\bm\alpha+\bm\gamma|=|\bm\alpha|+|\bm\beta|\ge3$. Consequently, \eqref{eq:T:aug:nuew_19} holds.

Finally, \eqref{eq:T:aug:aYYNew_19} and \eqref{eq:T:aug:aXYNew_19} follow in a similar manner. For $i\in \{1,\ldots,d\}$ and $j\in\{e'+1,\ldots,e'+e''\}$, we apply~\eqref{corPQdX_eqaZ} and~\eqref{eq:C:aug1} with $f(\xi)=\xi_i$ and $g(\eta)=\eta_{j}$ to get
\[
a^{XY}_{ij}(x) + \int_{\R^{d+e}} \xi_i\eta_j \nu(x,d\xi\times d\eta) = \sum_{k=1}^d q_{j-e',k}(x)\left( a_{ik}(x) + \int_{\R^d} \xi_i\xi_k \,\nu^X(x,d\xi) \right),
\]
which is a polynomial of degree at most $n-1+2=1+n$. If instead $j\in\{1,\ldots,e'\}$, then the left-hand side vanishes. It follows that \eqref{eq:T:aug:aXYNew_19} holds. Property~\eqref{eq:T:aug:aYYNew_19} is proved similarly. This completes the proof of the corollary.

\subsection{Proof of Lemma \ref{lemPBHL2_19}}

Similarly as in the proof of Lemma~\ref{lemPBHL}, the pullback maps $\varphi^*$ and $\psi^*$ are well-defined as operators
\[
\Pol(H(E)\times\R^e)\to \Pol(E\times\R^e) \qquad\text{and}\qquad \Pol(E\times\R^e)\to\Pol(H(E)\times\R^e).
\]
Viewed with these domain and range spaces, $\varphi^*$ and $\psi^*$ are each other's inverses. It is clear that $\varphi^*$ is a linear map. We show that it maps $\Pol_m(H(E)\times\R^e)$ to $V_m$, and for this it suffices to consider monomials $f(\overline x,y)=\overline x^{\bm\alpha}y^{\bm\beta}$, where $|{\bm\alpha}|+|{\bm\beta}|\le m$. One then has
\[
\varphi^*f(x,y)= p(x)y^{\bm\beta}, \quad\text{where}\quad  p(x)= \prod_{i=1}^N h_i(x)^{\alpha_i}.
\]
Since $\deg h_i\le n$ for all $i$, and since $|{\bm\alpha}|\le m-|{\bm\beta}|$, it follows that $\deg p \le n(m - |{\bm\beta}|)$, showing that $\varphi^*f \in V_m$ as claimed.

Injectivity of $\varphi^*$ follows from the identity $\psi^*\circ\varphi^*=\id$ on $\Pol_m(H(E)\times\R^e)$. To see that $\varphi^*$ maps $\Pol_m(H(E)\times\R^e)$ surjectively to $V_m$, let $p(x)y^{\bm\beta}$ with $\deg p\le n(m-|\bm\beta|)$ and $|{\bm\beta}|\le m$ be an element of $V_m$. Lemma~\ref{L:degkn} implies that $p(x) = f(H(x))$ for some polynomial $f\in\Pol_{m-|{\bm\beta}|}(H(E))$. Thus $p(x)y^{\bm\beta}=\varphi^* g(x,y)$, where $g(\overline x,y)=f(\overline x)y^{\bm\beta}$ is of degree at most $m-|{\bm\beta}|+|{\bm\beta}|=m$ and thus lies in $\Pol_m(H(E)\times\R^e)$. Since any element of $V_m$ is a linear combination of polynomials $p(x)y^{\bm\beta}$ as above, this proves surjectivity.

\subsection{Proof of Theorem \ref{T:augNEW2_19}}

We now prove Theorem~\ref{T:augNEW2_19}. Due to Lemma~\ref{L:GZ well def_19}, $\overline Z_t$ is a jump-diffusion on $H(E)\times\R^e$ with extended generator $\overline\Gcal = \psi^\ast\Gcal \varphi^\ast$, which is well-defined on $\Pol(H(E)\times\R^e)$. Thus the property \eqref{T_augNEW1_19_PJD} is, by definition, equivalent to
\begin{equation}\label{eq_augNEW2_19_proof_1}
\text{$\overline\Gcal$ maps $\Pol_m(H(E)\times\R^e)$ to itself for each $m\in\N$.}
\end{equation}
The equivalence ${\eqref{eq_augNEW2_19_proof_1}}\Leftrightarrow{\ref{T:augNEW12_19}}$ follows from Lemma~\ref{lemPBHL2_19} and the expression $\overline\Gcal=\psi^*\Gcal\varphi^*$, which show that in the diagram \eqref{diagXX2_19} each horizontal arrow holds if and only the other one does. In particular, the diagram commutes if either condition holds. It remains to prove ${\ref{T:augNEW12_19}}\Leftrightarrow{\ref{T:augNEW13_19}}$. By definition of the carr\'e-du-champ operator, we have the identity
\[
\Gcal(x^{\bm\alpha}y^{\bm\beta}) = y^{\bm\beta}\Gcal(x^{\bm\alpha}) + x^{\bm\alpha}\Gcal(y^{\bm\beta}) + \Gamma(x^{\bm\alpha},y^{\bm\beta}).
\]
Moreover, if $|\bm\alpha|\le n(m-|\bm\beta|)$ and $|\bm\beta|\le m$ then $y^{\bm\beta}\Gcal(x^{\bm\alpha})=y^{\bm\beta}\Gcal^X(x^{\bm\alpha})\in V_m$ since $\Gcal^X$ is polynomial. Hence {\ref{T:augNEW12_19}} is equivalent to
\begin{equation} \label{T:augNEW1_pf_1}
\text{$x^{\bm\alpha}\Gcal(y^{\bm\beta}) + \Gamma(x^{\bm\alpha},y^{\bm\beta})\in V_m$ whenever $|\bm\alpha|\le n(m-|\bm\beta|)$ and $|\bm\beta|\le m$.}
\end{equation}
It suffices to argue that \eqref{T:augNEW1_pf_1} is equivalent to {\ref{T:augNEW13_19}}. To this end, first observe that
\begin{equation} \label{T:augNEW1_pf_2}
\text{$x^{\bm\alpha}\Gcal(y^{\bm\beta})\in V_m$ whenever $|\bm\alpha|\le n(m-|\bm\beta|)$, $|\bm\beta|\le m$, and $\Gcal(y^{\bm\beta})\in V_{|\bm\beta|}$.}
\end{equation}
Indeed, $\Gcal(y^{\bm\beta})\in V_{|\bm\beta|}$ is a linear combination of monomials of the form $x^{\bm\gamma}y^{\bm\delta}$ with $|\bm\gamma|\le  n(|\bm\beta|-|\bm\delta|)$ and $|\bm\delta|\le |\bm\beta|$. Thus $x^{\bm\alpha}\Gcal(y^{\bm\beta})$ is a linear combination of monomials of the form $x^{\bm\alpha+\bm\gamma}y^{\bm\delta}$ with $|\bm\alpha+\bm\gamma|=|\bm\alpha|+|\bm\gamma|\le n(m-|\bm\beta|)+ n(|\bm\beta|-|\bm\delta|)=n(m-|\bm\delta|)$, and therefore lies in $V_m$. This proves~\eqref{T:augNEW1_pf_2}.

Assume {\ref{T:augNEW13_19}} holds. By taking $\bm\alpha=0$, we see that $\Gcal(y^{\bm\beta})\in V_{|\bm\beta|}$ for all $\bm\beta$, which in view of \eqref{T:augNEW1_pf_2} and {\ref{T:augNEW13_19}} implies that \eqref{T:augNEW1_pf_1} holds. Conversely, assume \eqref{T:augNEW1_pf_1} holds. Since $\Gamma(1,y^{\bm\beta})=0$, it follows that $\Gcal(y^{\bm\beta})\in V_{|\bm\beta|}$ for all $\bm\beta$ and hence, in view of \eqref{T:augNEW1_pf_2}, that $x^{\bm\alpha}\Gcal(y^{\bm\beta})\in V_m$ whenever $|\bm\alpha|\le n(m-|\bm\beta|)$ and $|\bm\beta|\le m$. Another application of \eqref{T:augNEW1_pf_1} then yields $\Gamma(x^{\bm\alpha},y^{\bm\beta})\in V_m$, and we deduce that {\ref{T:augNEW13_19}} holds.

\subsection{Proof of Corollary~\ref{cordimred_19}}
Lemma~\ref{lemlocmart} implies that $Z_t'=(X_t, PY_t)$ is a special semimartingale. We infer that $Z_t'=(X_t, PY_t)$ is an $E\times\R^{e'}$-valued jump diffusion with extended generator $\Gcal'$ of the form~\eqref{eq:GZ_19}. Indeed, this follows from a straightforward modification of Lemma~\ref{L:JD transformNEW_2} applied to the linear map $\varphi(x,y)=(x,Py)$, observing that $P$ does not need to be invertible to back out the state $x$ from $(x,Py)$. It follows by inspection that $Z_t'=(X_t, PY_t)$ satisfies \eqref{DefPolaug1_19} and, by Lemma~\ref{L:JD transformNEW_2} in conjunction with Lemma~\ref{LemPPchar} applied to $(H(X_t),PY_t)$, property {\eqref{T_augNEW1_19_PJD}} with $e$ replaced by $e'$, as claimed.

\subsection{Proof of Theorem~\ref{T:sub}}

The proof of Theorem~\ref{T:sub} builds on the following two lemmas.

\begin{lem} \label{L:Gf loc_bdd}
Let $\Gcal$ be polynomial on $E$. Then $\Gcal f(x)$ is locally bounded on $E$ for every $f\in C^k(\R^d)$ satisfying the growth condition $|f(x)|\le c(1+\|x\|^k)$ on $\R^d$ for some real $c$ and integer $k\ge 2$.
\end{lem}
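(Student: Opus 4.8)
The plan is to bound the three terms in the generator~\eqref{eq:G} separately, using the local boundedness of $a(x)$, $b(x)$, and the jump moments $\int_{\R^d}\xi^{\bm\alpha}\nu(x,d\xi)$, $|\bm\alpha|\ge2$, supplied by Lemma~\ref{LemPPchar}, together with the regularity and growth of $f$. The diffusion and drift terms are handled immediately: since $f\in C^k(\R^d)$ with $k\ge2$, the maps $\nabla f$ and $\nabla^2 f$ are continuous and hence locally bounded on $E$, and multiplying by the locally bounded $a(x)$ and $b(x)$ shows that $\tfrac12\tr(a(x)\nabla^2 f(x))$ and $b(x)^\top\nabla f(x)$ are locally bounded on $E$.

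The substance of the argument lies in the integral term, which I would treat by splitting the domain into the small-jump region $\{\|\xi\|\le1\}$ and the large-jump region $\{\|\xi\|>1\}$. On the small-jump region I would use the second-order Taylor estimate $|f(x+\xi)-f(x)-\xi^\top\nabla f(x)|\le\tfrac12 M(x)\|\xi\|^2$, where $M(x)=\sup_{\|y-x\|\le1}\|\nabla^2 f(y)\|$ is locally bounded by continuity of $\nabla^2 f$; since $\int_{\R^d}\|\xi\|^2\nu(x,d\xi)=\sum_i\int_{\R^d}\xi_i^2\nu(x,d\xi)$ is locally bounded on $E$ by Lemma~\ref{LemPPchar}, this contribution is locally bounded. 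On the large-jump region the Taylor bound is unavailable, and here I would invoke the growth hypothesis, bounding $|f(x+\xi)|\le c(1+2^{k-1}\|x\|^k+2^{k-1}\|\xi\|^k)$, $|f(x)|\le c(1+\|x\|^k)$, and $|\xi^\top\nabla f(x)|\le\|\xi\|\,\|\nabla f(x)\|$ term by term. Integrating against $\nu(x,d\xi)$ over $\{\|\xi\|>1\}$, each piece is dominated by a locally bounded coefficient in $x$ times one of $\nu(x,\{\|\xi\|>1\})$, $\int_{\{\|\xi\|>1\}}\|\xi\|\,\nu(x,d\xi)$, or $\int_{\{\|\xi\|>1\}}\|\xi\|^{k}\nu(x,d\xi)$, each of which is in turn dominated either by $\int_{\R^d}\|\xi\|^2\nu(x,d\xi)$ (using $1\le\|\xi\|\le\|\xi\|^2$ on this region) or by $\int_{\R^d}\|\xi\|^{2m}\nu(x,d\xi)$ for a fixed even integer $2m\ge k$.

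The only point requiring care, and thus the main obstacle, is this last domination in the large-jump region: because $f$ grows only polynomially of degree $k$ and $k$ may be odd, one cannot directly read off the local boundedness of $\int\|\xi\|^k\nu$ from Lemma~\ref{LemPPchar}, which concerns polynomial moments. The fix is to choose $2m=2\lceil k/2\rceil$, note that $\|\xi\|^k\le\|\xi\|^{2m}$ on $\{\|\xi\|>1\}$, and observe that $\|\xi\|^{2m}$ is a genuine polynomial in $\xi$ of even degree $2m\ge2$, so that $\int_{\R^d}\|\xi\|^{2m}\nu(x,d\xi)$ is a finite linear combination of the locally bounded moments $\int_{\R^d}\xi^{\bm\alpha}\nu(x,d\xi)$ with $|\bm\alpha|=2m$. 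Combining the bounds from the two jump regions with those for the diffusion and drift terms yields that $\Gcal f(x)$ is locally bounded on $E$.
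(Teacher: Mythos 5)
Your proof is correct, but it takes a genuinely different route from the paper's. The paper expands $f(x+\xi)$ in a Taylor polynomial of degree $k-1$ around $x$, so that $\Gcal f$ splits into the diffusion and drift terms, explicit moment terms $\frac{\partial^{\bm\alpha}f(x)}{\bm\alpha!}\int_{\R^d}\xi^{\bm\alpha}\nu(x,d\xi)$ for $2\le|\bm\alpha|\le k-1$, and a remainder $g(x,\xi)$; the remainder is then shown to satisfy a single bound $|g(x,\xi)|\le\|\xi\|^k M(x)$ with $M$ continuous, using the growth hypothesis for $\|\xi\|>1$ and Taylor's theorem with integral remainder at order $k$ for $\|\xi\|\le1$. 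You instead leave the integrand $f(x+\xi)-f(x)-\xi^\top\nabla f(x)$ intact and split the domain of integration: a second-order (Lagrange) Taylor bound on $\{\|\xi\|\le1\}$, paired with the locally bounded second moment, and crude triangle-inequality growth bounds on $\{\|\xi\|>1\}$, paired with the moments of order $2$ and $2\lceil k/2\rceil$. Your route is more elementary---it never uses derivatives of $f$ beyond order two inside the jump term, nor the multi-index integral-remainder form of Taylor's theorem---at the cost of slightly cruder bookkeeping on the large-jump region, where the paper's single remainder bound is tidier. A point in your favor: you explicitly address the fact that $\|\xi\|^k$ is not a polynomial in $\xi$ when $k$ is odd, dominating it by the even power $\|\xi\|^{2\lceil k/2\rceil}$ on $\{\|\xi\|>1\}$ before invoking Lemma~\ref{LemPPchar}; the paper's concluding step, which asserts that $\int_{\R^d}\|\xi\|^k\nu(x,d\xi)$ is locally bounded by Lemma~\ref{LemPPchar}, silently requires the same device.
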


\begin{proof}
Write
\begin{equation} \label{eq:locbdd_1}
\begin{aligned}
\Gcal f(x) &= \frac12 \tr( a(x)\nabla^2 f(x)) + b(x)^\top \nabla f(x) + \sum_{2\le|\bm\alpha|\le k-1} \frac{\partial^{\bm\alpha}f(x)}{\bm\alpha!}  \int_{\R^d} \xi^{\bm\alpha} \nu(x,d\xi) \\
& \quad + \int_{\R^d} g(x,\xi) \nu(x,d\xi),
\end{aligned}
\end{equation}
where
\begin{equation} \label{eq:locbdd_2}
g(x,\xi) = f(x+\xi) - \sum_{|\bm\alpha|\le k-1} \frac{\partial^{\bm\alpha}f(x)}{\bm\alpha!} \xi^{\bm\alpha}.
\end{equation}
By Lemma~\ref{LemPPchar} and the $C^k$ smoothness of $f(x)$, the first three terms on the right-hand side of~\eqref{eq:locbdd_1} are locally bounded in $x$ on $E$.

We now bound the remaining term in~\eqref{eq:locbdd_1}. For $\|\xi\|>1$, the assumed polynomial bound on $f(x)$ along with the crude inequality $1+\|x+\xi\|^k \le \|\xi\|^k 2^k(1+\|x\|^k)$ and~\eqref{eq:locbdd_2} yield
\[
|g(x,\xi)| \le \|\xi\|^k \left( c 2^k(1+\|x\|^k) + \sum_{|\bm\alpha|\le k-1} \frac{|\partial^{\bm\alpha} f(x)|}{\bm\alpha!} \right), \qquad \|\xi\|>1.
\]
Next, Taylor's theorem with the remainder in integral form yields
\[ g(x,\xi) = \sum_{|\bm\alpha|=k} \frac{k}{\bm\alpha!} \xi^{\bm\alpha} \int_0^1 (1-t)^{k-1} \partial^{\bm\alpha}f(x+t\xi) dt \]
and hence
\[|g(x,\xi)| \le \|\xi\|^k \sum_{|\bm\alpha|=k} \frac{1}{\bm\alpha!} \max_{\|\xi\|\le1}|\partial^{\bm\alpha}f(x+\xi)|, \qquad \|\xi\|\le 1.
\]
Combining these two bounds gives $|g(x,\xi)| \le \|\xi\|^k M(x)$, where $M(x)$ is a continuous function due to the $C^k$ smoothness of $f(x)$. Consequently,
\[
\int_{\R^d}|g(x,\xi)|\nu(x,d\xi) \le M(x) \int_{\R^d}\|\xi\|^k \nu(x,d\xi),
\]
which is locally bounded in $x$ on $E$ by Lemma~\ref{LemPPchar}.
\end{proof}

\begin{lem} \label{L:Gf Gfn}
Let $\Gcal$ be polynomial on $E$. Then for every $f\in C_b^\infty(\R^d)$ there exists a sequence of functions $f_n\in C^\infty_c(\R^d)$ such that $f_n\to f$ and $\Gcal f_n\to\Gcal f$ locally uniformly on $E$.
\end{lem}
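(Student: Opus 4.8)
The plan is to truncate $f$ by a smooth cutoff. First I would fix $\chi\in C^\infty_c(\R^d)$ with $0\le\chi\le1$, $\chi\equiv1$ on $\{\|x\|\le1\}$ and $\chi\equiv0$ on $\{\|x\|\ge2\}$, set $\chi_n(x)=\chi(x/n)$, and define $f_n=\chi_nf$. Since $f\in C_b^\infty(\R^d)$ and $\chi_n$ has compact support, each $f_n$ lies in $C^\infty_c(\R^d)$. Because $\chi_n\equiv1$ on $\{\|x\|\le n\}$, on any compact $K\subset E$ with $R:=\sup_{x\in K}\|x\|$ we have $f_n=f$ together with all of its derivatives on a neighborhood of $K$ as soon as $n>R$. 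This immediately gives $f_n\to f$ locally uniformly.

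The only nontrivial point is $\Gcal f_n\to\Gcal f$ locally uniformly on $E$, and the issue is entirely the jump term: for $n>R$ the diffusion and drift parts of $\Gcal f_n$ and $\Gcal f$ coincide on $K$, since $f_n$ and $f$ agree there together with their derivatives. For $x\in K$ and $n>R$, using $f_n(x)=f(x)$ and $\nabla f_n(x)=\nabla f(x)$, the difference of the integral terms collapses to
\[
\Gcal f_n(x)-\Gcal f(x)=\int_{\R^d}\big(\chi_n(x+\xi)-1\big)f(x+\xi)\,\nu(x,d\xi).
\]
The integrand vanishes whenever $\|x+\xi\|\le n$, hence for $x\in K$ it is supported on $\{\|\xi\|>n-R\}$; bounding $|f|\le\|f\|_\infty$ and $|\chi_n-1|\le1$ then gives
\[
\big|\Gcal f_n(x)-\Gcal f(x)\big|\le\|f\|_\infty\,\nu\big(x,\{\|\xi\|>n-R\}\big).
\]

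It remains to control this tail uniformly over $K$, and this is the only real obstacle. By Lemma~\ref{LemPPchar} the function $x\mapsto\int_{\R^d}\|\xi\|^2\,\nu(x,d\xi)$ is locally bounded on $E$, so there is a finite $C_K$ with $\int_{\R^d}\|\xi\|^2\,\nu(x,d\xi)\le C_K$ for all $x\in K$. A Chebyshev-type estimate then yields $\nu\big(x,\{\|\xi\|>n-R\}\big)\le C_K/(n-R)^2$ for $x\in K$ and $n>R$, which tends to $0$ uniformly in $x\in K$ as $n\to\infty$. Combining the three displays shows $\sup_{x\in K}|\Gcal f_n(x)-\Gcal f(x)|\to0$, which is the claimed local uniform convergence. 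Everything apart from this moment tail bound is the routine observation that the cutoff leaves $f$ and all its derivatives unchanged on compacta.
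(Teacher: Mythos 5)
Your proof is correct and takes essentially the same route as the paper's: a smooth cutoff equal to one on the ball of radius $n$, the observation that for $x$ in a fixed compact set the difference $\Gcal f_n(x)-\Gcal f(x)$ reduces to the jump integral supported on $\{\|\xi\|>n-R\}$, and a Chebyshev-type bound via the locally bounded second moment $\int_{\R^d}\|\xi\|^2\,\nu(x,d\xi)$ from Lemma~\ref{LemPPchar}. The only cosmetic difference is that the paper works on the sets $E\cap B_m$ and bounds the integrand by $2\|f\|_\infty$, while you work on a general compact $K$ and exploit the explicit form $(\chi_n-1)f$ to get the constant $\|f\|_\infty$.
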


\begin{proof}
Define $f_n(x)$ to be $f(x)$ multiplied by a smooth cutoff function that is equal to one on $B_n=\{x\in\R^d\colon \|x\|\le n\}$. Then $f_n\to f$ locally uniformly. For $n>m$ and $x\in B_m$ we have
\begin{align*}
|\Gcal f_n(x) - \Gcal f(x)| &\le  \int_{\R^d} |f_n(x+\xi) - f(x+\xi)| \nu(x,d\xi) \\
&=  \int_{\R^d} |f_n(x+\xi) - f(x+\xi)| \bm1_{\{\|\xi\|>n-m\}}\nu(x,d\xi) \\
&\le \frac{2\|f\|_\infty}{(n-m)^2} \int_{\R^d} \|\xi\|^2 \nu(x,d\xi).
\end{align*}
By Lemma~\ref{LemPPchar} the right-hand side is locally bounded in $x$ on $E$. Hence $\Gcal f_n \to \Gcal f$ uniformly on $E\cap B_m$ for all $m$.
\end{proof}

We now prove Theorem~\ref{T:sub}. We first prove the statement in Remark~\ref{remJM}. Due to the Feller property, for any $f\in C_0(\R^d)$, we know that $\int_E f(y) p_t(x,dy)$ is jointly continuous in $(t,x) \in [0,\infty)\times E$. A monotone class argument now yields the claim for $p_t(x,A)$. For $p_t(x,x+A)$ we observe that
\[ p_t(x,x+A) = \int_E 1_A(y-x) p_t(x,dy). \]
Because $1_A(y-x)$ is jointly measurable in $(x,y)$ the claim follows also for $p_t(x,x+A)$. Hence $\widetilde\Gcal f(x)$ is well defined by \eqref{eq:sub1_Gtilde}--\eqref{sub1:nutilde} for any bounded $C^2$ function $f(x)$ on $\R^d$.

We next claim that
\begin{equation} \label{eqXX1}
\widetilde\Gcal f(x) = b^Z \Gcal f(x) + \int_0^\infty \int_E\left( f(y) - f(x) \right)p_\zeta(x,dy)  \nu^Z(d\zeta)
\end{equation}
for all $f\in \Pol(E)$. Indeed, for any polynomial $f(x)=(1,H(x)^\top) \vec f$ in $\Pol_n(E)$, Theorem~\ref{T:moments} yields
\begin{equation}\label{eqXXXs}
  \int_E\left( f(y) - f(x) \right)p_\zeta(x,dy) =  (1,H(x)^\top) (\e^{\zeta G} - \id) \vec f .
\end{equation}
Hence the right hand side of \eqref{eqXX1} is well defined due to \eqref{eq:nuZ exp moments} for all $x\in E$. Moreover, by
\eqref{eq:sub1_Gtilde}--\eqref{sub1:nutilde} we infer that
\begin{equation}\label{eqXXXu}
 \begin{aligned}
\widetilde\Gcal f(x) &= b^Z \Gcal f(x) + \int_0^\infty \int_E (y - x)^\top \nabla f(x) p_\zeta(x,dy)  \nu^Z(d\zeta) \\
&\quad + \int_0^\infty \int_E \left(f(y)-f(x)-(y-x)^\top \nabla f(x) \right)1_{\{ y\neq x\}} p_\zeta(x,dy) \nu^Z(d\zeta),
\end{aligned}
\end{equation}
which proves \eqref{eqXX1}.

We next claim that the jump-diffusion operator
\begin{equation}\label{claimXX}
  \text{$\widetilde \Gcal$ is polynomial on $E$.}
\end{equation}
First, we have
\[ \int_{\R^d} \|\xi\|^{n}\,\widetilde\nu(x,d\xi) = \int_{\R^d} \|\xi\|^{n}\,b^Z\nu(x,d\xi) + \int_0^\infty \int_{\R^d} \|y-x\|^{n}\, p_\zeta(x,dy)  \nu^Z(d\zeta) <\infty \]
for all $x\in E$ and all $n\ge 2$. Indeed, the first term on the right hand side is finite due to \eqref{DefPol1}. The second term is also finite. This follows from \eqref{eqXXXs} for $f(y)=\|y-x\|^{n}$ and \eqref{eq:nuZ exp moments} for even $n\ge 2$. Second, $\widetilde\Gcal f(x)=0$ on $E$ for any $f\in \Pol(\R^d)$ with $f(x)=0$ on $E$. This follows from \eqref{DefPol2} and \eqref{eqXX1}. Hence $\widetilde\Gcal$ is well-defined on $\Pol(E)$. Finally, the polynomial property of $\Gcal$, \eqref{eqXX1}, and \eqref{eqXXXs} again imply that $\widetilde\Gcal$ maps $\Pol_n(E)$ to itself for each $n\in\N$, which proves \eqref{claimXX}.

Now let $G$ and $\widetilde G$ be the matrix representations of $\Gcal$ and $\widetilde\Gcal$ on $\Pol_n(E)$. The first equation in \eqref{eqGGtilde} then follows from \eqref{eqXX1} and \eqref{eqXXXs}. The right hand side of the second equation equals $M(t)=\E[\e^{Z_t G}]$, which is well-defined by Remark~\ref{remLK}. Due to the L\'evy property of $Z_t$, we have that $M(t+s)=M(t)M(s)$. Hence $M(t)=\exp(t   \dot M(0))$ where $\dot M(0) = \Gcal^Z \e^{zG}|_{z=0} = \widetilde G$. This proves~\eqref{eqGGtilde}.

It remains to verify that $\widetilde X_t$ is a jump-diffusion with respect to $\widetilde\Fcal_t$, and that its extended generator is $\widetilde\Gcal$. By \citet[Theorem~II.2.42]{Jacod/Shiryaev:2003} it suffices to prove that the process
\[
M^f_t = f(\widetilde X_t) - f(\widetilde X_0) - \int_0^t \widetilde\Gcal f(\widetilde X_s)ds
\]
is well-defined and a local martingale for every $f\in C^\infty_b(\R^d)$. We do this in three steps.

First, Phillips' theorem~\cite[Theorem 32.1]{sat_99} shows that $\widetilde p_t(x,dy)$ given in \eqref{deftildept} is a Feller transition kernel and the domain of its generator contains $C^\infty_c(E)$ on which it coincides with the operator
\begin{equation} \label{T:sub:1}
\overline\Gcal f(x) = b^Z \Gcal f(x) + \int_0^\infty \int_E\left( f(y) - f(x) \right)p_\zeta(x,dy)  \nu^Z(d\zeta).
\end{equation}
Here, the integral with respect to $\nu^Z(d\zeta)$ is understood as the Bochner integral of the $C_0(E)$-valued map $\zeta\mapsto u_\zeta$, where $u_\zeta(x)=\int_E\left( f(y) - f(x) \right)p_\zeta(x,dy)$; see \citet[comment after Theorem~32.1]{sat_99}. In particular, when evaluated at a point $x\in E$, this integral coincides with the Lebesgue integral with respect to $\nu^Z(d\zeta)$ of the $\R$-valued function $\zeta\mapsto\int_E\left( f(y) - f(x) \right)p_\zeta(x,dy)$, which is thus well defined and finite. In view of Remark~\ref{rembtilde} therefore \eqref{eqXXXu}, and thus \eqref{eqXX1}, also hold for all $f\in C^\infty_c(E)$. We conclude that
\begin{equation} \label{Gbar=Gtilde}
\text{$\widetilde\Gcal f(x) = \overline\Gcal f(x)$ on $E$ for all $f\in C^\infty_c(E)$.}
\end{equation}

Second, an argument based on \citet[Proposition~III.1.4]{rev_yor_99} shows that $\widetilde X_t$ is a Markov process with respect to its natural filtration $\widetilde\Fcal^0_t=\sigma(\widetilde X_s, s\le t)$ with transition kernel $\widetilde p_t(x,dy)$. Because of the Feller property this also holds with respect to the usual right-continuous augmentation $\widetilde\Fcal_t$ of $\widetilde\Fcal^0_t$, see \citet[Proposition~III.2.10]{rev_yor_99}. Therefore, by \citet[Proposition~VII.1.6]{rev_yor_99} and \eqref{Gbar=Gtilde}, it follows that $M^f_t$ is a martingale for every $f\in C^\infty_c(\R^d)$.

Third, let $f\in C^\infty_b(\R^d)$. In view of \eqref{claimXX} and Lemma~\ref{L:Gf loc_bdd}, $\widetilde\Gcal f(x)$ is locally bounded on $E$, whence the process $M^f_t$ is well-defined. Furthermore, Lemma~\ref{L:Gf Gfn} yields a sequence of functions $f_n\in C^\infty_c(\R^d)$ such that $f_n\to f$ and $\widetilde\Gcal f_n\to\widetilde\Gcal f$ locally uniformly on $E$. Therefore, defining the stopping times $T_m=\inf\{t\ge0\colon \|\widetilde X_t\|\ge m\}$, we have
\[
| M^f_{t\wedge T_m} - M^{f_n}_{t\wedge T_m} | \le  |f(\widetilde X_{t\wedge T_m}) - f_n(\widetilde X_{t\wedge T_m}) - f(\widetilde X_0) + f_n(\widetilde X_0)| + t \max_{x\in E,\,\|x\|\le m} |\widetilde \Gcal f(x) - \widetilde \Gcal f_n(x)|.
\]
The right-hand side is bounded and converges to zero as $n\to\infty$, which yields $M^{f_n}_{t\wedge T_m} \to M^f_{t\wedge T_m}$ in~$L^1$. Since each $M^{f_n}_t$ is a martingale, it follows that $M^f_t$ is a local martingale, as required. This finishes the proof of Theorem~\ref{T:sub}.

\end{appendix}

\bibliographystyle{plainnat}
\bibliography{bibl}

\end{document}